\newtheorem{theorem}{Theorem}[section]
\newtheorem{lemma}[theorem]{Lemma}
\newtheorem{prop}{\textbf{\textcolor{black}{Proposition}}}[section]
\theoremstyle{remark}
\newtheorem{remark}{\textbf{Remark}}[section]
\newcommand{\btxt}[1]{\textcolor{black}{#1}}
\title{\textbf{Robust Inference for Non-Linear Regression Models with Applications in Enzyme Kinetics}}
\author[1]{\Large{Suryasis Jana}}
\author[1]{\Large{Abhik Ghosh\footnote{Corresponding author: abhik.ghosh@isical.ac.in}}}
\affil[1]{Indian Statistical Institute, Kolkata, India}
\date{}
\begin{document}
\maketitle

%\large
\begin{abstract}
Despite linear regression being the most popular statistical modelling technique, in real-life we often need to deal with situations where the true relationship between the response and the covariates is nonlinear in parameters. In such cases, one needs to adopt appropriate non-linear regression (NLR) analysis, having wider applications in biochemical and medical studies among many others. In this paper, we propose a new improved robust estimation and testing methodologies for general NLR models based on the minimum density power divergence approach and apply our proposal to analyze the widely popular Michaelis-Menten (MM) model in enzyme kinetics. We establish the asymptotic properties of our proposed estimator and tests, along with their theoretical robustness characteristics through influence function analysis. For the particular MM model, we have further empirically justified the robustness and the efficiency of our proposed estimator and the testing procedure through extensive simulation studies and several interesting real data examples of enzyme-catalyzed (biochemical) reactions.\\

\noindent\textbf{Keywords:} Robustness, density power divergence, robust Wald-type tests, Michaelis-Menten equation
\end{abstract} 

\maketitle

\section{Introduction}\label{intro}
Linear regression is arguably the most used statistical modeling technique to investigate relationships between variables. However, even for normally distributed continuous response variables, the linearity assumption appears to be restrictive in several real-life situations where the true relationship between the response and the covariates may be best modeled by a suitable non-linear function in parameters. Such scenarios frequently arise in biomedical and biochemical researches, besides statistical signal analyses and many other domains. For example, a popular such non-linear relationship in enzyme kinetics is known as the Michaelis-Menten (MM) equation. This MM equation, originally developed by Leonor Michaelis and Maud Menten in 1913 \cite{michaelis1913kinetik}, has now become the foundational basis of enzymology for studying the rates of enzyme-catalyzed chemical reactions; see, e.g., \cite{ataide2009optimal, baronas2013optimization, khanin2007statistical, oliveira2008chemometric, srinivasan2022guide}. For such a reaction, if the reaction velocity is $v$ at a substrate concentration $s$, then the MM equation says
\begin{equation}\label{mmeqn}
    v = \frac{V_{max}s}{K_m + s},
\end{equation}
where $K_m$ is the pseudo-equilibrium constant of the reaction, called the Michaelis constant, and $V_{max}$ is the maximum velocity attainable (theoretically) at an infinite substrate concentration saturated with the enzyme. \btxt{Also, from Eq.~\eqref{mmeqn}, we observe that $v=V_{max}/2$ if and only if $K_m = s$, and hence the Michaelis constant $K_m$ can further be interpreted as the substrate concentration needed to attain half of the maximum attainable velocity $V_{max}$.} Here $K_m$ and $V_{max}$ are the two reaction-specific unknown (model) parameters that are estimated from experimental data, consisting of the observed reaction velocities ($v_i$) at different (often pre-fixed) substrate concentrations ($s_i$), for a specific chemical reaction, assuming additive (and independent) Gaussian error in each observation $v_i$ (for $i=1,\hdots,n$, say). This estimation setup can be expressed as a non-linear regression (NLR) model given by 
\begin{equation}\label{mmeqn2}
    v_i = \frac{\beta_1 s_i}{\beta_2 + s_i} + \epsilon_i = \mu(s_i, \bm{\beta}) + \epsilon_i,~~~~~~~i=1, \ldots, n,
\end{equation}
where $\bm{\beta}=(\beta_1,\beta_2)^T=(V_{max}, K_m)^T$. The estimation of the model parameter $\bm{\beta}$ and the subsequent inference is commonly performed using the non-linear least-squares or the maximum likelihood approach; see e.g., \cite{goudar1999parameter, toulias2016fitting, jukic2007total}. These inferences are also extended to variants of the MM model for specific applications; see, e.g., \cite{walters2023modified, douglas2024hetmm}. The MM model \eqref{mmeqn} can also be analyzed using a linearizing transformation, called the Woolf transformation, used by Currie \cite{currie1982estimating}. However, in that case, it is not possible to take additive errors, as in model \eqref{mmeqn2}.

More generally, given $n$ observations $y_i$ and $\bm{x}_i$ on a (real-valued) response variable $Y$ and associated $k$-dimensional covariates $\bm{X}$, respectively, for $i=1,\ldots, n$, a typical NLR model can be expressed as \cite{batesnonlinear, seber2005nonlinear} 
\begin{equation} \label{nlr}
    y_i = \mu(\bm{x}_i, \bm{\beta}) + \epsilon_i,\ i = 1,2,\hdots,n,
\end{equation}
where $\bm{\beta}$ is a $p$-dimensional regression parameter vector and $\epsilon_i$'s are independently and identically distributed (IID) random errors with mean $0$ and variance $\sigma^2$. Here the form of the mean (or regression) function $\mu(\cdot,\cdot)$ is assumed to be known (pre-specified), continuous, and non-linear in the parameter $\bm{\beta}$. The full parameter vector in this NLR model is then given by $\bm{\theta} = (\bm{\beta}^T, \sigma^2)^T \in \bm{\Theta} \subseteq \mathbb{R}^p \times \mathbb{R}^+$. It is to be noted that, since $\mu(\bm{x}_i, \bm{\beta})$ is non-linear in $\bm{\beta}$, dimensions of $\bm{x}_i$ and $\bm{\beta}$ are not necessarily the same (i.e., $p$ may or may not be equal to $k$). There is a vast amount of literature on the classical ordinary least-square (OLS) and the maximum likelihood (ML) methods under NLR models. Asymptotic properties of the least-square estimator under the NLR model were initially studied by Jennrich \cite{jennrich1969asymptotic} and Wu \cite{wu1981asymptotic}. It can also be easily observed that under the normality assumption on the IID errors $\epsilon_i$'s in \eqref{nlr}, both the OLS estimator and the ML estimator (MLE) of the parameter $\bm{\beta}$ coincide. Discussions on the MLE under the NLR model \eqref{nlr} and its asymptotic properties are available in, e.g., Seber and Wild \cite{seber2005nonlinear}. Further refinements and advancements in both theory and computation of these estimators under general NLR models and several of its important special cases are available in, e.g., \cite{puxty2006tutorial, kundu1991asymptotic, kundu1993asymptotic, kundu1996asymptotic}, among many others.

However, the main drawback of the least-squares or the maximum likelihood-based inference is their high sensitivity to the presence of outliers and other contaminations in the observed data, even in a small proportion. Since we frequently encounter outliers and influential observations in real-life datasets, there is a clear need for suitable robust estimation and inference procedures yielding stable and accurate results even under data contamination. But, although there exists a vast literature on robust statistical procedures for linear and generalized linear regression models, there have been comparatively fewer works developing robust procedures for NLR models. In particular, Liu et al.~\cite{Rliu2005robust} and Marasovic et al.~\cite{marasovic2017robust} studied robust M-estimators under the NLR models with Huber's and Tukey's loss/weight functions, respectively. Tabatabai et al.~\cite{tabatabai2014new} proposed a modified M-estimation procedure (to be referred to as the KPS estimator), whereas Liu et al.~\cite{Pliu2021robust} studied the median of means (MOM) estimator under the NLR model. \btxt{Girardi et al.}~\cite{girardi2020robust} used the Tsallis score for robust estimation under a particular example of 3-parameter log-logistic NLR model, for analyzing COVID-19 contagion data of Italy, but they avoided many mathematical details and justifications of the estimation procedure. Further, all these existing works only dealt with the estimation problem, and there is no concrete work on robust hypotheses testing available in the literature of NLR models. In the context of MM models in enzyme kinetics, only the KPS and the M-estimators are explored to generate robust solutions under possible data contaminations; see, e.g., \cite{tabatabai2014new, marasovic2017robust, Rliu2005robust}, among others.

In this paper, our aim is to develop an improved robust estimator and associated robust testing procedure for the general class of NLR models based on the popular minimum density power divergence (DPD) approach of Basu et al.~\cite{basu1998robust}, and apply them for robust analysis of MM models with greater accuracy compared to the existing alternatives. Due to various nice properties, the DPD-based approach has been successfully implemented in several statistical problems, including regressions; see, e.g., \cite{ghosh2013robust, ghosh2015robust, ghosh2016robust, basu2016generalized, basu2018robust, ghosh2019robust, castilla2018new, basak2021optimal}. For the sake of completeness, a brief background about the DPD and the minimum DPD estimator (MDPDE) is provided in the online supplementary material. Here, we extend the definition of the MDPDE for the general NLR models and discuss its asymptotic properties (consistency and asymptotic normality) under simple verifiable conditions. We justify the robustness property of our proposed estimator by examining the boundedness of its influence function. We additionally develop a robust hypothesis testing procedure for general NLR models based on our proposed estimator (i.e., the MDPDE) and examine the influence functions of our test statistics to justify their claimed robustness. \btxt{We then apply our proposed methodologies for robust inference under the MM model \eqref{mmeqn2}, examine their theoretical properties by further simplifying the formulae for asymptotic variances and influence functions, and illustrate their finite-sample performances through extensive simulation studies.} We also develop and study a robust test procedure for testing the validity of the MM model for a given (possibly contaminated) experimental dataset. Further, we discuss some data-driven procedures for finding an optimal choice of the MDPDE tuning parameter in the online supplementary material. Finally, we present some interesting real data examples of different enzyme kinetic reactions.

The rest of the paper is organized as follows: In Section \ref{RobustNLR}, we present our proposed MDPDE along with the derivation of its asymptotic and robustness properties. In Section \ref{hyptesting}, we discuss the proposed robust hypothesis testing procedures utilizing the MDPDEs and study their asymptotic properties and their influence functions. Section \ref{Application} is devoted to the application of our proposed methods to the Michaelis-Menten model \eqref{mmeqn2} along with extensive simulation studies illustrating their improved finite-sample performances in comparison to the existing robust approaches. Section \ref{Optalpha} consists of a discussion on some data-driven procedures for finding an optimal choice of the MDPDE tuning parameter. Section \ref{realdata} presents analyses of some real data examples, and final concluding remarks are given in Section \ref{Conclution}. For brevity in the presentation, some background details, proofs of the results, and some numerical illustrations are provided in the web appendices of the online supplementary material.

\section{Robust parameter estimation under general NLR models}\label{RobustNLR}
\subsection{The MDPDE}\label{MDPDNLRM}
Let us consider the NLR model given in \eqref{nlr} with the error assumption $\epsilon_i \sim \mathcal{N}(0,\sigma^2)$, independently, for all $i=1,2,\hdots,n$. So, we have $n$ independent and non-homogeneous (INH) observations $y_1,y_2,\hdots,y_n$ with 
$y_i \sim \mathcal{N}\left(\mu(\bm{x}_i, \bm{\beta}), \sigma^2\right)$, independently, for $i=1,\hdots,n$.
%\begin{equation*}
%    y_i \sim \mathcal{N}\left(\mu(\bm{x}_i, \bm{\beta}), \sigma^2\right), \mbox{ independently, for } i=1,\hdots,n.
%\end{equation*}
So, we can follow the general theory of the MDPDE under INH setup, as described briefly in Web Appendix A.1, to define and study the MDPDE of the parameters under the present NLR model. Following the notations of Web Appendix A.1, the model density for $i$-th observation is $f_i(\cdot; \bm{\theta}) \equiv \mathcal{N}\left(\mu(\bm{x}_i, \bm{\beta}), \sigma^2\right)$ and we have
\begin{equation} \label{Vinlr}
    V_i(y_i;\bm{\theta}) = \frac{1}{\left(\sigma\sqrt{2\pi}\right)^\alpha\sqrt{1+\alpha}} - \frac{1+\alpha}{\alpha}\frac{1}{\left(\sigma\sqrt{2\pi}\right)^\alpha}e^{-\frac{\alpha}{2\sigma^2}(y_i-\mu(\bm{x}_i, \bm{\beta}))^2},\ \ \alpha>0.
\end{equation}
Then the MDPDE of the parameter $\bm{\theta} = \left(\bm{\beta}^T, \sigma^2\right)^T$ under our NLR model can be defined as the minimizer of the objective function
\begin{equation}
\label{Hnlr}
H_n(\bm{\theta}) = \frac{1}{\left(\sigma\sqrt{2\pi}\right)^\alpha\sqrt{1+\alpha}} - \frac{1+\alpha}{\alpha}\frac{1}{\left(\sigma\sqrt{2\pi}\right)^\alpha}\frac{1}{n}\sum_{i=1}^n e^{-\frac{\alpha}{2\sigma^2}(y_i-\mu(\bm{x}_i, \bm{\beta}))^2},\ \ \alpha>0.
\end{equation}
Now, differentiating \eqref{Hnlr} partially with respect to $\bm{\beta}$ and $\sigma^2$, and equating to $0$, the estimating equations for the MDPDE are obtained as 
\begin{equation}
\label{esteq1}
    \sum_{i=1}^n e^{-\frac{\alpha}{2\sigma^2}(y_i-\mu(\bm{x}_i, \bm{\beta}))^2}(y_i-\mu(\bm{x}_i, \bm{\beta}))\nabla_{\bm{\beta}}\mu(\bm{x}_i, \bm{\beta}) = 0,
\end{equation}
\begin{equation}
\label{esteq2}
    \sum_{i=1}^n e^{-\frac{\alpha}{2\sigma^2}(y_i-\mu(\bm{x}_i, \bm{\beta}))^2} \left[1 - \frac{(y_i-\mu(\bm{x}_i, \bm{\beta}))^2}{\sigma^2}\right] = \frac{n\alpha}{(1+\alpha)^{3/2}},
\end{equation}
where $\nabla_{\bm{\beta}}$ denotes the first order partial derivative with respect to $\bm{\beta}$. To obtain the MDPDE $\bm{\widehat{\theta}}_\alpha = (\bm{\widehat{\beta}}_\alpha^T, \widehat{\sigma}_\alpha^2)^T$ with tuning parameter $\alpha>0$, in practice, one can numerically solve the above two estimating equations simultaneously for $\bm{\theta} = (\bm{\beta}^T, \sigma^2)^T$, which is equivalent to minimize \eqref{Hnlr}. Clearly, the MDPDE coincides with the MLE at $\alpha=0$ in a limiting sense, and provides a robust generalization of the MLE at $\alpha>0$.

Note that, an important advantage of our MDPDE over its existing robust competitors (briefly described in Web Appendix A.2) is that the MDPDEs additionally yield a robust estimate of the scale parameter $\sigma$ simultaneously with the estimation of the regression parameter $\bm{\beta}$. \btxt{Further, unlike other estimating equation based estimators, the MDPDE is associated with a nice objective function, which helps us to avoid the problem of multiple roots of the estimating equations. If we get multiple solutions to the MDPDE estimating equations \eqref{esteq1}-\eqref{esteq2}, we should choose the one yielding the minimum value of the associated objective function given in \eqref{Hnlr}. The same can also be used while choosing between multiple local minima of the said objective function.}

\subsection{Asymptotic properties}\label{APNLR}
For simplicity, let us assume that the true data-generating density $g_i$ of $y_i$ belongs to the corresponding parametric model family $\mathcal{F}_i = \{\mathcal{N}(\mu(\bm{x}_i, \bm{\beta}), \sigma^2): \bm{\theta} = (\bm{\beta}^T,\sigma^2)^T \in \bm{\Theta}\}$, $i=1,2,\hdots,n,$ i.e., $g_i \equiv \mathcal{N}\left(\mu(\bm{x}_i, \bm{\beta}_0), \sigma_0^2\right), \mbox{ for some } \bm{\theta}_0 = \left(\bm{\beta}_0^T, \sigma_0^2 \right)^T\in\bm{\Theta}$. Our NLR model \eqref{nlr} can be expressed in vectorized form as
\begin{equation}
\label{vnlr2}
    \bm{y} = \bm{\mu}(\bm{\beta}) + \bm{\epsilon},
\end{equation}
where $\bm{y}=(y_1,\hdots,y_n)^T, \bm{\epsilon} = (\epsilon_1,\hdots,\epsilon_n)^T$ and $\bm{\mu}(\bm{\beta}) = \left(\mu(\bm{x}_1, \bm{\beta}),\hdots, \mu(\bm{x}_n, \bm{\beta})\right)^T$. 
Now we define the $n\times p$ matrix
\begin{equation}\label{F.b}
    \bm{\dot{\mu}}(\bm{\beta}) = \frac{\partial\bm{\mu}(\bm{\beta})}{\partial\bm{\beta}^T} = 
\begin{bmatrix}
    \nabla_{\bm{\beta}}\mu(\bm{x}_1,\bm{\beta}) &
    \nabla_{\bm{\beta}}\mu(\bm{x}_2,\bm{\beta}) &
    \hdots &
    \nabla_{\bm{\beta}}\mu(\bm{x}_n,\bm{\beta})\\
\end{bmatrix}^T,
\end{equation}
where $\nabla_{\bm{\beta}}$ denotes the partial derivative with respect to $\bm{\beta}$, and the $(p+1)\times(p+1)$ matrices
\begin{equation} \label{Psi}
    \bm{\Psi}_n =
    \begin{bmatrix}
        \frac{\zeta_\alpha}{n} \left(\bm{\dot{\mu}}(\bm{\beta})^T\bm{\dot{\mu}}(\bm{\beta})\right) & \bm{0}_p\\
        \bm{0}_p^T & \varsigma_\alpha
    \end{bmatrix} \mbox{ and }
   \bm{\Omega}_n =
    \begin{bmatrix}
        \frac{\zeta_{2\alpha}}{n} \left(\bm{\dot{\mu}}(\bm{\beta})^T\bm{\dot{\mu}}(\bm{\beta})\right) & \bm{0}_p\\
        \bm{0}_p^T & \varsigma_{2\alpha} - \frac{\alpha^2}{4}\zeta_\alpha^2
    \end{bmatrix},
\end{equation}
where $\zeta_\alpha = (2\pi)^{-\alpha/2}\sigma^{-(\alpha+2)}(1+\alpha)^{-3/2}$ and $\varsigma_\alpha = (2\pi)^{-\alpha/2}\sigma^{-(\alpha+4)}\frac{2+\alpha^2}{4(1+\alpha)^{5/2}}$. Let us further denote the $(i,j)$-the entry of the matrix $\bm{\dot{\mu}}(\bm{\beta})$ by $\dot{\mu}_{ij}(\bm{\beta})$ for all $i=1,\ldots n$ and $j=1,\ldots,p$.

We now present some conditions on the NLR model, under which the asymptotic properties of the MDPDE $\bm{\widehat{\theta}}_\alpha$ would be derived. Let us denote $\bm{\Theta_\beta}\ (\subseteq \mathbb{R}^p)$ be the parameter space of the model parameter $\bm{\beta}$ of the NLR model \eqref{nlr} and $\sigma^2>0$. So, $\bm{\Theta} = \bm{\Theta_\beta} \times \mathbb{R}^+$.\\\\
\textbf{Conditions for consistency and asymptotic normality:}\\
\textbf{(R1)} The mean function $\mu(\bm{x}_i, \bm{\beta})$ is thrice continuously differentiable with respect to $\bm{\beta}\in\omega$, 
an open subset of $\bm{\Theta_\beta}$ containing the true parameter value ($\bm{\beta}_0$). Further, for  all $i,j,k,l$ and  $\bm{\beta}\in\omega$, we have
\begin{equation} \label{R11}
    \sup_{n>1}\max_{1\leq i\leq n}\left|\dot{\mu}_{ij}(\bm{\beta})\right| = O(1),\ \ \ \ \sup_{n>1}\max_{1\leq i\leq n}\left|\dot{\mu}_{ij}(\bm{\beta})\dot{\mu}_{ik}(\bm{\beta})\right| = O(1),
\end{equation}
%and
\begin{equation} \label{R12}
 \mbox{and} ~~~~~~~~~~  \frac{1}{n}\sum_{i=1}^n\left|\dot{\mu}_{ij}(\bm{\beta})\dot{\mu}_{ik}(\bm{\beta})\dot{\mu}_{il}(\bm{\beta})\right| = O(1).~~~~~~~~~~
\end{equation}
\textbf{(R2)} For all $\bm{\beta}\in\omega$, the matrix $\bm{\dot{\mu}}(\bm{\beta})$ satisfies
\begin{equation} \label{R21}
    \inf_n \left[\mbox{minimum eigenvalue of } \frac{1}{n}\bm{\dot{\mu}}(\bm{\beta})^T\bm{\dot{\mu}}(\bm{\beta})\right] > 0,
\end{equation}
which says that $\bm{\dot{\mu}}(\bm{\beta})$ is of full column rank, and
\begin{equation} \label{R22}
    \max_{1\leq i\leq n}\left[\nabla_{\bm{\beta}}\mu(\bm{x}_i, \bm{\beta})^T \left(\bm{\dot{\mu}}(\bm{\beta})^T\bm{\dot{\mu}}(\bm{\beta})\right)^{-1} \nabla_{\bm{\beta}}\mu(\bm{x}_i, \bm{\beta})\right] = O(n^{-1}).
\end{equation}
With the above-mentioned conditions, the asymptotic distributions of the MDPDEs of the NLR model parameters are given in the following theorem; its proof is given in Web Appendix B.
\begin{theorem}\label{th-2.1}
    Under the setup of the NLR model \eqref{nlr}, suppose that the true data-generating distributions belong to the corresponding parametric model families with the true parameter value being $\bm{\theta}_0 = \left(\bm{\beta}_0^T, \sigma_0^2\right)^T$ and Conditions (R1)-(R2) hold. Then, for any  $\alpha\geq 0$, we have the following results.
    \begin{enumerate}[label=(\roman*)]
    \item There exists a consistent sequence of roots $\bm{\widehat{\theta}}_\alpha = \left(\bm{\widehat{\beta}}_\alpha^T, \widehat{\sigma}_\alpha^2\right)^T$ 
    of the MDPDE estimating equations \eqref{esteq1}--\eqref{esteq2}.
    \item Asymptotic distributions of $\bm{\widehat{\beta}}_\alpha$ and $\widehat{\sigma}_\alpha^2$ are independent.
    \item
    Asymptotically, as $n\to\infty$,
    \[\left(\bm{\dot{\mu}}(\bm{\beta}_0)^T\bm{\dot{\mu}}(\bm{\beta}_0)\right)^{1/2}\left(\bm{\widehat{\beta}}_\alpha - \bm{\beta}_0\right) \xrightarrow{\mathcal{D}} \mathcal{N}_p\left(\bm{0}_p,\ v_{1\alpha}\sigma_0^2 \mathbb{I}_p\right),
%    \]
\mbox{    and }
%    \[
\sqrt{n}\left(\widehat{\sigma}_\alpha^2 - \sigma_0^2\right)  \xrightarrow{\mathcal{D}} \mathcal{N}\left(0,\ v_{2\alpha}\sigma_0^4\right),\]
%    \end{enumerate}
%    where
    \[ \mbox{where }~~~~~~~~~~
    v_{1\alpha} = \frac{1}{\sigma^2}\frac{\zeta_{2\alpha}}{\zeta_\alpha^2} = \left(1 + \frac{\alpha^2}{1+2\alpha}\right)^{3/2},
    ~~~~~~~~~~~~~~~~~~~~~~~~~~~~~~~~~~~~~~~~~~~~~~~~~~~~~~~~~~~~~~~~~~~~~~~~~~~\]
    \[ \mbox{and }~~~~~~~~v_{2\alpha} = \frac{1}{\sigma^4}\frac{\varsigma_{2\alpha}-\frac{\alpha^2}{4}\zeta_\alpha^2}{\varsigma_\alpha^2} = \frac{4}{\left(2+\alpha^2\right)^2}\left[2\left(1+2\alpha^2\right)\left(1 + \frac{\alpha^2}{1+2\alpha}\right)^{5/2} - \alpha^2(1+\alpha)^2\right].\]
    \end{enumerate}
\end{theorem}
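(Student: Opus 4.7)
The plan is to derive Theorem \ref{th-2.1} as a specialization of the general asymptotic theory for the MDPDE under independent non-homogeneous (INH) observations in Ghosh and Basu \cite{ghosh2013robust}. That theory supplies, under a standard list of regularity conditions on the component model densities and on the two $(p+1)\times(p+1)$ matrices $\bm{\Psi}_n$ and $\bm{\Omega}_n$ constructed from the DPD score, both the existence of a consistent root of the MDPDE estimating equations and a CLT of the sandwich form
\[
\bm{\Omega}_n^{-1/2}\,\bm{\Psi}_n\,\sqrt{n}\bigl(\bm{\widehat{\theta}_\alpha}-\bm{\theta}_0\bigr)\xrightarrow{\mathcal{D}} \mathcal{N}_{p+1}(\bm{0}_{p+1},\mathbb{I}_{p+1}).
\]
So the work splits into (i) computing $\bm{\Psi}_n,\bm{\Omega}_n$ in closed form under our Gaussian NLR model, and (ii) verifying that the abstract INH regularity assumptions of \cite{ghosh2013robust} are implied by (R1)-(R2).

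For step (i), I would differentiate the integrand $V_i(y_i;\bm{\theta})$ in \eqref{Vinlr} twice with respect to $\bm{\theta}=(\bm{\beta}^T,\sigma^2)^T$ and take expectations under the true density $g_i\equiv\mathcal{N}(\mu(\bm{x}_i,\bm{\beta}_0),\sigma_0^2)$. All resulting expectations reduce to one-dimensional Gaussian integrals of the form $E\bigl[e^{-\alpha(y-\mu)^2/(2\sigma^2)}(y-\mu)^{2k}\bigr]$, which evaluate in closed form in terms of $\zeta_\alpha$ and $\varsigma_\alpha$. A direct calculation shows that the $\bm{\beta}$--$\sigma^2$ cross-block of both $\bm{\Psi}_n$ and $\bm{\Omega}_n$ vanishes, because the relevant integrands are odd in $y-\mu$; this yields the block-diagonal forms stated in \eqref{Psi} and immediately delivers part (ii) on asymptotic independence of $\bm{\widehat{\beta}_\alpha}$ and $\widehat{\sigma}_\alpha^2$. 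The block structure further lets me invert $\bm{\Psi}_n$ and form the sandwich $\bm{\Psi}_n^{-1}\bm{\Omega}_n\bm{\Psi}_n^{-1}$ blockwise: the $\sigma^2$-block is one-dimensional and standard, while the $\bm{\beta}$-block produces the left-normalizer $(\bm{\dot{\mu}}(\bm{\beta}_0)^T\bm{\dot{\mu}}(\bm{\beta}_0))^{1/2}$ appearing in the statement.

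For step (ii), condition (R1) provides the thrice differentiability and uniform boundedness of $\nabla_{\bm{\beta}}\mu$, $\nabla_{\bm{\beta}}^2\mu$, and $\nabla_{\bm{\beta}}^3\mu$ (the third-order bound is needed to control the Taylor remainder in the usual $\sqrt{n}$-consistency argument), and together with the Gaussian tails of $y_i$ yields a uniform $(2+\delta)$-moment bound on the score contributions required for a Lyapunov CLT. Condition (R2) supplies the positive lower bound on the minimum eigenvalue of $\tfrac{1}{n}\bm{\dot{\mu}}^T\bm{\dot{\mu}}$, guaranteeing that the normalized $\bm{\Psi}_n$ is well-conditioned, while the maximal-leverage bound \eqref{R22} is exactly what delivers the Lindeberg-type negligibility of individual score contributions.

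Finally, plugging the explicit expressions for $\zeta_\alpha,\zeta_{2\alpha},\varsigma_\alpha,\varsigma_{2\alpha}$ into $\zeta_{2\alpha}/(\sigma_0^2\zeta_\alpha^2)$ and $(\varsigma_{2\alpha}-\tfrac{\alpha^2}{4}\zeta_\alpha^2)/(\sigma_0^4\varsigma_\alpha^2)$ and performing routine algebra produces the compact formulas for $v_{1\alpha}$ and $v_{2\alpha}$. I expect the main obstacle to be the careful bookkeeping in step (ii): translating the abstract INH conditions of \cite{ghosh2013robust} into verifiable statements in our NLR setting and showing that the maximal-leverage bound \eqref{R22}, rather than a linear-regression style $\max_i\|\bm{x}_i\|^2/n\to 0$ condition, is sharp enough to supply the Lyapunov condition uniformly across the $p$ coordinates of $\bm{\beta}$. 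The $\sigma^2$-component is essentially a scalar argument and should be straightforward once the Gaussian integrals have been tabulated.
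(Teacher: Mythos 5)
Your proposal is correct and follows essentially the same route as the paper: the paper likewise proves the theorem by verifying (via a lemma) that Conditions (R1)--(R2) imply the abstract INH Assumptions (A1)--(A7) of Ghosh and Basu, then computes $\bm{J}_\alpha^{(i)}$ and $\bm{\xi}_i$ explicitly under the Gaussian NLR model to obtain the block-diagonal $\bm{\Psi}_n$ and $\bm{\Omega}_n$, from which the asymptotic independence, the two marginal limits, and the formulas for $v_{1\alpha},v_{2\alpha}$ all follow. The only cosmetic difference is that the paper works directly with the information-type integrals $\int \bm{u}_i\bm{u}_i^T f_i^{1+\alpha}$ rather than the expected Hessian of $V_i$, but under the model-specification assumption these coincide.
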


\begin{remark}
    For $\alpha=0$, the above theorem gives the asymptotic normality result for the OLS or the MLE under NLR models (with IID normal errors), which is exactly the same as discussed in Theorem 5 of Wu \cite{wu1981asymptotic}, for the OLS.
\end{remark}
Let us now study the asymptotic relative efficiency (ARE) of the MDPDE with respect to the fully efficient MLE. From Theorem \ref{th-2.1} it is easy to see that the ARE of the MDPDE $\bm{\widehat{\beta}}_\alpha$ of $\bm{\beta}$ is the same for all $\beta_i$'s and is given by
\[\frac{v_{10}}{v_{1\alpha}}\times 100\% = \left(1 + \frac{\alpha^2}{1+2\alpha}\right)^{-3/2}\times 100\%.\]
Similarly the ARE of $\widehat{\sigma}_\alpha^2$ is given by
\[\frac{v_{20}}{v_{2\alpha}}\times 100\% = \frac{(2+\alpha^2)^2}{2}\left[2(1+2\alpha^2)\left(1 + \frac{\alpha^2}{1+2\alpha}\right)^{5/2} - \alpha^2(1+\alpha)^2\right]^{-1}\times 100\%.\]
These AREs are independent of the choice of the (non-linear) regression function $\mu(\cdot,\cdot)$ and exactly the same as those obtained for the linear regression model in Ghosh and Basu \cite{ghosh2013robust}. The exact values of the ARE of the MDPDE of $\bm{\beta}$ for different $\alpha$ are reported in \btxt{Table C.1} of Web Appendix C, along with the same for the existing robust competitors. From the table, it is evident that the loss in efficiency of the MDPDE is very small for smaller values of $\alpha$, compared to other competing estimators at suitable tuning parameter values, expecting to produce similar robustness properties.

\subsection{Influence function and sensitivity analysis of the proposed MDPDE}\label{IFNLR}
Influence function (IF) is a measure of the local robustness property of an estimator. Here, we will follow the general theory from Ghosh and Basu \cite{ghosh2013robust}, as presented briefly in Web Appendix A.1. 

Firstly, the minimum DPD functional $\bm{T}_\alpha = \left(\bm{T}_\alpha^{\bm{\beta}}, T_\alpha^{\sigma^2}\right)$ of the parameter $\bm{\theta} = \left(\bm{\beta}^T, \sigma^2\right)^T$ under the setup of the NLR model \eqref{nlr}, with true distribution $g_i$ and $f_i(\cdot; \bm{\theta}) \in \mathcal{F}_i$, is defined by
\begin{equation}\label{nlrMDPDfn}
    \bm{T}_\alpha(\bm{G}) = \arg\min_{\bm{\theta\in\bm{\Theta}}}\frac{1}{n}\sum_{i=1}^n\left[\frac{1}{\left(\sigma\sqrt{2\pi}\right)^\alpha\sqrt{1+\alpha}} - \frac{1+\alpha}{\alpha}\int f_i^\alpha(y;\bm{\theta})g_i(y)dy \right],
\end{equation}
where $\bm{G} = (G_1, \cdots, G_n)^T$ with $G_i$ being the distribution function for the density $g_i$ for each $i=1, \ldots, n$. Then, assuming the true distributions belonging to the corresponding model families and using the general expression of IF under INH setup (as given in Eq. (3) in Web Appendix A.1), the IF of $\bm{T}_\alpha$ with contamination only in the $i_0$-th observation at the contamination point $t_{i_0}$ is obtained as
\begin{equation*}
    IF_{i_0}(t_{i_0}, \bm{T}_\alpha, \bm{F}_{\bm{\theta}}) = 
    \begin{pmatrix}
       \frac{1}{\zeta_\alpha} f_{i_0}^\alpha(t_{i_0};\bm{\theta}) \frac{(t_{i_0}-\mu(\bm{x}_{i_0}, \bm{\beta}))}{\sigma^2}\left(\bm{\dot{\mu}}(\bm{\beta})^T\bm{\dot{\mu}}(\bm{\beta})\right)^{-1} \nabla_{\bm{\beta}}\mu(\bm{x}_{i_0}, \bm{\beta})\\\\
       
       \frac{1}{n\varsigma_\alpha}\left[\left\{\frac{(t_{i_0}-\mu(\bm{x}_{i_0}, \bm{\beta}))^2}{2\sigma^4} - \frac{1}{2\sigma^2} \right\}f_{i_0}^\alpha(t_{i_0};\bm{\theta}) + \frac{\alpha}{2}\zeta_\alpha\right]
    \end{pmatrix},
\end{equation*}
where $\bm{F}_{\bm{\theta}} = \left(F_{1,\bm{\theta}}, F_{2,\bm{\theta}},\hdots,F_{n,\bm{\theta}} \right)^T$ with $F_{i,\bm{\theta}}$ being the distribution function of the model density $f_i(\cdot;\bm{\theta})$. On simplification, the IFs of $\bm{T}_\alpha^{\bm{\beta}}$ and $T_\alpha^{\sigma^2}$ with contamination only in the $i_0$-th observation at the contamination point $t_{i_0}$, respectively, become
\begin{equation}\label{IFbeta}
    IF_{i_0}(t_{i_0}, \bm{T}_\alpha^{\bm{\beta}}, \bm{F}_{\bm{\theta}}) = (1+\alpha)^\frac{3}{2} (t_{i_0}-\mu(\bm{x}_{i_0}, \bm{\beta})) e^{-\frac{\alpha}{2\sigma^2}(t_{i_0}-\mu(\bm{x}_{i_0}, \bm{\beta}))^2} \left(\bm{\dot{\mu}}(\bm{\beta})^T\bm{\dot{\mu}}(\bm{\beta})\right)^{-1} \nabla_{\bm{\beta}}\mu(\bm{x}_{i_0}, \bm{\beta}),
\end{equation}
\begin{multline}\label{IFsigma}
    IF_{i_0}(t_{i_0}, T_\alpha^{\sigma^2}, \bm{F}_{\bm{\theta}}) = 
    \frac{2(1+\alpha)^\frac{5}{2}}{n(2+\alpha^2)} \left\{(t_{i_0}-\mu(\bm{x}_{i_0}, \bm{\beta}))^2 - \sigma^2\right\} e^{-\frac{\alpha}{2\sigma^2}(t_{i_0}-\mu(\bm{x}_{i_0}, \bm{\beta}))^2} + \frac{2\alpha(1+\alpha)^2}{n(2+\alpha^2)}.
\end{multline}
Note that, the functions $z\mapsto e^{-kz^2},\ z\mapsto ze^{-kz^2},\ z\mapsto z^2e^{-kz^2}$ are all bounded in $\mathbb{R}$, for any $k>0$, and hence the above IFs in \eqref{IFbeta}, \eqref{IFsigma} are bounded in the contamination point $t_{i_0}, \mbox{ for all } \alpha>0 \mbox{ and any } i_0$. This ensures that, for $\alpha>0$, the MDPDEs are robust with respect to any outlying observation. Also note that, the IFs are unbounded at $\alpha = 0$, which is actually the case of the non-robust MLEs.

Further, according to the general formula given in Eq. (4) of Web Appendix A.1, the IFs of $\bm{T^\beta}_\alpha$ and $T_\alpha^{\sigma^2}$ with contamination in all the observations at the contamination points in $\bm{t} =(t_1, \cdots, t_n)^T$, are, respectively, given by
\begin{multline}
    IF(\bm{t}, \bm{T}_\alpha^{\bm{\beta}}, \bm{F}_{\bm{\theta}}) = 
    (1+\alpha)^\frac{3}{2}\left(\bm{\dot{\mu}}(\bm{\beta})^T\bm{\dot{\mu}}(\bm{\beta})\right)^{-1} \sum_{i=1}^n\nabla_{\bm{\beta}}\mu(\bm{x}_i, \bm{\beta}) (t_i-\mu(\bm{x}_i, \bm{\beta})) e^{-\frac{\alpha}{2\sigma^2}(t_i-\mu(\bm{x}_i, \bm{\beta}))^2},
\end{multline}
\begin{multline}
    IF(\bm{t}, T_\alpha^{\sigma^2}, \bm{F}_{\bm{\theta}}) =
    \frac{2(1+\alpha)^\frac{5}{2}}{n(2+\alpha^2)} \sum_{i=1}^n\left\{(t_i-\mu(\bm{x}_i, \bm{\beta}))^2 - \sigma^2\right\} e^{-\frac{\alpha}{2\sigma^2}(t_i-\mu(\bm{x}_i, \bm{\beta}))^2} + \frac{2\alpha(1+\alpha)^2}{(2+\alpha^2)}.
\end{multline}
It is easy to observe that these IFs are also bounded in each argument $t_i,\mbox{ for } i=1,2,\hdots,n$ for all $\alpha>0$, and unbounded at $\alpha=0$, as before.

Let us now consider some IF-based summary measures of robustness for our NLR model. The gross error sensitivity of the minimum DPD functional $\bm{T^\beta}_\alpha$ of $\bm{\beta}$ at the vector of model distributions $\bm{F_\theta}$  with contamination only in the $i_0$-th observation is given by
\btxt{\begin{align} \label{GES}
    \nonumber
    \gamma_{i_0}^u(\bm{T}_\alpha^{\bm{\beta}}, \bm{F}_{\bm{\theta}}) &= \sup_t \left\{\lVert IF_{i_0}(t, \bm{T}_\alpha^{\bm{\beta}}, \bm{G})\rVert \right\} \\
    &=\left\{ 
    \begin{matrix}
        \frac{(1+\alpha)^\frac{3}{2}}{\sqrt{\alpha}}\sigma e^{-\frac{1}{2}} \lVert\left(\bm{\dot{\mu}}(\bm{\beta})^T\bm{\dot{\mu}}(\bm{\beta})\right)^{-1} \nabla_{\bm{\beta}}\mu(\bm{x}_{i_0}, \bm{\beta})\rVert, & \mbox{if} & \alpha>0,\\
        \infty, & \mbox{if} & \alpha=0.
    \end{matrix}
    \right.
\end{align}}
But this measure does not satisfy the scale invariance property with respect to scale transformation of the components of $\bm{\beta}$. To overcome this problem, we consider a modified measure, namely the self-standardized sensitivity. For contamination only in the $i_0$-th observation, it is given by
\begin{multline}\label{SSS}
    \gamma_{i_0}^s(\bm{T}_\alpha^{\bm{\beta}}, \bm{G})
    =\sup_t \left\{IF_{i_0}(t, \bm{T}_\alpha^{\bm{\beta}}, \bm{G})^T (\bm{\Psi}_{pn}^{-1}\bm{\Omega}_{pn}\bm{\Psi}_{pn}^{-1})^{-1} IF_{i_0}(t, \bm{T}_\alpha^{\bm{\beta}}, \bm{G})\right\}^{1/2}\\
    =\left\{ 
    \begin{matrix}
        \frac{(1+2\alpha)^\frac{3}{4}}{\sqrt{n\alpha}} e^{-\frac{1}{2}} \left\{\nabla_{\bm{\beta}}\mu(\bm{x}_{i_0},\bm{\beta})^T \left(\bm{\dot{\mu}}(\bm{\beta})^T\bm{\dot{\mu}}(\bm{\beta})\right)^{-1} \nabla_{\bm{\beta}}\mu(\bm{x}_{i_0},\bm{\beta})\right\}^{1/2}, & \mbox{if} & \alpha>0,\\
        \infty, & \mbox{if} & \alpha=0,
    \end{matrix}
    \right.
\end{multline}
where $\bm{\Psi}_{pn},\bm{\Omega}_{pn}$ are the $p\times p$ principal diagonal blocks of the matrices $\bm{\Psi}_n,\bm{\Omega}_n$, respectively. Note that both the above sensitivity measures are decreasing functions in $\alpha\ (>0)$. So, it is evident that, with the presence of outliers, the robustness of the MDPDE of $\bm{\beta}$ increases with increasing values of $\alpha$. However, in Section \ref{APNLR} we have observed that the ARE of MDPDE of $\bm{\beta}$ (under pure data) decreases with increasing values of $\alpha$. Together, the tuning parameter $\alpha$ acts as a trade-off between the efficiency and robustness of the MDPDE under NLR models, consistent with the existing literature of DPD-based procedures.

For contamination in all or some observations, the gross error sensitivity and the self-standardized sensitivity can similarly be defined as in \eqref{GES} and \eqref{SSS}, but replacing $IF_{i_0}(t, \bm{T}_\alpha^{\bm{\beta}}, \bm{G})$ by $IF(\bm{t}, \bm{T}_\alpha^{\bm{\beta}}, \bm{G})$ and taking supremum over all possible $t_1\hdots,t_n$. The resulting implications are again the same, and hence these details are omitted for brevity.

\section{Robust Wald-type tests based on the MDPDE} \label{hyptesting}
Basu et al.~\cite{basu2016generalized} first defined and studied a generalized class of robust Wald-type tests based on the MDPDE under the IID setup, which was later extended for the general INH setup in Basu et al.~\cite{basu2018robust}. Here, we use the same approach to develop robust tests of hypotheses for our NLR models. 

\subsection{General Wald-type tests of NLR model parameters}\label{WTTNHOMO}
Consider the setup of the NLR model \eqref{nlr} and suppose that we wish to test the general linear hypothesis on the regression parameters $\bm{\beta}$, as given by
\begin{equation}\label{nlrhyp}
    H_0: \bm{L}\bm{\beta} = \bm{l}_0 \mbox{ against } H_1: \bm{L\beta}\neq \bm{l}_0,
\end{equation}
where $\bm{L}$ is a given $r\times p$ matrix with rank$(\bm{L}) = r$ and $\bm{l}_0$ is a given $r$-vector. Recall that the full parameter vector is $\bm{\theta} = (\bm{\beta}^T,\sigma^2)^T$ with the parameter space $\bm{\Theta} \subseteq \mathbb{R}^p \times \mathbb{R}^+$. The restricted parameter space under the null hypothesis in \eqref{nlrhyp} is given by
$    \bm{\Theta}_0 = \left\{\left(\bm{\beta}^T,\sigma^2\right)^T: \bm{L\beta = l_0} \mbox{ and } 0<\sigma^2<\infty \right\}$.
Also, the MDPDE of $\bm{\theta} = \left(\bm{\beta}^T,\sigma^2\right)^T$ with tuning parameter $\alpha\geq 0$ has been denoted by $\bm{\widehat{\theta}}_\alpha = (\bm{\widehat{\beta}}_\alpha^T, \widehat{\sigma}_\alpha^2)^T$. Then, we define the corresponding Wald-type test statistics for testing \eqref{nlrhyp} as given by
\begin{equation*}
    W_{n,\alpha} = \frac{1}{v_{1\alpha}\widehat{\sigma}_\alpha^2}\bm{\left(L\widehat{\beta}_\alpha - l_0\right)^T \left[L\left(\dot{\mu}\left(\widehat{\beta}_\alpha\right)^T\dot{\mu}\left(\widehat{\beta}_\alpha\right)\right)^{-1}L^T\right]^{-1} \left(L\widehat{\beta}_\alpha - l_0\right)},
\end{equation*}
where $v_{1\alpha}$ is as defined in Theorem 2.1. Note that, under the null hypothesis in \eqref{nlrhyp}, one can easily show that, $W_{n,\alpha}$ asymptotically follows $\chi^2_r$, the chi-square distribution with $r$ degrees of freedom. So, the null hypothesis will be rejected at level of significance $\gamma$ if the observed value of $W_{n,\alpha} > \chi^2_{\gamma;r}$, the $(1 - \gamma)$-th quantile of $\chi^2_r$ distribution.
\begin{remark}
    The above Wald-type test coincides with the classical Wald-test at $\alpha=0$, due to the fact that the MDPDE coincides with the MLE at $\alpha = 0$.
\end{remark}
By the general theory discussed in Basu et al.~\cite{basu2018robust}, the above Wald-type test can be shown to be consistent at any fixed alternative for every value of $\alpha\geq 0$. Now, let us consider the contiguous alternative hypotheses of the form
\begin{equation}\label{nlrcontalt}
    H_{1,n}: \bm{L\beta} = \bm{l}_n, \mbox{ where } \bm{l}_n = \bm{l}_0 + n^{-1/2}\bm{d},
\end{equation}
with $\bm{d}\in\mathbb{R}^r \setminus \{\bm{0}_r\}$ be a fixed vector. Then, by Theorem 8 of Basu et al.~\cite{basu2018robust}, under $H_{1,n}$, $W_{n,\alpha} \xrightarrow{\mathcal{D}} \chi^2_r(\delta)$, the non-central chi-square distribution with degrees of freedom $r$ and the non-centrality parameter $\delta$, where
\begin{equation}\label{ncp}
    \delta = \frac{1}{v_{1\alpha}\sigma_0^2}\bm{d}^T \left[\bm{L}\bm{M}^{-1}(\bm{\beta}_0) \bm{L}^T\right]^{-1}\bm{d}, \mbox{ with } \bm{M}(\bm{\beta}_0) = \lim_{n\to\infty}\frac{1}{n} \bm{\dot{\mu}}(\bm{\beta}_0)^T\bm{\dot{\mu}}(\bm{\beta}_0), ~~ \left(\bm{\beta}_0^T, \sigma_0^2\right)^T \in \bm{\Theta}_0.~~
\end{equation}
%and $\left(\bm{\beta}_0^T, \sigma_0^2\right)^T \in \bm{\Theta}_0$. 
Hence, an approximate expression of asymptotic contiguous power under $H_{1,n}$ in \eqref{nlrcontalt} is given by
$\Pi_{\alpha} = 1 - G_{\chi^2_r(\delta)}\left(\chi^2_{\gamma;r}\right)$,
where $G_{\chi^2_r(\delta)}(\cdot)$ denotes the distribution function of $\chi^2_r(\delta)$.

\subsection{Influence functions of the Wald-type tests}
Let us now consider the setup of Section \ref{IFNLR} to study the robustness of the proposed Wald-type test statistics. We first define the statistical functional corresponding to the test statistics $W_{n,\alpha}$ as 
\begin{equation*}
    W_\alpha(\bm{G}) = \left(\bm{LT^\beta}_\alpha(\bm{G}) - \bm{l_0}\right)^T \left[\bm{L}\left(\frac{\bm{\dot{\mu}}\left(\bm{T^\beta}_\alpha(\bm{G})\right)^T\bm{\dot{\mu}}\left(\bm{T^\beta}_\alpha(\bm{G})\right)}{nv_{1\alpha}T_\alpha^{\sigma^2}(\bm{G})}\right)^{-1}\bm{L}^T\right]^{-1}\left(\bm{LT^\beta}_\alpha(\bm{G})- \bm{l_0}\right),
\end{equation*}
where $\bm{T^\beta}_\alpha$ and $T_\alpha^{\sigma^2}$ are the minimum DPD functional corresponding to parameter $\bm{\beta}$ and $\sigma^2$, respectively, as defined in Section \ref{IFNLR}. 

Let $\bm{\theta}_0 = \left(\bm{\beta}_0^T, \sigma_0^2\right)^T \in \bm{\Theta}_0$ and $\bm{F}_{\bm{\theta}_0}$ be the vector of the model distributions under the null hypotheses. It follows from the theory of general Wald-type tests (see, e.g., \cite{basu2016generalized,basu2018robust}) that the first-order IF of the test functional $W_\alpha$ (as defined for the estimators earlier) is zero when evaluated at $\bm{G}=\bm{F}_{\bm{\theta}_0}$, and so we need to consider the second-order IF.

\btxt{The second-order IF of any statistical functional $W$, at the vector of distribution functions $\bm{G}$, with contamination only at the $i$-th observation at the contamination point $t_i$ is defined as
\begin{equation*}
    IF_i^{(2)}(t_i; W,\bm{G}) = \frac{\partial^2}{\partial\epsilon^2} W(G_1,\ldots,G_{i-1}, G_{i,\epsilon}, G_{i+1}, \ldots, G_n)\bigg|_{\epsilon=0},
\end{equation*}
where $G_{i,\epsilon} = (1-\epsilon)G_i + \epsilon\Delta_{t_i}$, with $\Delta_{t_i}$ being the one-point distribution supported on $\{t_i\}$ (see, e.g., \cite{ghosh2016influence, basu2018robust}). 
Intuitively, it measures the second-order bias approximation caused by infinitesimal contamination at the outlying point. 
For the present Wald-type test functional $W_\alpha$, this second-order IF at $\bm{G} = \bm{F}_{\bm{\theta}_0}$ turns out to be 
\begin{multline*}
    IF_{i_0}^{(2)}\left(t_{i_0}; W_\alpha, \bm{F}_{\bm{\theta}_0}\right)
    = \frac{2}{nv_{1\alpha}\sigma_0^2}IF_{i_0}\left(t_{i_0}; \bm{T^\beta}_\alpha, \bm{F}_{\bm{\theta}_0}\right)^T\bm{L}^T\left[\bm{L}\left(\bm{\dot{\mu}}(\bm{\beta}_0)^T\bm{\dot{\mu}}(\bm{\beta}_0)\right)^{-1}\bm{L}^T\right]^{-1}\bm{L}~IF_{i_0}\left(t_{i_0}; \bm{T^\beta}_\alpha, \bm{F}_{\bm{\theta}_0}\right),
\end{multline*}
when there is contamination only in the $i_0$-th observation.}  Note that, this second-order IF directly depends on the IF of the MDPDE functional $\bm{T^\beta}_\alpha$ of $\bm{\beta}$. 
Since $IF_{i_0}(t_{i_0}; \bm{T^\beta}_\alpha, \bm{F}_{\bm{\theta}_0})$ is bounded for all $\alpha > 0$ and unbounded at $\alpha = 0$, 
so is $IF_{i_0}^{(2)}\left(t_{i_0}; W_\alpha, \bm{F}_{\bm{\theta}_0}\right)$. This ensures the robustness of the Wald-type test for any $\alpha>0$ as well. 

We can also study the robustness of the level and power of these Wald-type tests by examining their level influence function (LIF) and power influence function (PIF). 
\btxt{A brief discussion of the LIF and PIF under general INH set-ups (including their definitions) and for the Wald-type test statistics is provided in Web Appendix A.3 for the sake of completeness.} 
From Equations (9)-(10) presented there (or, directly from Theorem 12 of Basu et al.~\cite{basu2018robust}), we can conclude that the LIF of the MDPDE-based Wald-type test is identically zero whenever the IF of the MDPDE used in constructing the test is bounded, i.e., for all $\alpha>0$. Further, its PIF at level of significance $\gamma$ is given by 
\begin{equation*}
    PIF\left(\bm{t}; W_\alpha, \bm{F}_{\bm{\theta}_0}\right)
    = K_r^*(\delta) \frac{1}{nv_{1\alpha}\sigma_0^2}\bm{d}^T\left[\bm{L}\left(\bm{\dot{\mu}}(\bm{\beta}_0)^T\bm{\dot{\mu}}(\bm{\beta}_0)\right)^{-1}\bm{L}^T\right]^{-1} IF\left(\bm{t}; \bm{T}_\alpha, \bm{F}_{\bm{\theta}_0}\right),
\end{equation*}
where $\delta$ is as given in \eqref{ncp} and 
\[K_r^*(s) = e^{-\frac{s}{2}}\sum_{v=0}^\infty\frac{s^{v-1}}{v!2^v}(2v-s) P\left(\chi^2_{r+2v} > \chi^2_{\gamma; p}\right).\]
Note that, the PIF is linear in the IF of the MDPDE, and hence is bounded for any $\alpha > 0$. This implies the robustness of the power of our Wald-type tests.

\subsection{Tests for scalar parameter against one-sided alternative}
Let us now consider the problem of testing a single parameter component against a one-sided alternative, i.e., testing the hypothesis  
\begin{equation}\label{osalt}
    H_0: \beta_k = \beta_k^0 \mbox{ against } H_1: \beta_k > \beta_k^0,
\end{equation}
where $\beta_k$ is the $k$-th component of regression parameter $\bm{\beta}$ in our NLR model \eqref{nlr}. Under the both-sided alternative $H_1^*: \beta_k \neq \beta_k^0$, the general MDPDE-based Wald-type test statistics can be obtained from the discussion of the previous subsection, which asymptotically follow $\chi^2_1$. However, for testing the one-sided alternative in \eqref{osalt}, since we are dealing with a scalar parameter, we can use the test statistic
\begin{equation}\label{WTTSforSP}
    \widetilde{W}_{n,\alpha} = \frac{\widehat{\beta}_{k\alpha} - \beta_k^0}{\sqrt{v_{1\alpha}\widehat{\sigma}_\alpha^2s_{kk}}},
\end{equation}
where $\widehat{\beta}_{k\alpha}$ is the $k$-th component of $\bm{\widehat{\beta}}_\alpha$, and $s_{kk}$ is the $(k, k)$-th element of the matrix $(\bm{\dot{\mu}}(\widehat{\bm{\beta}}_\alpha)^T\bm{\dot{\mu}}(\widehat{\bm{\beta}}_\alpha))^{-1}$. It is evident from Theorem \ref{th-2.1} that, under $H_0$ in \eqref{osalt}, $\widetilde{W}_{n,\alpha}$ asymptotically follows a standard normal distribution. So, this null hypothesis will be rejected against $H_1$ in \eqref{osalt}, at level of significance $\gamma$, if the observed value of $W_{n,\alpha}$ exceeds $z_\gamma$, the $(1 - \gamma)$-th quantile of $\mathcal{N}(0,1)$.

\textcolor{black}{Further, one can observe that, under the one-sided contiguous alternative hypotheses of the form
$H_{1,n}: \beta_k = \beta_k^0 + n^{-1/2}d,\ d>0$,
we have
\begin{equation*}
    \frac{\left(\widehat{\beta}_{k\alpha} - \beta_k^0 - n^{-1/2}d\right)}{\sqrt{v_{1\alpha}\widehat{\sigma}_\alpha^2s_{kk}}} \xrightarrow[]{\mathcal{D}} \mathcal{N}(0,1), 
    \mbox{ and hence }~ \left[\widetilde{W}_{n,\alpha} - \frac{d}{\sqrt{v_{1\alpha}\widehat{\sigma}_\alpha^2 n s_{kk}}}\right] \xrightarrow[]{\mathcal{D}} \mathcal{N}(0,1).
\end{equation*}
Thus, the asymptotic contiguous power of the test based on $\widetilde{W}_{n,\alpha}$ is given by
\begin{equation} \label{cont-power}
    \widetilde{\Pi}_{\alpha} = \lim_{n\to\infty} P_{H_{1,n}}\left[\widetilde{W}_{n,\alpha}>z_\gamma\right] = 1 - \Phi\left(z_\gamma - \left(1 + \frac{\alpha^2}{1+2\alpha}\right)^{-3/4}\frac{d^*}{\sigma}\right),
\end{equation}
where $d^* = \frac{d}{\sqrt{s_{kk}^*}}$ with $s_{kk}^* = \lim_{n\to\infty} n s_{kk}$. 
%($ns_{kk}$ is the $(k, k)$-th element of $(\frac{1}{n}\bm{\dot{\mu}}(\widehat{\bm{\beta}}_\alpha)^T\bm{\dot{\mu}}(\widehat{\bm{\beta}}_\alpha))^{-1}$).
Note that this asymptotic contiguous power $\widetilde{\Pi}_{\alpha}$ depends on the covariates and the model mean function only through the quantity $d^*$}. 
\btxt{Table C.2 of Web Appendix C shows the empirical values of $\widetilde{\Pi}_{\alpha}$ at 5\% level of significance, over different values of tuning parameter $\alpha$ and for various values of $d^*$ (which may be obtained based on suitable covariate values and $d$), with $\sigma = 1$}. 
Clearly, for fixed $\alpha$, the power increases with $d^*$; in fact, $d^*=0$ gives the level of the test. But, for fixed $d^* > 0$, the power of our proposed test decreases with increasing values of $\alpha$, which is quite natural, following the pattern of the ARE of the MDPDE used to construct these tests. Additionally, the loss in power is not quite significant compared to the classical MLE-based test (at $\alpha=0$) for small positive values of $\alpha$ (and also for larger $\alpha$ values when $d^*$ is large).

\section{Application to the Michaelis-Menten (MM) Model}\label{Application}
Let us now apply the proposed robust inference procedure to the popular MM model, as defined by Equation \eqref{mmeqn2} in Section \ref{intro}. To be consistent with our notation so far, let us denote the observed reaction velocities as $y_i$ and the substrate concentrations as $x_i$ for the $i$-th observation, and rewrite the MM model equation as
\begin{equation}\label{mmnlr}
    y_i = \mu(x_i, \bm{\beta}) + \epsilon_i,\ i = 1,2,\hdots,n,
\end{equation}
where $\mu(x_i, \bm{\beta}) = \frac{\beta_1 x_i}{\beta_2 + x_i}$ with $\bm{\beta}=(\beta_1,\beta_2)^T=(V_{max}, K_m)^T$ being the regression parameter vector and the random errors $\epsilon_i\sim\mathcal{N}(0, \sigma^2)$, independently  for all $i=1,\hdots,n$. As the mean function here is not defined at $\beta_2 = -x_i$, for any $i$, so we must take the parameter space for $\beta_2$ to be a subset of $\mathbb{R}\setminus\{-x_1,\hdots,-x_n\}$; in applications, we generally have $\beta_2>0$, which ensures that the corresponding parameter space ($\mathbb{R}^+$) is independent of the values of the covariate, and hence the resulting mean function always remains well-defined. The parameter space for $\beta_1$ may be taken as $\mathbb{R}$. For the MM model \eqref{mmnlr}, let us denote $\widehat{\bm{\theta}}_\alpha = (\widehat{\bm{\beta}}_\alpha^T, \widehat{\sigma}_\alpha^2)^T = (\widehat{\beta}_{1\alpha}, \widehat{\beta}_{2\alpha}, \widehat{\sigma}_\alpha^2)^T$ to be the MDPDE of $\bm{\theta} = \left(\bm{\beta}^T,\sigma^2\right)^T$ corresponding to a tuning parameter $\alpha\geq 0$, defined as the minimizer of the objective function in \eqref{Hnlr} with the particular model specific form of $\mu(x_i, \bm{\beta}) = \frac{\beta_1 x_i}{\beta_2 + x_i}$. The corresponding estimating equations are given by \eqref{esteq1} and \eqref{esteq2}, where we now have $\nabla_{\bm{\beta}}\mu(\bm{x}_i, \bm{\beta}) = \left(\frac{x_i}{\beta_2 + x_i},\ -\frac{\beta_1 x_i}{(\beta_2 + x_i)^2}\right)^T$.

\subsection{Asymptotic properties}
In order to find the asymptotic distributions of the MDPDEs of the parameters of the MM model, we follow the notation of Section \ref{APNLR} to get $\bm{\mu}(\bm{\beta}) = \left( \frac{\beta_1 x_1}{\beta_2 + x_1},\hdots,\frac{\beta_1 x_n}{\beta_2 + x_n}\right)^T$ and hence
\begin{equation}\label{mu.}
    \bm{\dot{\mu}}(\bm{\beta}) =
    \begin{bmatrix}
        \frac{x_1}{\beta_2 + x_1} & -\frac{\beta_1 x_1}{(\beta_2 + x_1)^2}\\
        \vdots & \vdots\\
        \frac{x_n}{\beta_2 + x_n} & -\frac{\beta_1 x_n}{(\beta_2 + x_n)^2}
    \end{bmatrix}.
\end{equation}
Then, under the general Conditions $(R1)$ and $(R2)$, it follows from Theorem \ref{th-2.1} that 
\begin{equation*}
    \left(\bm{\dot{\mu}}(\bm{\beta})^T\bm{\dot{\mu}}(\bm{\beta})\right)^{1/2} \left(\widehat{\bm{\beta}}_\alpha - \bm{\beta}\right) \xrightarrow{\mathcal{D}} \mathcal{N}_2\left(\bm{0}, v_{1\alpha}\sigma^2 \mathbb{I}_2 \right),
\end{equation*}
where
\begin{equation}
\label{F.TF.}
    \bm{\dot{\mu}}(\bm{\beta})^T\bm{\dot{\mu}}(\bm{\beta}) = 
    \begin{bmatrix}
        \sum_{i=1}^n a_i^2 & -\sum_{i=1}^n a_i b_i\\
        -\sum_{i=1}^n  a_i b_i & \sum_{i=1}^n b_i^2
    \end{bmatrix},
\end{equation}
with $a_i = \frac{x_i}{\beta_2 + x_i}$, $b_i = \frac{\beta_1 x_i}{(\beta_2 + x_i)^2}$. Upon simplification, we get
\begin{equation*}
    \sqrt{n}\left(\widehat{\beta}_{1\alpha} - \beta_1\right) \xrightarrow{\mathcal{D}}\mathcal{N}\left(0, \sigma_{\beta_1}^2\right) \mbox{ and } \sqrt{n}\left(\widehat{\beta}_{2\alpha} - \beta_2\right) \xrightarrow{\mathcal{D}}\mathcal{N}\left(0, \sigma_{\beta_2}^2\right),
\end{equation*}
where $\sigma_{\beta_1}^2 = \lim\limits_{n\to\infty} \frac{nv_{1\alpha}\sigma^2\sum_{i=1}^nb_i^2}{\sum_{i=1}^n a_i^2\sum_{i=1}^n b_i^2 - \left(\sum_{i=1}^n a_i b_i\right)^2}$ 
and $\sigma_{\beta_2}^2 = \lim\limits_{n\to\infty} \frac{nv_{1\alpha}\sigma^2 \sum_{i=1}^n a_i^2}{\sum_{i=1}^n a_i^2\sum_{i=1}^n b_i^2 - \left(\sum_{i=1}^n a_i b_i\right)^2}$, but they are not independent in general.

In order to help verification of the required Conditions (R1)-(R2) for a given MM model, we have proved the following propositions; see Web Appendix B for their proofs.
\begin{prop}\label{th-R1}
    Under the MM model given in \eqref{mmnlr}, if the parameter space for $\beta_2$ is $[0,\infty)$ and $x_i>0$ for all $i$, then Condition (R1) is satisfied.
\end{prop}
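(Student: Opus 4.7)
My plan is to verify Condition (R1) by direct computation, exploiting the simple closed form of the MM mean function $\mu(x_i,\bm\beta)=\beta_1 x_i/(\beta_2+x_i)$ together with the sign restrictions $\beta_2\ge 0$ and $x_i>0$. First I would handle the smoothness requirement: since $\beta_2+x_i>0$ throughout the parameter space, $\mu$ is a rational function of $\bm\beta$ whose denominator never vanishes, hence is real-analytic in $\bm\beta$ on all of $\bm{\Theta}_\beta$, and in particular is thrice continuously differentiable.

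For the uniform bounds, I would choose $\omega$ to be a bounded open neighbourhood of the true parameter whose closure lies in $\mathbb{R}\times(0,\infty)$, so that on $\omega$ one has $|\beta_1|\le M$ and $\beta_2\ge c>0$ for some constants $M,c$ (such an $\omega$ exists as soon as $\beta_{2,0}>0$, which is the physically meaningful regime for MM kinetics). Reading off the first-order partials from \eqref{mu.},
\[
(\bm{\dot\mu})_{i1}=\frac{x_i}{\beta_2+x_i}\in(0,1],\qquad (\bm{\dot\mu})_{i2}=-\frac{\beta_1 x_i}{(\beta_2+x_i)^2},
\]
the bound $|(\bm{\dot\mu})_{i1}|\le 1$ is immediate. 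For the second, I would apply AM-GM: $\beta_2+x_i\ge 2\sqrt{\beta_2 x_i}$ yields $x_i/(\beta_2+x_i)^2\le 1/(4\beta_2)\le 1/(4c)$ uniformly in $x_i>0$, so that $|(\bm{\dot\mu})_{i2}|\le M/(4c)$. Setting $K=\max\{1, M/(4c)\}$ then gives the uniform bound $|(\bm{\dot\mu})_{ij}|\le K$ valid for every $i$, every $n$, and every $\bm\beta\in\omega$.

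The three displayed inequalities in (R1) then follow mechanically. Inequality \eqref{R11} is immediate from $|(\bm{\dot\mu})_{ij}|\le K$ and $|(\bm{\dot\mu})_{ij}(\bm{\dot\mu})_{ik}|\le K^2$. For \eqref{R12}, every summand is bounded by $K^3$, so the average of $n$ such terms is itself at most $K^3=O(1)$.

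The step that I expect to carry the real content of the argument is the uniform control of $(\bm{\dot\mu})_{i2}$ across all $x_i>0$. A naive estimate such as $x_i/(\beta_2+x_i)^2\le 1/x_i$ blows up as $x_i\to 0$ and would force the additional assumption that the design is bounded below. Replacing that bound with the AM-GM one exchanges the potential $x_i$-singularity for a $\beta_2$-singularity, which is harmless once $\omega$ sits in the interior of the parameter space. This is essentially the only point in the proof where the hypotheses $\beta_2\ge 0$ and $x_i>0$ play any quantitative role.
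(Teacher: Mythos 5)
Your proof is correct and follows essentially the same route as the paper's: bound $|(\bm{\dot\mu})_{i1}|$ by $1$ and $|(\bm{\dot\mu})_{i2}|$ by $|\beta_1|/(4\beta_2)$ uniformly in $x_i>0$, then read off the conditions in (R1); the paper obtains the same bound $x/(\beta_2+x)^2\le 1/(4\beta_2)$ by maximizing via calculus where you use AM--GM, a cosmetic difference. If anything your write-up is slightly more complete (it addresses the thrice-differentiability and the triple-product bound \eqref{R12} explicitly, and is more honest about the need to keep $\beta_2$ bounded away from $0$, a point the paper's own bound $1/(4\beta_2)$ also implicitly requires).
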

\begin{prop}\label{th-R2}
    In order to satisfy Condition \eqref{R21} in (R2) for the MM model given by \eqref{mmnlr}, it is necessary that the values $x_i$'s are not all same, $n\geq 2$, and $\beta_1 \neq 0$. 
\end{prop}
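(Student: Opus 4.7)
The plan is to prove each of the three necessity claims by contrapositive, leveraging the fact that the $2\times 2$ matrix $\bm{\dot{\mu}(\beta)}^T\bm{\dot{\mu}(\beta)}$ in \eqref{F.TF.} is symmetric and positive semi-definite. For such a matrix, the minimum eigenvalue is strictly positive if and only if the determinant
\[
D_n := \Bigl(\sum_{i=1}^n a_i^2\Bigr)\Bigl(\sum_{i=1}^n b_i^2\Bigr) - \Bigl(\sum_{i=1}^n a_i b_i\Bigr)^2
\]
is strictly positive. By the Cauchy--Schwarz inequality $D_n \geq 0$, with equality precisely when the two vectors $(a_1,\ldots,a_n)$ and $(b_1,\ldots,b_n)$ are linearly dependent in $\mathbb{R}^n$. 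Hence failure of \eqref{R21} for some fixed $n$ (yielding zero infimum) is equivalent to such a linear dependence.

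The one algebraic identity driving the whole argument is
\[
b_i \;=\; \frac{\beta_1}{\beta_2+x_i}\,a_i,
\]
which shows that $(b_i)$ is obtained from $(a_i)$ by componentwise multiplication by the weights $c_i := \beta_1/(\beta_2+x_i)$. Consequently, $(a_i)$ and $(b_i)$ are linearly dependent exactly when either $\beta_1 = 0$ (so all $b_i$ vanish), or the $c_i$'s take a common value across the indices with $a_i\neq 0$ (which, for the $x_i$'s lying in the effective support where $a_i\neq 0$, forces the $x_i$'s themselves to be constant).

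From this single observation, each of the three necessity claims drops out directly. First, if $\beta_1=0$, the second column of $\bm{\dot{\mu}(\beta)}$ is identically zero, so $D_n=0$ for every $n$ and the infimum in \eqref{R21} is zero. Second, if all $x_i$'s equal a common value $x$, then both $a_i$ and $b_i$ are constant in $i$, and a one-line substitution gives $D_n = n^2 a^2 b^2 - (nab)^2 = 0$. Third, if $n=1$ the matrix $\bm{\dot{\mu}(\beta)}^T\bm{\dot{\mu}(\beta)}$ is the outer product of a single $2$-vector with itself, hence of rank at most one, so its minimum eigenvalue is $0$.

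I do not anticipate any real obstacle here: the entire argument is a short equality-case analysis of Cauchy--Schwarz. The only conceptual step is spotting the multiplicative identity $b_i = [\beta_1/(\beta_2+x_i)]\,a_i$, which cleanly isolates the three precise ways linear dependence between the columns of $\bm{\dot{\mu}(\beta)}$ can be forced and thereby pinpoints the three necessary conditions in the proposition.
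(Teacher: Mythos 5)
Your proposal is correct and follows essentially the same route as the paper: both reduce Condition \eqref{R21} to positivity of the determinant of the $2\times 2$ Gram matrix in \eqref{F.TF.} and then invoke the equality case of the Cauchy--Schwarz inequality, which holds precisely when $\beta_1=0$ or the $x_i$'s are all equal (with $n=1$ forcing a rank-one matrix). Your explicit identity $b_i=[\beta_1/(\beta_2+x_i)]\,a_i$ is just a cleaner way of stating the proportionality argument the paper uses.
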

Note that, Proposition \ref{th-R1} provides a set of easily verifiable sufficient conditions for (R1) which are relatively stricter and not necessary for (R1). It may be possible to verify (R1) directly even if these sufficient conditions are not satisfied (although they are assumed to be true in most applications of the MM model). On the other hand, Proposition \ref{th-R2} provides a set of conditions that are necessary for (R2) to hold.

\subsection{Influence functions}\label{IFMM}
We will now study the IFs of the MDPDEs of the parameters of the MM model following the general theory developed in Section \ref{IFNLR}. Using \eqref{IFbeta} and \eqref{F.TF.}, the expressions of IFs of minimum DPD functionals $T_\alpha^{\beta_1}$ and $T_\alpha^{\beta_2}$ corresponding to the parameters $\beta_1$ and $\beta_2$ of the MM model \eqref{mmnlr}, respectively, simplify to 
\begin{equation} \label{IFth1mm}
    IF_{i_0}(t_{i_0}, T_\alpha^{\beta_1}, \bm{F_\theta}) = 
    \frac{(1+\alpha)^\frac{3}{2} (t_{i_0}-\mu(x_{i_0}, \bm{\beta})) e^{-\frac{\alpha}{2\sigma^2}(t_{i_0}-\mu(x_{i_0}, \bm{\beta}))^2}}{\det\left(\bm{\dot{\mu}}(\bm{\beta})^T\bm{\dot{\mu}}(\bm{\beta})\right)}\frac{\beta_1^2 x_{i_0}}{(\beta_2 + x_{i_0})^2}\sum_{j=1}^n\frac{x_j^2(x_{i_0}-x_j)}{(\beta_2+x_j)^4},
\end{equation}
\begin{equation} \label{IFth2mm}
    IF_{i_0}(t_{i_0}, T_\alpha^{\beta_2}, \bm{F_\theta}) = \frac{(1+\alpha)^\frac{3}{2} (t_{i_0}-\mu(x_{i_0}, \bm{\beta})) e^{-\frac{\alpha}{2\sigma^2}(t_{i_0}-\mu(x_{i_0}, \bm{\beta}))^2}}{\det\left(\bm{\dot{\mu}}(\bm{\beta})^T\bm{\dot{\mu}}(\bm{\beta})\right)}\frac{\beta_1 x_{i_0}}{(\beta_2 + x_{i_0})^2}\sum_{j=1}^n\frac{x_j^2(x_{i_0}-x_j)}{(\beta_2+x_j)^3},
\end{equation}
when there is contamination only in $i_0$-th observation at the contamination point $t_{i_0}$. \btxt{Further, the corresponding IF of the minimum DPD functional $T_\alpha^{\sigma^2}$, denoted by $IF_{i_0}(t_{i_0}, T_\alpha^{\sigma^2}, \bm{F_\theta})$, remains the same as given in the expression \eqref{IFsigma}, with $\mu(\bm{x}_{i_0}, \bm{\beta}) = \beta_1 x_{i_0}/(\beta_2 + x_{i_0})$.} For infinitesimal contamination in all observations at the contamination points in $\bm{t}=(t_1, \cdots, t_n)^T$, these IFs are given by
\begin{equation} \label{IFadmm}
    IF(\bm{t}, T_\alpha^{\beta_k}, \bm{F_\theta}) = \sum_{i=1}^n IF_i(t_i, T_\alpha^{\beta_k}, \bm{F_\theta}),\ \ k=1,2, \btxt{\mbox{ and } IF(\bm{t}, T_\alpha^{\sigma^2}, \bm{F_\theta}) = \sum_{i=1}^n IF_i(t_i, T_\alpha^{\sigma^2}, \bm{F_\theta})}.
\end{equation}

\begin{figure}[!b]
     \centering
     \begin{subfigure}[b]{0.43\textwidth}
         \centering
         \includegraphics[width=\textwidth]{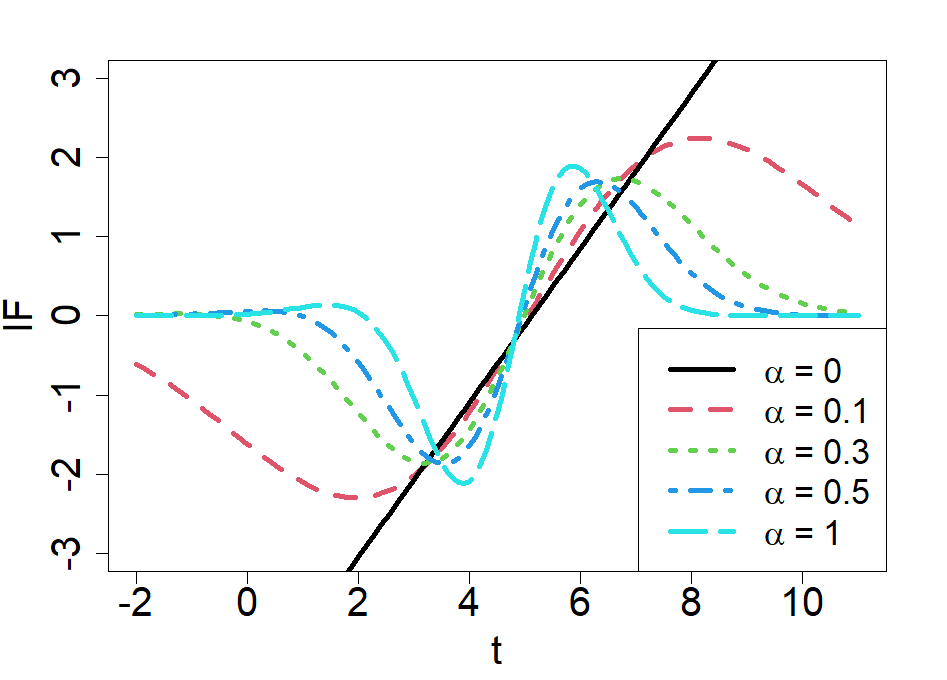}
         \caption{$IF(\bm{t}^*, T_\alpha^{\beta_1}, \bm{F_\theta})$, under Setup 1}
         \label{fig:IF-ad-S1-th1}
     \end{subfigure}
     \hfill
     \begin{subfigure}[b]{0.43\textwidth}
         \centering
         \includegraphics[width=\textwidth]{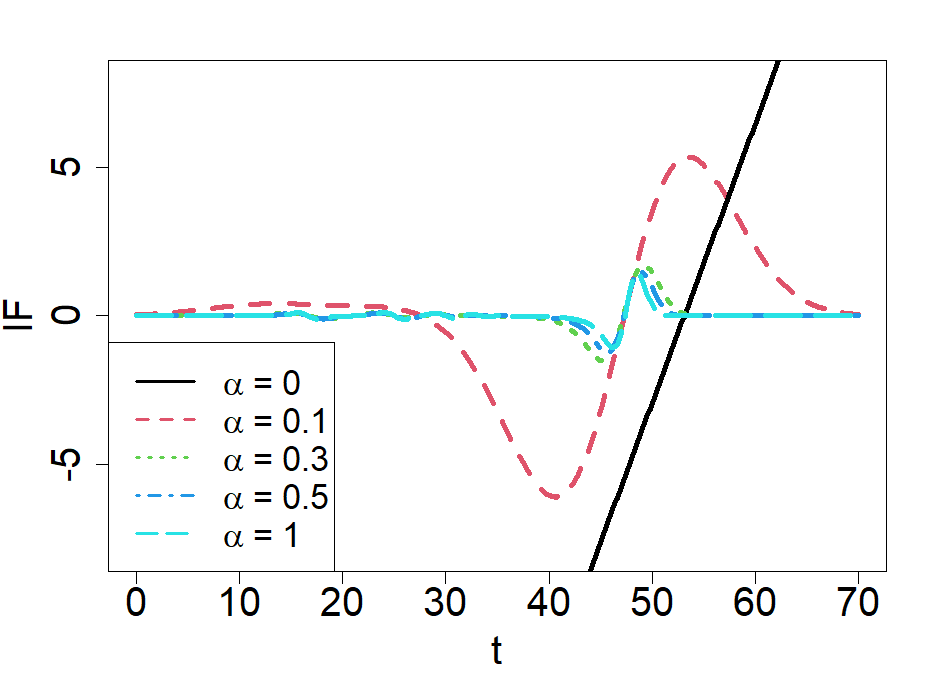}
         \caption{$IF(\bm{t}^*, T_\alpha^{\beta_1}, \bm{F_\theta})$, under Setup 2}
         \label{fig:IF-ad-S2-th1}
     \end{subfigure}
     \hfill
     \begin{subfigure}[b]{0.43\textwidth}
         \centering
         \includegraphics[width=\textwidth]{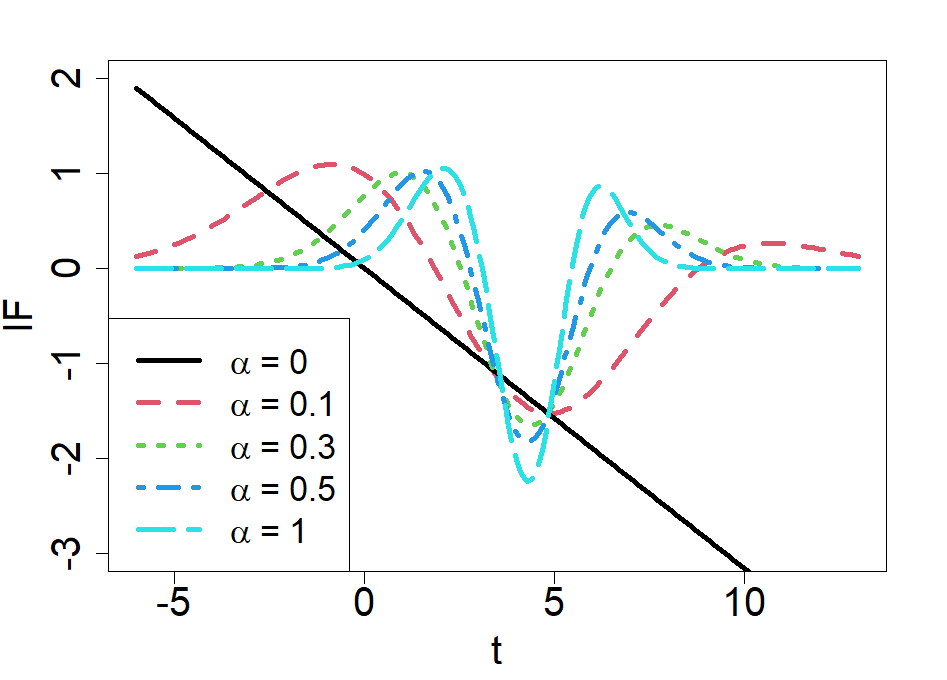}
         \caption{$IF(\bm{t}^*, T_\alpha^{\beta_2}, \bm{F_\theta})$, under Setup 1}
         \label{fig:IF-ad-S1-th2}
     \end{subfigure}
     \hfill
     \begin{subfigure}[b]{0.43\textwidth}
         \centering
         \includegraphics[width=\textwidth]{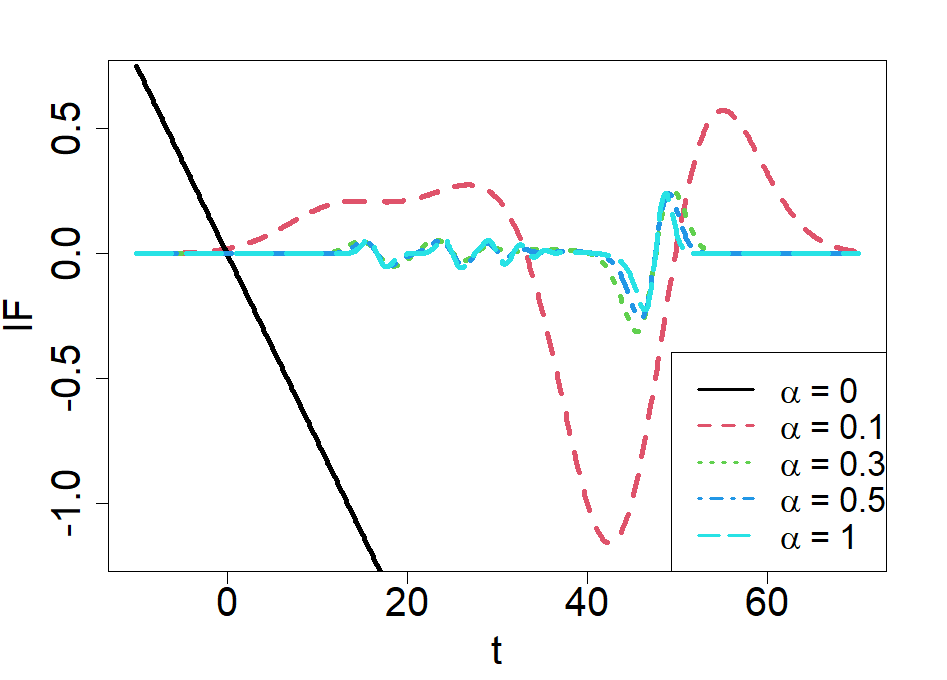}
         \caption{$IF(\bm{t}^*, T_\alpha^{\beta_2}, \bm{F_\theta})$, under Setup 2}
         \label{fig:IF-ad-S2-th2}
     \end{subfigure}
     \hfill
     \begin{subfigure}[b]{0.43\textwidth}
         \centering
         \includegraphics[width=\textwidth]{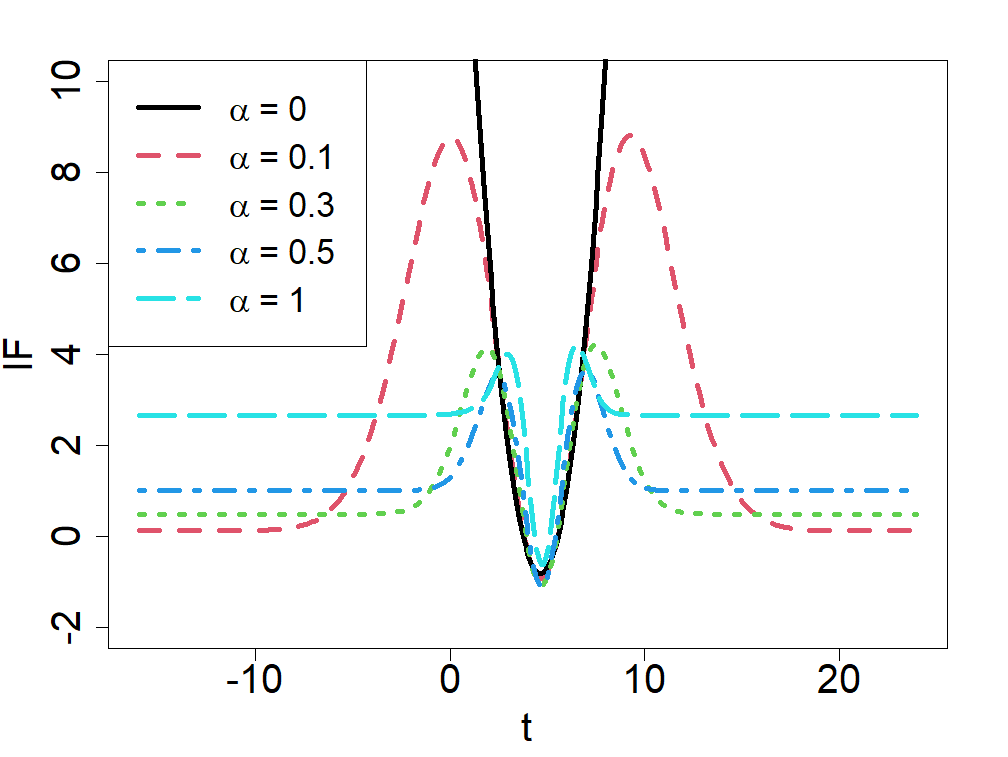}
         \caption{$IF(\bm{t}^*, T_\alpha^{\sigma^2}, \bm{F_\theta})$, under Setup 1}
         \label{fig:IF-ad-S1-sig}
     \end{subfigure}
     \hfill
     \begin{subfigure}[b]{0.43\textwidth}
         \centering
         \includegraphics[width=\textwidth]{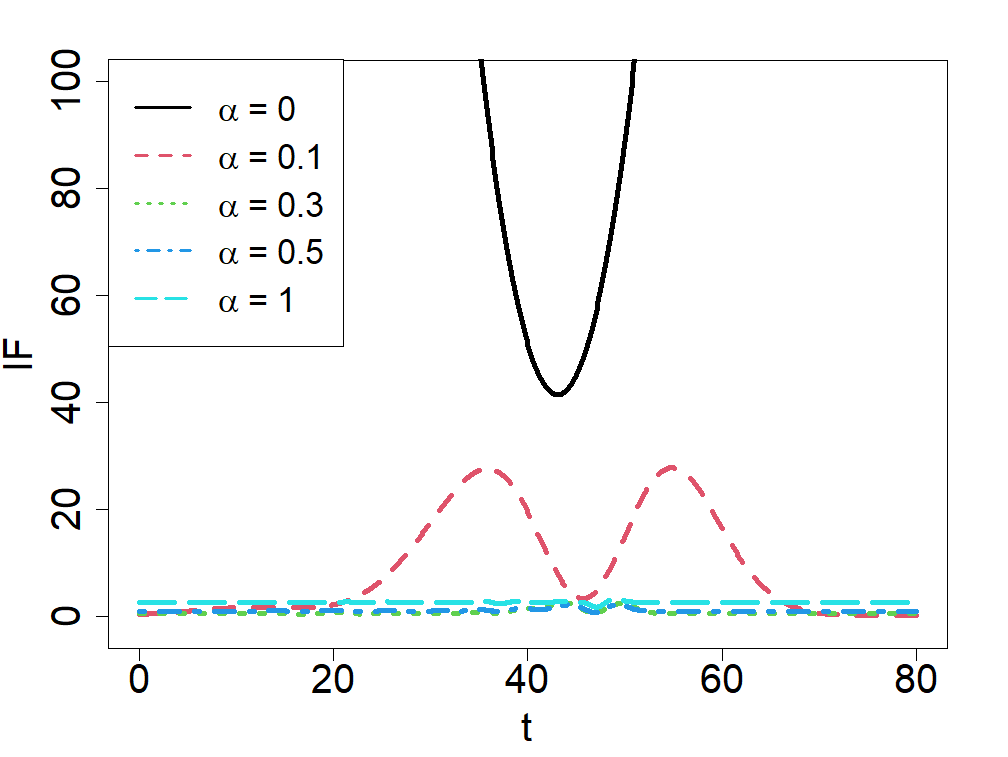}
         \caption{$IF(\bm{t}^*, T_\alpha^{\sigma^2}, \bm{F_\theta})$, under Setup 2}
         \label{fig:IF-ad-S2-sig}
     \end{subfigure}
     \caption{\btxt{IFs of the MDPDEs of the parameters $\beta_1$, $\beta_2$, and $\sigma^2$ in the MM model, with contamination in all observations, at the same contamination point, i.e., for $\bm{t}^* = (t,\ldots,t)^T$.}}
     \label{fig:IF-S12}
\end{figure}

For illustrations, we have plotted these IFs of the MDPDEs under the MM model, by generating data from the following two particular setups:
\begin{itemize}
	\item \textbf{Setup 1}: $n = 50,\ (\beta_1, \beta_2, \sigma) = (5,1,1),$ and $x_i = i$, for all $i=1,2,\hdots,n$.
	\item \textbf{Setup 2}: $n = 50,\ (\beta_1, \beta_2, \sigma) = (50,2,2),$ and $(x_1,x_2,\hdots,x_n)$ is a random sample of size $n$ from Uniform$(0,40)$.
\end{itemize}

The resulting IFs for contaminations in all observations are presented in the \btxt{Figure \ref{fig:IF-S12}} for both Setups 1-2, where the contamination points are taken as $\bm{t}^*=(t,\hdots,t)$; those obtained for contamination in particular observation are provided in \btxt{Figures C.1-C.2} of Web Appendix C. From all these figures, it is clear that the IFs of the MDPDEs under the MM model are bounded for $\alpha>0$ and are redecending in nature for increasing values of $\alpha$. For $\alpha=0$, the IFs are unbounded as expected, since the MDPDE with $\alpha=0$ is the same as the non-robust MLE also for the MM models.

\subsection{Finite sample performances of the MDPDE}\label{Simulation Studies}
We now present an extensive simulation study to justify the improved performances of the MDPDEs of parameters of the MM model \eqref{mmnlr}. For this purpose, we have simulated data following the two setups (Setup 1 and Setup 2) as mentioned in the Section \ref{IFMM}. \btxt{Note that the choice of true parameter values for Setup 1, i.e., $\bm{\beta}=(5,1)$, is made directly following Tabatabai et al.~\cite{tabatabai2014new}. The parameter values of the second setup are chosen to be moderately large, namely (50, 2), to closely reproduce real-life examples following the discussions in Marasovic et al.~\cite{marasovic2017robust}}.  Additionally, we have incorporated data contamination by multiplying the observed values of response or covariate (or both) by a fixed number, say $C$, in a certain proportion (say $e_c$) of cases selected randomly from the full sample. We report the results for the choice of $C=5$ under Setup 1 and for $C=2$ under Setup 2. \btxt{In each case, we have used the \texttt{R} function \texttt{optim}\footnote{\btxt{\url{https://stat.ethz.ch/R-manual/R-devel/library/stats/html/optim.html}}} (with the BFGS method) to compute the MDPDEs by minimizing the objective function in \eqref{Hnlr}, where the initial values of the parameters are chosen to be their true values; this helped us to compare the actual behavior of the estimates, avoiding any numerical instability and the effects of initial values.}

For each setup, we present the absolute empirical bias (EBias) and the empirical mean square error (EMSE) of the MDPDE and different other robust estimators of the parameters of the MM model \eqref{mmnlr}, based on $1000$ replications. \btxt{For a generic model parameter $\theta$ (which is either of $\beta_1$, $\beta_2$ or $\sigma$ in our case) having true value $\theta_0$, the summary measures EBias and EMSE are computed as
\begin{equation*}
    EBias = \left|\frac{1}{1000} \sum_{r=1}^{1000} \widehat{\theta}^{(r)} - \theta_0\right|, \mbox{ and } EMSE = \frac{1}{1000} \sum_{r=1}^{1000} \left(\widehat{\theta}^{(r)} - \theta_0\right)^2,
\end{equation*}
where $\widehat{\theta}^{(1)},\ldots, \widehat{\theta}^{(1000)}$ denote the estimated values of the parameter $\theta$ across 1000 replications.} We have repeated this exercise for pure data under both the setups, as well as contaminated data with different choices of the contamination proportion $e_c$.  The results for $e_c=0\%$ (pure data), $e_c=10\%$ (light contamination) and $e_c=40\%$ (heavy contamination) are reported in Table \ref{tab:S2BM} (for Setup 2) and \btxt{Table C.4} of Web Appendix C (for Setup 1); the results obtained for $e_c = 20\%$ and $30\%$ (moderate contamination) are reported in \btxt{Tables C.5-C.6} of Web Appendix C for both setups.

In all these simulations, we have compared the performances of the MDPDEs at different values of $\alpha$ with the OLS (which is the same as the MDPDE at $\alpha=0$), the MOM estimator, the KPS estimator with various choices of its tuning parameter $w$, and M-estimators based on the Huber’s and the Tukey’s loss/weight functions (described in Web Appendix A.2). 
\btxt{These competing estimators are computed following the existing algorithms provided in their respective reference papers, and a comparison of the runtime complexity of these estimates with our proposed MDPDEs is provided in Web Appendix C.3 (for the simulation Setup 2).}

In addition, we have also computed the average prediction errors for the model fitted with different estimators, which we define based only on the clean observations as given by
\begin{equation*}
    \frac{1}{1000}\sum_{r=1}^{1000}\frac{1}{|\mathcal{A}_r|}\sum_{i\in\mathcal{A}_r}\left(y_i - \mu\left(\widehat{\bm{\beta}}^{(r)}, x_i\right)\right)^2,
\end{equation*}
where $\mathcal{A}_r = \{i:$ $i$-th data-point is not contaminated in the dataset generated at $r$-th replication, $i=1,2,\ldots,n\}$ and $\widehat{\bm{\beta}}^{(r)}$ be the parameter estimate obtained at $r$-th replication for a particular method. \btxt{This measures the expected prediction accuracy obtained from a single dataset using any of the robust estimators.}

\begin{sidewaystable}
    \centering
    \caption{absolute empirical bias and empirical MSE obtained by different estimators in simulations with Setup 2 and different amounts of contamination (the minimum values obtained for each parameter are highlighted in bold font)}
    \resizebox{\textwidth}{!}
    {\begin{tabular}{|l|rrr|rrr|rrr|rrr|rrr|}
        \hline
        Outlier Direction & \multicolumn{3}{c|}{No outlier (pure data)} & \multicolumn{6}{c|}{Contamination only in response} & \multicolumn{6}{c|}{Contamination in both response and covariate} \\
        Outlier proportion & ~ & ~&~ & \multicolumn{3}{c|}{10\%} & \multicolumn{3}{c|}{40\%} & \multicolumn{3}{c|}{10\%} & \multicolumn{3}{c|}{40\%} \\ \hline
        ~ & $\beta_1$  & $\beta_2$  & $\sigma$  & $\beta_1$ & $\beta_2$  & $\sigma$ & $\beta_1$ & $\beta_2$ & $\sigma$ & $\beta_1$ & $\beta_2$  & $\sigma$ & $\beta_1$ & $\beta_2$  & $\sigma$ \\ \hline \hline
        \multicolumn{16}{|l|}{\textbf{\underline{Empirical Bias}}} \\
        \multicolumn{16}{|l|}{} \\
        OLS & \textbf{0.0053} & \textbf{0.0021} & ~ & 5.3409 & 0.1399 & ~ & 20.73 & 0.2362 & ~ & 11.99 & 2.7578 & ~ & 48.00 & 11.62 & ~ \\
        MOM & 0.0227 & 0.0089 & ~ & 1.3296 & 0.0845 & ~ & 22.67 & 0.7303 & ~ & 2.8428 & 0.6894 & ~ & 55.55 & 12.25 & ~ \\
        KPS(w=0.5) & 0.0154 & 0.0047 & ~ & 0.0061 & 0.0055 & ~ & 0.5786 & 0.5918 & ~ & 0.0221 & 0.0128 & ~ & 0.3363 & 0.1575 & ~ \\
        KPS(w=1) & 0.0359 & 0.0096 & ~ & 0.0225 & 0.0042 & ~ & 0.2046 & 0.0864 & ~ & 0.0114 & 0.0001 & ~ & 0.2167 & 0.1067 & ~ \\
        KPS(w=1.5) & 0.0637 & 0.0176 & ~ & 0.0455 & 0.0107 & ~ & 0.0415 & 0.0197 & ~ & 0.0404 & 0.0092 & ~ & 0.1108 & 0.0518 & ~ \\
        M-Est (Huber) & 0.0403 & 0.0147 & ~ & 0.3804 & 0.0233 & ~ & 8.4130 & 0.1436 & ~ & 0.8063 & 0.1190 & ~ & 37.64 & 6.9652 & ~ \\
        M-Est (Tukey) & 0.0423 & 0.01525 & ~ & 0.0364 & 0.0103 & ~ & 4.7816 & 0.0202 & ~ & 0.0197 & 0.0031 & ~ & 37.61 & 7.0127 & ~ \\ \hline
        MDPDE($\alpha$=0.05) & 0.0056 & 0.0023 & 0.0521 & 0.0139 & 0.0684 & 0.5923 & 14.09 & 0.0759 & 13.80 & 0.4244 & 0.0106 & 0.9497 & 41.15 & 9.9069 & 13.49 \\
        MDPDE($\alpha$=0.1) & 0.0060 & 0.0025 & \textbf{0.0515} & 0.0168 & 0.0089 & 0.0212 & 2.4492 & 0.0940 & 2.9793 & 0.0429 & 0.0218 & \textbf{0.0012} & 11.45 & 2.5475 & 4.4030 \\
        MDPDE($\alpha$=0.2) & 0.0075 & 0.0030 & 0.0534 & \textbf{0.0001} & 0.0019 & 0.0268 & 0.0468 & 0.0288 & \textbf{0.1563} & 0.0166 & 0.0094 & 0.0192 & 0.1417 & 0.0779 & \textbf{0.2318} \\
        MDPDE($\alpha$=0.3) & 0.0094 & 0.0036 & 0.0583 & 0.0048 & \textbf{0.0001} & 0.0206 & 0.0240 & 0.0144 & 0.2089 & 0.0084 & 0.0059 & 0.0158 & 0.1111 & 0.0538 & 0.2487 \\
        MDPDE($\alpha$=0.5) & 0.0144 & 0.0048 & 0.0733 & 0.0089 & 0.0012 & \textbf{0.0049} & 0.0043 & 0.0063 & 0.4106 & \textbf{0.0003} & 0.0028 & 0.0026 & 0.0706 & 0.0356 & 0.4304 \\
        MDPDE($\alpha$=0.7) & 0.0186 & 0.0059 & 0.0914 & 0.0127 & 0.0023 & 0.0097 & \textbf{0.0005} & 0.0043 & 0.6268 & 0.0057 & \textbf{0.0007} & 0.0107 & 0.0566 & 0.0292 & 0.6364 \\
        MDPDE($\alpha$=1) & 0.0270 & 0.0081 & 0.1198 & 0.0207 & 0.0051 & 0.0257 & 0.0038 & \textbf{0.0030} & 0.9168 & 0.0160 & 0.0030 & 0.0257 & \textbf{0.0453} & \textbf{0.0241} & 0.9153 \\ \hline \hline
        \multicolumn{16}{|l|}{\textbf{Empirical MSE}} \\
        \multicolumn{16}{|l|}{} \\
        OLS & \textbf{0.2960} & \textbf{0.0243} & ~ & 36.82 & 0.6774 & ~ & 454.68 & 1.3956 & ~ & 185.17 & 14.72 & ~ & 2515.37 & 180.11 & ~ \\
        MOM & 0.6553 & 0.0534 & ~ & 6.1540 & 0.2222 & ~ & 648.93 & 7.1888 & ~ & 27.66 & 2.6532 & ~ & 3510.54 & 248.81 & ~ \\
        KPS(w=0.5) & 0.3349 & 0.0271 & ~ & 0.3595 & 0.0290 & ~ & 2.0150 & 0.7564 & ~ & 0.3629 & 0.0298 & ~ & 17.01 & 1.6779 & ~ \\
        KPS(w=1) & 0.4660 & 0.0373 & ~ & 0.4538 & 0.0367 & ~ & 0.8494 & 0.0884 & ~ & 0.4522 & 0.0368 & ~ & 0.7110 & 0.0748 & ~ \\
        KPS(w=1.5) & 0.7114 & 0.0554 & ~ & 0.6407 & 0.0522 & ~ & 0.6520 & 0.0516 & ~ & 0.6290 & 0.0512 & ~ & 0.6813 & 0.0600 & ~ \\
        M-Est (Huber) & 0.3165 & 0.0267 & ~ & 0.5872 & 0.0404 & ~ & 146.57 & 1.6212 & ~ & 1.1252 & 0.0586 & ~ & 1468.90 & 52.26 & ~ \\
        M-Est (Tukey) & 0.3173 & 0.0267 & ~ & 0.3406 & 0.0284 & ~ & 159.54 & 1.8608 & ~ & 0.3476 & 0.0293 & ~ & 1478.86 & 53.08 & ~ \\ \hline
        MDPDE($\alpha$=0.05) & 0.2964 & 0.0245 & \textbf{0.0445} & 0.5847 & 0.0678 & 4.7267 & 294.64 & 0.8267 & 268.86 & 2.8974 & 0.0959 & 7.0403 & 2148.77 & 153.35 & 212.34 \\
        MDPDE($\alpha$=0.1) & 0.2986 & 0.0248 & 0.0450 & \textbf{0.3369} & 0.0284 & 0.0548 & 45.49 & 0.1394 & 56.41 & \textbf{0.3460} & 0.0304 & 0.0603 & 561.17 & 36.60 & 69.36 \\
        MDPDE($\alpha$=0.2) & 0.3073 & 0.0257 & 0.0474 & 0.3402 & \textbf{0.0284} & \textbf{0.0516} & 0.6943 & 0.0554 & 0.4675 & 0.3471 & \textbf{0.0293} & \textbf{0.0529} & 2.0103 & 0.1239 & 0.4676 \\
        MDPDE($\alpha$=0.3) & 0.3206 & 0.0270 & 0.0512 & 0.3535 & 0.0297 & 0.0550 & \textbf{0.5938} & 0.0508 & \textbf{0.1592} & 0.3579 & 0.0302 & 0.0555 & \textbf{0.6463} & 0.0630 & \textbf{0.1902} \\
        MDPDE($\alpha$=0.5) & 0.3572 & 0.0304 & 0.0615 & 0.3883 & 0.0329 & 0.0645 & 0.5941 & \textbf{0.0489} & 0.3043 & 0.3902 & 0.0333 & 0.0649 & 0.6364 & 0.0579 & 0.3276 \\
        MDPDE($\alpha$=0.7) & 0.4025 & 0.0346 & 0.0738 & 0.4286 & 0.0366 & 0.0746 & 0.6092 & 0.0501 & 0.5571 & 0.4292 & 0.0369 & 0.0748 & 0.6462 & 0.0576 & 0.5725 \\
        MDPDE($\alpha$=1) & 0.4791 & 0.0412 & 0.0920 & 0.4945 & 0.0429 & 0.0882 & 0.6272 & 0.0515 & 1.0455 & 0.4935 & 0.0430 & 0.0881 & 0.6574 & \textbf{0.0573} & 1.0431 \\ \hline
    \end{tabular}}
    \label{tab:S2BM}
\end{sidewaystable}

\begin{table}
    \centering
    \caption{Average prediction errors obtained by different estimators and different amounts of contamination (the minimum values obtained for each parameter are highlighted in bold font)}
    \resizebox{\textwidth}{!}
    {\begin{tabular}{|l||r|rr|rr||r|rr|rr|}
    \hline
    Setup & \multicolumn{5}{c||}{Setup 1} & \multicolumn{5}{c|}{Setup 2}  \\ \hline
    Outlier direction & none & \multicolumn{2}{c|}{response} & \multicolumn{2}{c||}{response + covariate} & none & \multicolumn{2}{c|}{response} & \multicolumn{2}{c|}{response + covariate} \\
    Outlier proportion  & 0\%    & 10\%   & 40\%   & 10\%   & 40\%   & 0\%    & 10\%   & 40\%   & 10\%   & 40\%   \\ \hline\hline
    OLS & \textbf{0.961} & 4.910 & 57.02 & 6.236 & 16.40  & \textbf{3.828}  & 23.91  & 296.80 & 32.98  & 245.91 \\
    MOM & 1.974	& 14.02 & 4504.9 & 2.922 & 18.50 & 4.033 & 6.618 & 341.71 & 10.20 & 388.62\\
    KPS(w=0.5) & 0.967  & 1.017  & 551.94 & 0.963  & 3.874  & 3.845  & 3.829  & 9.372  & 3.831  & 5.760  \\
    KPS(w=1) & 0.983  & 0.997  & 2.155  & 0.978  & 3.817  & 3.915  & 3.876  & 3.872  & 3.878  & 3.852 \\
    KPS(w=1.5) & 1.009  & 1.528  & 5.944  & 1.939  & 1.554  & 4.032  & 3.976  & 3.775  & 3.967  & 3.806 \\
    M-Est (Huber) & 0.963  & 1.003  & 15.67  & 1.048  & 42.94  & 3.838  & 3.998  & 78.60  & 4.055  & 244.95 \\ 
    M-Est (Tukey)  & 0.963  & 0.958  & 10.61  & 0.957  & 43.92  & 3.838  & 3.818  & 64.45  & \textbf{3.821}  & 243.90 \\ \hline 
    MDPDE($\alpha$=0.05) & \textbf{0.961} & 3.147 & 54.59 & 6.333 & 18.58 & 3.829 & 4.414 & 202.00 & 4.785 & 211.52 \\
    MDPDE($\alpha$=0.1) & 0.961 & 0.965 & 52.27 & 1.819 & 16.72 & 3.830 & \textbf{3.817} & 40.69 & 3.826 & 65.29 \\
    MDPDE($\alpha$=0.2) & 0.962 & \textbf{0.960} & 47.68 & \textbf{0.957} & 13.97 & 3.835 & 3.820 & 3.924 & \textbf{3.823} & 4.056 \\
    MDPDE($\alpha$=0.3) & 0.964 & 0.968 & 42.11 & 0.959 & 11.58 & 3.842 & 3.826 & 3.763 & 3.828 & 3.802 \\
    MDPDE($\alpha$=0.5) & 0.971 & 0.965 & 4.665 & 0.964 & 4.348 & 3.861 & 3.844 & \textbf{3.760} & 3.846 & \textbf{3.789} \\
    MDPDE($\alpha$=0.7) & 0.984 & 1.000 & \textbf{0.943} & 0.972 & 1.056 & 3.886 & 3.866 & 3.768 & 3.867 & 3.789 \\
    MDPDE($\alpha$=1) & 0.995 & 0.987 & 0.963 & 0.985 & \textbf{0.943} & 3.929 & 3.902 & 3.776 & 3.903 & 3.795 \\ \hline
    \end{tabular}}
    \label{tab:S12APE}
\end{table}

Based on all these numerical results, we can easily see that the MDPDEs with suitable values of $\alpha$ perform the best among all the competitors considering all the cases of simulation setups. For lower contamination proportions, a close competitor of our MDPDE is the KPS estimator and the Tukey's M estimator, but their performance deteriorates as contamination proportion increases. 
This deterioration is more significant when there are contamination in both response and covariate. Further, 
for increasing proportion of contamination, the average prediction error gets minimized for a larger value of the tuning parameter $\alpha$ (refer to Table \ref{tab:S12APE} and \btxt{Table C.3} in Web Appendix C).

We would like to particularly mention that the median of means (MOM) method fails to perform well for almost all the contamination schemes. So, we have not considered the MOM estimator for further real data applications presented in this paper.

\subsection{Wald-type tests for validity of the Michaelis-Menten model}\label{mmtesting}
In the context of our MM model \eqref{mmnlr}, we assume $x_i > 0$, for all $i=1,2,\hdots,n$, as $x_i$'s are the substrate concentrations in a reaction. Now, $\beta_2 = 0$ implies that the mean/regression function is simply $\mu(x_i, \bm{\beta}) = \beta_1$, a constant, which is independent of the covariate $x_i$; Consequently, $y_i$’s simply become IID observations from a normal distribution with constant mean $\beta_1$ and variance $\sigma^2$. So, to test the validity of the MM model assumption, it is enough to test
\begin{equation}\label{testofth2}
    H_0: \beta_2 = 0 \mbox{ against } H_1:\beta_2 > 0.
\end{equation}
For this test we can use the test statistic $\widetilde{W}_{n,\alpha}$ given in \eqref{WTTSforSP}, i.e.,

\begin{equation}\label{mmsigniftestst}
    \widetilde{W}_{n,\alpha} = \frac{\widehat{\beta}_{2\alpha}}{\sqrt{v_{1\alpha}\widehat{\sigma}_\alpha^2s_{22}}} \xrightarrow{\mathcal{D}} \mathcal{N}(0,1), \mbox{ under } H_0.
\end{equation}
Then, for testing $H_0$ against the one-sided alternative $H_1$ in \eqref{testofth2}, $H_0$ will be rejected at level of significance $\gamma$ if the observed value of $\widetilde{W}_{n,\alpha}$ exceeds $z_\gamma$, the $(1-\gamma)$-th quantile of $\mathcal{N}(0,1)$. Similarly, one can construct Wald-type tests for testing of the parameter $\beta_1$ against one-sided or both-sided alternatives.

We have conducted simulation studies to study the performance of the Wald-type test for the above-discussed testing problem. In particular, we have computed the empirical level (ELevel) and the empirical power (EPower) of the test for various values of the tuning parameter $\alpha$ based on 1000 replications of simulated data; the simulation schemes for computations of ELevel and EPower are taken as follows:

\noindent\textbf{For ELevel:} We have simulated samples as per Setup 2, but with $(\beta_1,\beta_2,\sigma) = (20,0,1)$ and different sample sizes ($n$). So, $y_i$'s are IID $\mathcal{N}(20,1)$, i.e., the MM model is not significant for these (pure) data. To introduce contamination, $e_c$ proportion of $y_i$'s are randomly selected and each $y_i$ is replaced by a random number between $z_i$ and $y_i$, where $z_i = \frac{20x_i}{5+x_i}$ is the true mean function of an MM model with parameters $\beta_1 = 20$, $\beta_2 = 5$. We have chosen such a contamination scheme, to make the (non-robust) estimates of $\beta_2$ likely to be larger than 0, so that a non-robust test will reject the null hypothesis more than the expected due to the effect of data contamination, although the null hypothesis is true for the actual generated samples. The ideal values of ELevel, measured as the proportion of rejections, should be close to the level of the test, which is taken to be $5\%$ in our experiments.

\noindent\textbf{For EPower:} We have again simulated data following Setup 2 but with $(\beta_1,\beta_2,\sigma) = (20,1,2)$ and different sample sizes ($n$). To introduce contamination, $e_c$ proportion of $y_i$'s are randomly selected and each $y_i$ is replaced by a random number generated from Uniform$(0,20)$. In this case, the null hypothesis is actually false, i.e., the MM model is indeed valid. So, empirical power, measured as the proportion of rejections, should be high (close or equal to 1) for a good stable test.\\

The resulting values of the empirical levels and powers are plotted against the sample sizes for various values of tuning parameter $\alpha$ and various proportions ($e_c$) of contamination, which are presented in Figures \ref{fig:ELevel} and \ref{fig:EPower}, respectively.

\begin{figure}[!h]
     \centering
     \begin{subfigure}[b]{0.43\textwidth}
         \centering
         \includegraphics[width=\textwidth]{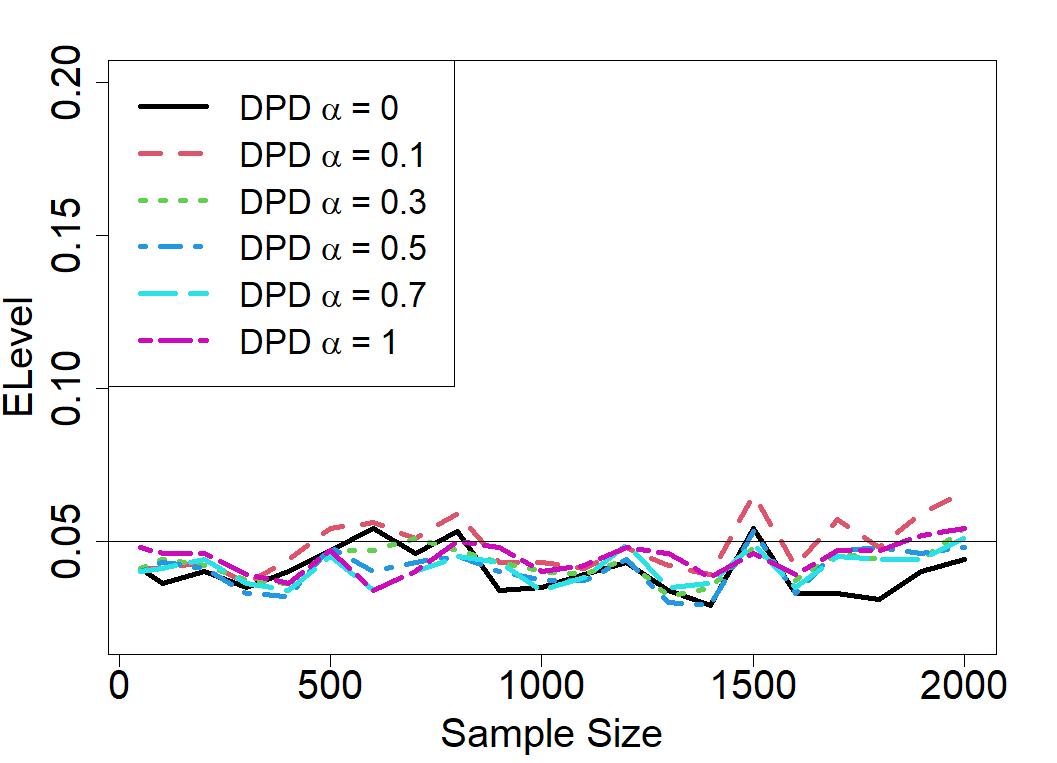}
         \caption{ELevel, 0\% contamination (pure data)}
         \label{fig:EL-0}
     \end{subfigure}
     \hfill
     \begin{subfigure}[b]{0.43\textwidth}
         \centering
         \includegraphics[width=\textwidth]{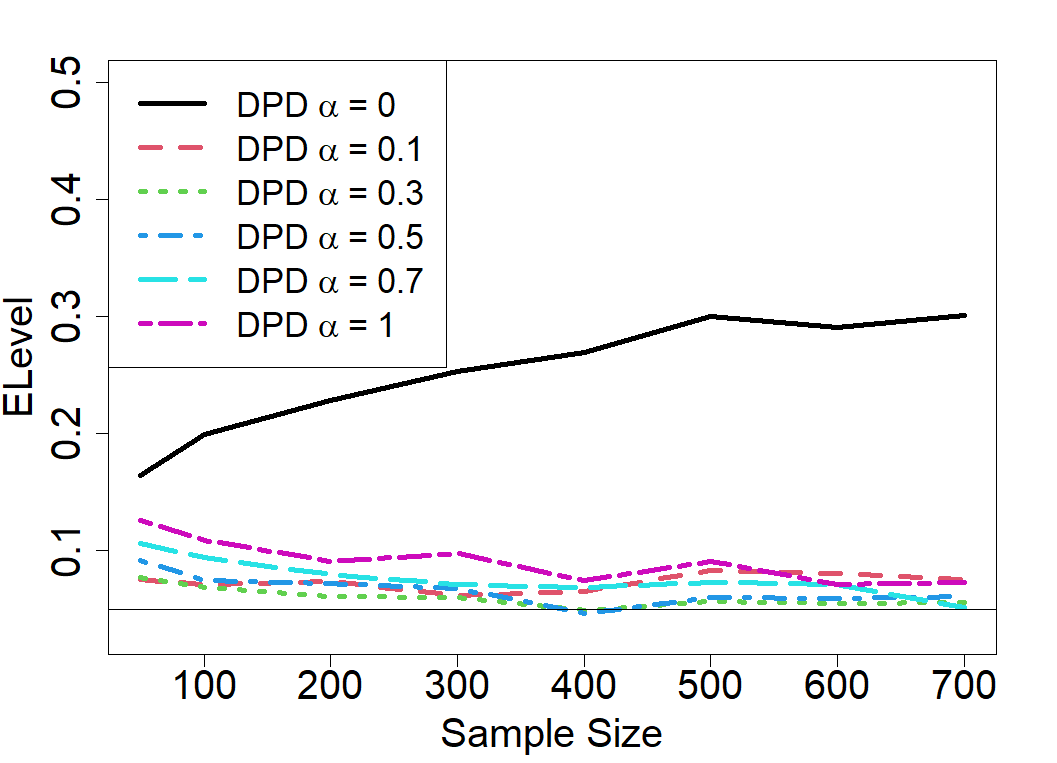}
         \caption{ELevel, 10\% contamination}
         \label{fig:EL-10}
     \end{subfigure}
     \hfill
     \begin{subfigure}[b]{0.43\textwidth}
         \centering
         \includegraphics[width=\textwidth]{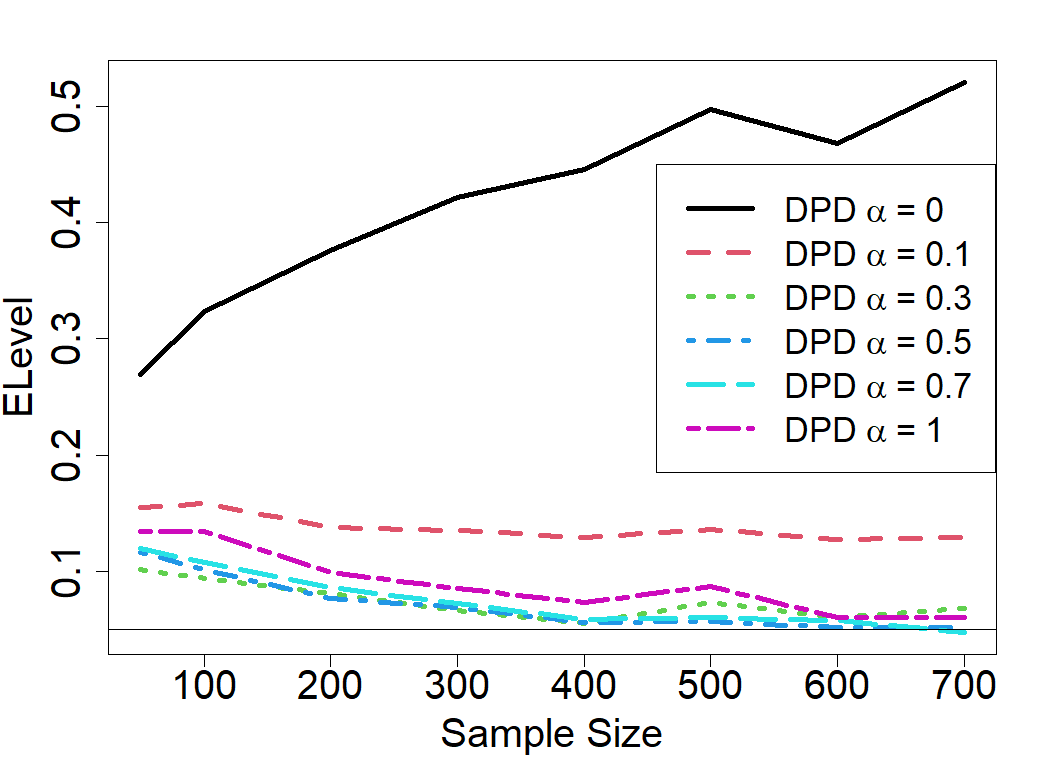}
         \caption{ELevel, 20\% contamination}
         \label{fig:EL-20}
     \end{subfigure}
     \hfill
     \begin{subfigure}[b]{0.43\textwidth}
         \centering
         \includegraphics[width=\textwidth]{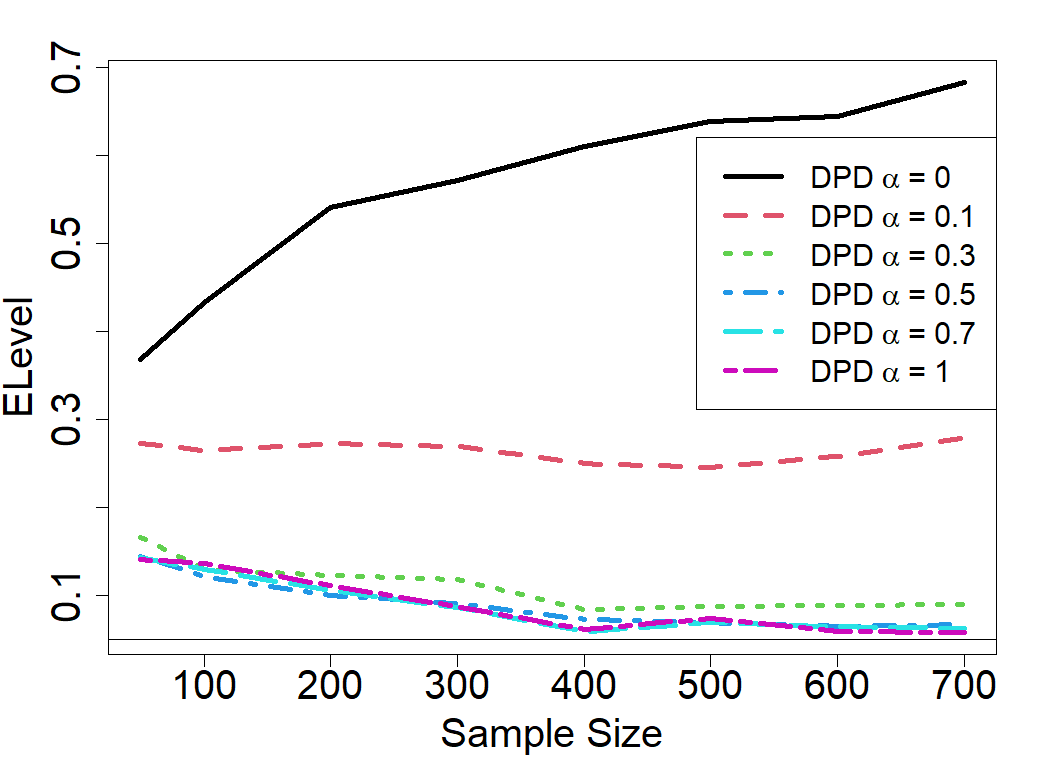}
         \caption{ELevel, 30\% contamination}
         \label{fig:EL-30}
     \end{subfigure}
     \caption{Empirical levels (ELevel) of the robust Wald-type tests}
     \label{fig:ELevel}
\end{figure}
\begin{figure}[h]
     \centering
     \begin{subfigure}[b]{0.43\textwidth}
         \centering
         \includegraphics[width=\textwidth]{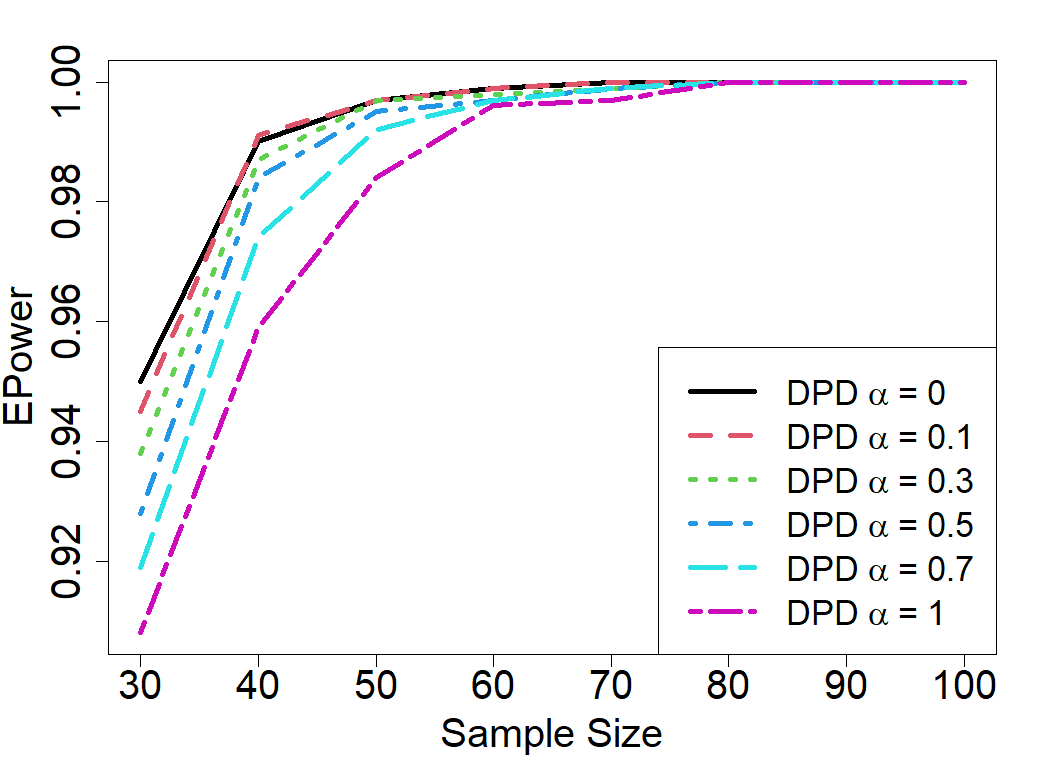}
         \caption{EPower, 0\% contamination (pure data)}
         \label{fig:EP-0}
     \end{subfigure}
     \hfill
     \begin{subfigure}[b]{0.43\textwidth}
         \centering
         \includegraphics[width=\textwidth]{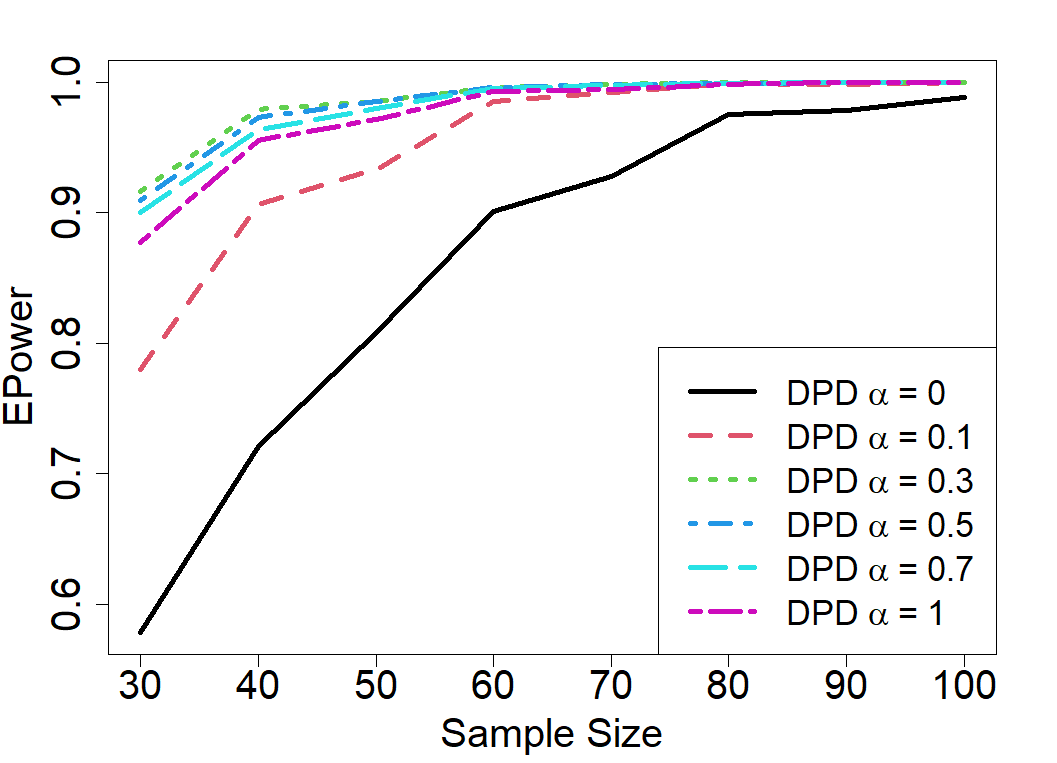}
         \caption{EPower, 10\% contamination}
         \label{fig:EP-10}
     \end{subfigure}
     \hfill
     \begin{subfigure}[b]{0.43\textwidth}
         \centering
         \includegraphics[width=\textwidth]{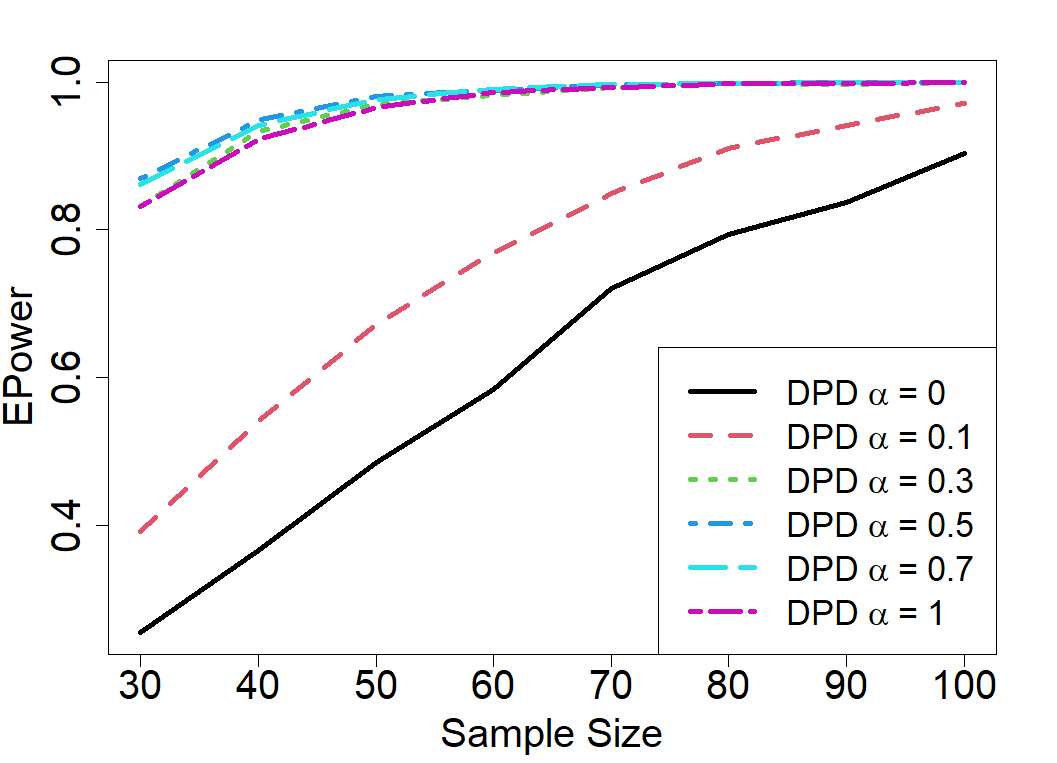}
         \caption{EPower, 20\% contamination}
         \label{fig:EP-20}
     \end{subfigure}
     \hfill
     \begin{subfigure}[b]{0.43\textwidth}
         \centering
         \includegraphics[width=\textwidth]{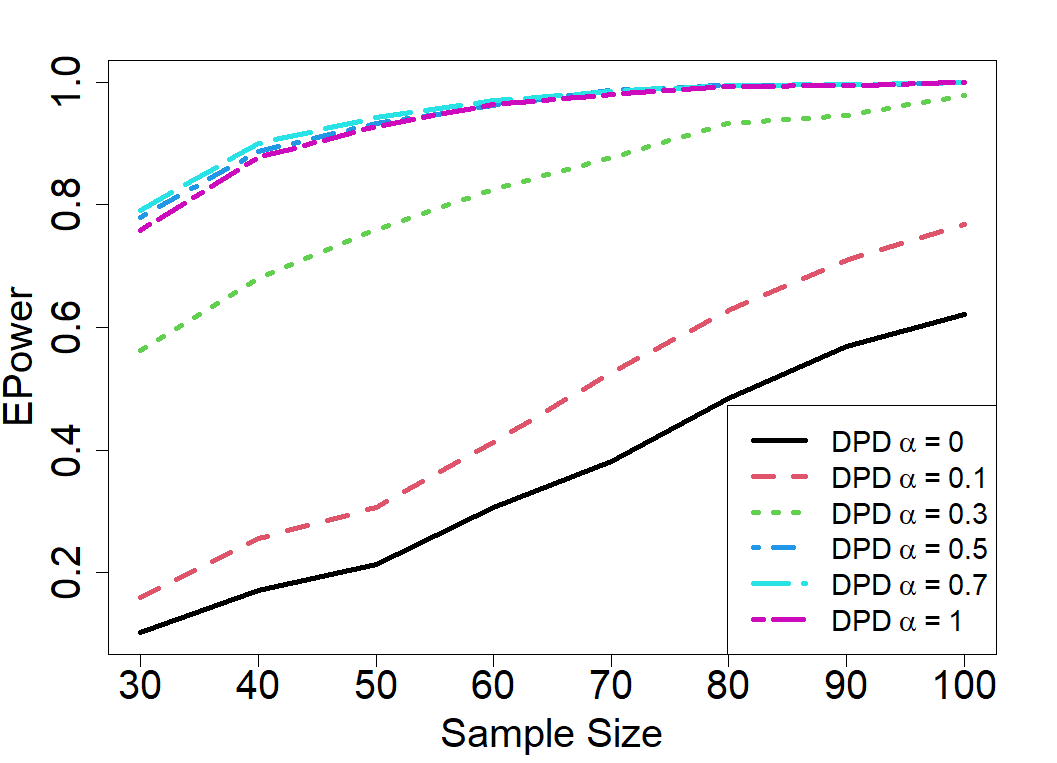}
         \caption{EPower, 30\% contamination}
         \label{fig:EP-30}
     \end{subfigure}
     \caption{Empirical powers (EPower) of the robust Wald-type tests}
     \label{fig:EPower}
\end{figure}

Figure \ref{fig:ELevel} shows that, under pure data, the empirical levels of the Wald test (at $\alpha=0$) and all the MDPDE-based Wald-type tests with $\alpha>0$ are quite close to each other and fluctuate closely around the desired $5\%$ level of significance. Further, for contaminated data, the empirical levels of the MDPDE-based Wald-type tests with moderate or large values of $\alpha$ $(\alpha\geq 0.3)$ tend to stabilize near $5\%$ level of significance, as sample size increases. But the empirical levels of the MLE-based Wald test fail to converge to the desired level under contamination due to its non-robust nature.

Figure \ref{fig:EPower} illustrates that, under pure data, the empirical powers of the Wald test and MDPDE-based Wald-type tests are all very close to 1 at a moderate sample size of $n=50$ or $60$, and clearly converge to 1 quite fast as the sample size increases. But for contaminated data, the empirical powers of the Wald tests decrease considerably with increase in the proportion of contamination, and the rate of consistency slows down significantly. However, the empirical powers of the Wald-type tests with $\alpha\geq 0.3$ decrease only a little at higher contamination proportion, but then they also regain their convergence rate to one very quickly as the sample size increases. These figures clearly illustrate significantly improved stability of our proposed MDPDE-based Wald-type tests with increasing values of $\alpha>0$ under possible data contamination.

\section{On the choice of optimum tuning parameter $\alpha$}\label{Optalpha}
\btxt{It is well studied that the tuning parameter $\alpha$ of the MDPDE acts as a trade-off between robustness and efficiency --- the robustness of the estimator under data contamination increases at a minimal cost of the loss in (asymptotic) efficiency under pure data as $\alpha$ increases. The same is also observed for the MDPDEs and the associated Wald-type tests under our NLR model \eqref{nlr} across all the simulation results discussed in the previous sections. So, while applying these DPD-based estimation and testing methods to a real dataset, it is very important to choose an optimal value of $\alpha$ providing the best possible `compromise' between robustness and efficiency, depending on the (unknown) amount of contamination in the given dataset.}

\btxt{Under the setup of IID data, Warwick and Jones \cite{warwick2005choosing} suggested using the optimal value of $\alpha$ that minimizes an asymptotic approximation of the MSE of the MDPDEs. Ghosh and Basu \cite{ghosh2015robust} extended this idea for the general INH setup, with detailed illustrations for the linear regression model. Following the same idea, an optimal value of $\alpha$ for our NLR model can be obtained by minimizing an estimate of the quantity}

\begin{equation*}\color{black}
    E(\bm{\widehat{\theta}}_\alpha - \bm{\theta}^*)^T(\bm{\widehat{\theta}}_\alpha - \bm{\theta}^*) = (\bm{\theta}_\alpha - \bm{\theta}^*)^T(\bm{\theta}_\alpha - \bm{\theta}^*) + n^{-1}Trace\left(\bm{\Psi}_n^{-1}(\bm{\theta}_\alpha) \bm{\Omega}_n(\bm{\theta}_\alpha) \bm{\Psi}_n^{-1}(\bm{\theta}_\alpha)\right),
\end{equation*}
\btxt{where $\bm{\theta}^*$ is the target (true) parameter value, $\bm{\theta}_\alpha = \bm{T}_\alpha(\bm{G})$ is the minimum DPD functional given in Eq. \eqref{nlrMDPDfn} and $\bm{\Psi_n,\Omega_n}$ are as given in Eq. \eqref{Psi}. We may estimate this MSE by replacing $\bm{\theta}_\alpha$ by $\bm{\widehat{\theta}}_\alpha$ (as defined in Section \ref{MDPDNLRM}) and $\bm{\theta}^*$ by a pilot estimator, which can then be minimized by the grid search over a suitably chosen grid of $\alpha\in[0,1]$. However, this method crucially depends on the choice of the pilot estimator $\bm{\theta}^*$, although it is shown to work very well for a properly chosen pilot.}

\btxt{To remove the dependence on the pilot estimate, Basak et al.~\cite{basak2021optimal} proposed an iterative version of the above procedure, namely the Iterated Warwick-Jones (IWJ) method, where one may use the above procedure iteratively with the pilot estimate being taken as the MDPDE with the optimum tuning parameter value obtained in the previous iteration. They have numerically illustrated that, starting with the MDPDE with any $\alpha>0$ as the pilot estimate in the first iteration, their IWJ algorithm eventually converges (very fast) to an optimal value of $\alpha$, which is independent of the initial choice of the pilot estimate. We have followed this IWJ approach to find the optimal values of $\alpha$ in all our real data examples presented in the following section.}

\section{Real data examples}\label{realdata}
Here we present the application of our proposed robust inference methodologies to fit the MM model \eqref{mmnlr} to some interesting real-life datasets, and compare them with the fits obtained by the existing robust estimators (except for the MOM estimator for the reason discussed in Section \ref{Simulation Studies}). 
In this section, two examples are presented having strong outlier effect, \btxt{and two additional examples with mild or no outliers are discussed in Web Appendix D. 
For computing the MDPDEs in all these data examples, we have again minimized the objective function in \eqref{Hnlr} using \texttt{optim} in \texttt{R} (with the BFGS method); 
several initial values around the LSE are tried, and the estimate leading to the minimum value of the objective function is finalized and reported.} 

Further, since we are unaware of the exact amount of contaminated observations for these datasets, the prediction accuracy of the fitted model is assessed by the $p\%$ trimmed average prediction error (TAPE), 
obtained after the largest $p\%$ prediction errors (squared residuals) are being trimmed.
We have reported the results with 30\% trimming as a conservative choice of $p$ for the two datasets with strong outlier effects,
and 20\% trimming for the others having milder contamination. 

\subsection{ONPG data}\label{ONPGAnalysis}
Our first example is the \btxt{ONPG} dataset, developed by Aledo \cite{aledo2022renz}. A batch of second-year biochemistry students of the University of Málaga, Spain, using $\beta$-galactosidase (EC. 3.2.1.23) as an enzyme model, carried out different experiments on the hydrolysis of o-nitrophenyl-$\beta$-D-galactopyranoside (ONPG) to galactose and o-nitrophenol, which is an enzyme-catalyzed reaction. Our data, obtained from the R package \texttt{renz} \cite{aledo2022renz}, consist of observations from 10 such reactions with different ONPG concentrations, where the initial rates (velocity) of these 10 reactions were recorded separately by 8 groups of students. Here, for illustrative purposes, we consider the data obtained by the second group of students, which contains some outlying observations (among the total of $n=10$ observations).

Since these observations are obtained from enzyme-catalyzed reactions, we fitted the MM model \eqref{mmnlr} to these data with the initial reaction rates (velocities) as the response and the ONPG concentration as the covariate, using our proposed MDPDEs as well as its robust competitors. The parameter estimates and their (estimated) asymptotic standard errors are given in Table \ref{tab:ONPG-table} along with the $30\%$ TAPE based on the fitted models for each method. The resulting fitted lines for some selected methods are also shown in Figure \ref{fig:ONPG-plot}, along with the scatter plot of the actual observations. For the selection of optimum $\alpha$, we used the IWJ method as described in Section \ref{Optalpha}, which yields $\widehat{\alpha} = 0.26$, in just one or two iterations for any choice of pilot $\alpha>0$.

\begin{table}[!h]
    \centering
    \caption{Parameter estimates of the MM model fitted to the ONPG data, along with their asymptotic standard errors (in parentheses) and 30\% TAPE. [For MDPDE, $\widehat{\alpha} = 0.26$ is the optimum choice obtained by IWJ method]}
    \resizebox{0.8\textwidth}{!}
    {\begin{tabular}{|l|rrrrrrrr|}
    \hline
        \multicolumn{9}{|c|}{MDPDE} \\
        $\alpha$ & 0 & 0.05 & 0.1 & 0.2 & 0.26 & 0.5 & 0.7 & 1 \\ \hline
        $\beta_1$ & 165.004 & 139.908 & 134.780 & 134.388 & 128.515 & 128.571 & 128.612 & 128.675 \\
        ~ & (55.673) & (36.349) & (6.736) & (6.723) & (1.621) & (1.901) & (2.162) & (2.552) \\
        $\beta_2$ & 12.350 & 7.726 & 4.096 & 4.008 & 2.582 & 2.587 & 2.590 & 2.596 \\
        ~ & (9.480) & (5.308) & (0.653) & (0.644) & (0.118) & (0.138) & (0.157) & (0.186) \\
        $\sigma$ & 21.207 & 20.687 & 5.695 & 5.664 & 1.667 & 1.847 & 1.995 & 2.180 \\
        ~ & (201.125) & (192.036) & (14.687) & (14.954) & (1.324) & (1.786) & (2.235) & (2.890) \\
        TAPE & 85.977 & 70.492 & 12.149 & 11.582 & 2.005 & 2.007 & 2.010 & 2.014 \\ \hline \hline
        ~ & \multicolumn{3}{c}{KPS} & M-est & M-est & ~ & ~ & ~ \\
        ~ & w=0.5 & w=1 & w=1.5 & Tukey & Huber & ~ & ~ & ~ \\ \hline
        $\beta_1$ & 99.010 & 101.006 & 100.861 & 132.957 & 136.401 & ~ & ~ & ~ \\
        ~ & (4.095) & (3.330) & (4.055) & (5.182) & (5.888) & ~ & ~ & ~ \\
        $\beta_2$ & 1.947 & 1.861 & 1.850 & 3.732 & 4.556 & ~ & ~ & ~ \\
        ~ & (0.311) & (0.239) & (0.291) & (0.476) & (0.609) & & & \\
        TAPE & 38.075 & 47.358 & 47.650  &10.350 & 14.461 & & & \\ \hline
    \end{tabular}}
    \label{tab:ONPG-table}
\end{table}

\begin{figure}[!h]
	\centering
	\includegraphics[width = 0.6\textwidth]{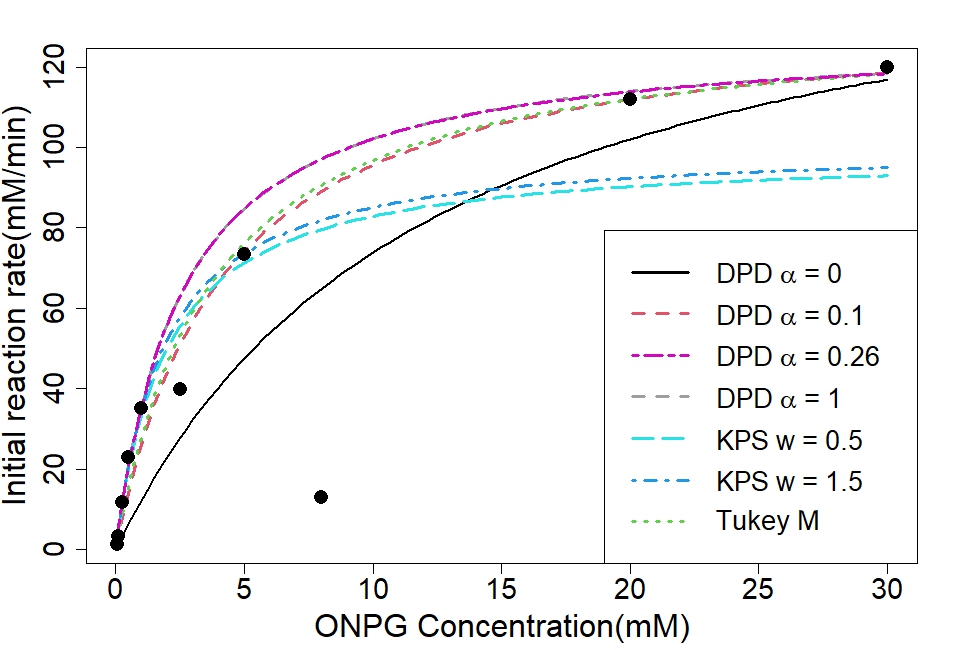}
	\caption{ONPG data and the plot of the fitted MM models using different methods}
	\label{fig:ONPG-plot}
\end{figure}

We can see from Table \ref{tab:ONPG-table} and Figure \ref{fig:ONPG-plot} that the classical OLS fit is heavily affected by the presence of outliers in the data. The KPS method is ignoring the rightmost 2 observations even though they are not outlying observations. Our proposed MDPDE with the optimum value of $\alpha=0.26$ provides much better and stable parameter estimates with lower prediction error for most of the data; in fact, all the values of $\alpha \geq \widehat{\alpha}$ yield similar estimators and model fits. MDPDE with $\alpha = 0.1$ and Tukey's M-estimator also yield reasonable model fits but higher TAPE compared to the MDPDE with $\alpha = 0.26$ (optimum choice).

\btxt{The robust MDPDEs (at the optimum $\alpha=0.26$) of the MM model parameters suggest a significantly lower value of $V_{max}$ (128.515 vs 165.004) and $K_m$ (2.582 vs 12.350) compared to the classical OLS estimates, indicating an upward bias in the classical non-robust fit. This is potentially caused by a clear outlier present in the data having a significantly low recorded initial reaction velocity. 
This smaller estimated $K_m$ indicates a higher substrate affinity, suggesting the $\beta$-galactosidase reaches the half-maximal velocity at a lower ONPG concentration than the value predicted by the OLS estimate. 
As a result, our estimate enhances the reliability in the enzyme-kinetic modelling of $\beta$-galactosidase enzyme with ONPG substrate, 
avoiding overestimation of the enzyme efficiency that might otherwise lead to inaccurate decisions during industrial bio-catalysis scale-up.}

Subsequently, we also tested for the validity of the Michaelis-Menten model (i.e, test of $\beta_2$ following the hypothesis in \eqref{testofth2}) for this dataset using the classical Wald test and the proposed MDPDE-based Wald-type test. The observed values of the test statistic $\widetilde{W}_{n,\alpha}$, described in \eqref{mmsigniftestst}, are given in Table \ref{tab:ONPGtest} along with the associated (one-sided) p-values. It is evident that the classical Wald test (at $\alpha = 0$) fails to reject the null hypothesis in \eqref{testofth2} at $5\%$ level of significance and hence fails to justify the validity of the MM model for this particular dataset, due to the presence of outliers. However, our MDPDE-based Wald-type tests with $\alpha\geq 0.1$ provide robust inferences by rejecting the null hypothesis in \eqref{testofth2} and hence justifying the validity of the MM model. In fact, these Wald-type tests at the optimum $\alpha$ and beyond (i.e., for $\alpha \geq 0.26$) all reject the null hypothesis at any reasonable level of significance, successfully suppressing any effect of outlying observations.

\begin{table}[!h]
	\centering
	\caption{Observed values of test statistic $\widetilde{W}_{n,\alpha}$ for ONPG dataset, and the associated p-values}
    \resizebox{0.8\textwidth}{!}{
	\begin{tabular}{|l|crrrrrrr|}
		\hline
		$\alpha$ & 0 (Wald test) & 0.05 & 0.1 & 0.2 & 0.26 & 0.5 & 0.7 & 1 \\ \hline
		$\widetilde{W}_{n,\alpha}$ & 1.303 & 1.456 & 6.271 & 6.227 & 21.954 & 18.740 & 16.489 & 13.982 \\
		p-value & 0.096 & 0.073 & 0.000 & 0.000 & 0.000 & 0.000 & 0.000 & 0.000\\ \hline
	\end{tabular}}
	\label{tab:ONPGtest}
\end{table}

\subsection{Drug concentration data}
Our second dataset, taken from Tabatabai et al.~\cite{tabatabai2014new}, contains a set of responses for 7 different concentrations of an agonist in a functional assay. Among the 7 observations, the 5th one is clearly seen to be an outlier; see the scatter plot in Figure \ref{fig:Drugdata-plot}. Once again, we have fitted the MM model \eqref{mmnlr} for this dataset using the proposed MDPDE, as well as its existing competitors. The parameter estimates are given in the Table \ref{tab:Drugdata} along with the asymptotic standard errors and $30\%$ TAPEs for the fitted models, while the model fits are illustrated in Figure \ref{fig:Drugdata-plot}. The optimum value of $\alpha$ in our MDPDEs, using the IWJ method, turns out to be $\widehat{\alpha}=0.39$ for this dataset.

\begin{table}[!h]
    \centering
    \caption{Parameter estimates of the MM model fitted to the Drug concentration data, along with their asymptotic standard errors (in parentheses) and 30\% TAPE. [For MDPDE, $\widehat{\alpha} =0.39$ is the optimum choice obtained by IWJ method]}
    \resizebox{0.8\textwidth}{!}
    {\begin{tabular}{|r|rrrrrrrr|}
    \hline
        \multicolumn{9}{|c|}{MDPDE} \\
        $\alpha$ & 0 & 0.05 & 0.1 & 0.2 & 0.39 & 0.5 & 0.7 & 1 \\ \hline
        $\beta_1$ & 89.212 & 88.679 & 88.143 & 86.479 & 80.892 & 80.893 & 80.893 & 80.892 \\
        ~ & (6.202) & (6.174) & (6.121) & (5.697) & (0.002) & (0.006) & (0.008) & (0.006) \\
        $\beta_2$ & 0.166 & 0.167 & 0.167 & 0.168 & 0.111 & 0.111 & 0.111 & 0.111 \\
        ~ & (0.054) & (0.054) & (0.054) & (0.052) & (0.000) & (0.000) & (0.000) & (0.000) \\
        $\sigma$ & 8.612 & 8.557 & 8.44 & 7.732 & 0.003 & 0.008 & 0.011 & 0.007 \\
        ~ & (39.646) & (39.269) & (38.546) & (33.307) & (0.000) & (0.000) & (0.000) & (0.000) \\
        TAPE & 18.488 & 17.885 & 17.279 & 13.616 & 5.447 & 5.447 & 5.446 & 5.446 \\ \hline
        ~ & \multicolumn{3}{c}{KPS} & M-est & M-est & ~ & ~ & ~ \\
        ~ & w=0.5 & w=1 & w=1.5 & Tukey & Huber & ~ & ~ & ~ \\ \hline
        $\beta_1$ & 132.890 & 132.741 & 131.901 & 82.562 & 82.986 & ~ & ~ & ~ \\
        ~ & (2.557) & (2.967) & (3.479) & (1.869) & (1.649) & ~ & ~ & ~ \\
        $\beta_2$ & 0.388 & 0.386 & 0.380 & 0.190 & 0.170 & ~ & ~ & ~ \\
        ~ & (0.031) & (0.036) & (0.042) & (0.020) & (0.017) & & & \\
        TAPE & 1.012 & 0.950 & 0.783 & 1.480 & 4.285 & & & \\ \hline
    \end{tabular}}
    \label{tab:Drugdata}
\end{table}
\begin{figure}[!h]
    \centering
    \includegraphics[width = 0.6\textwidth]{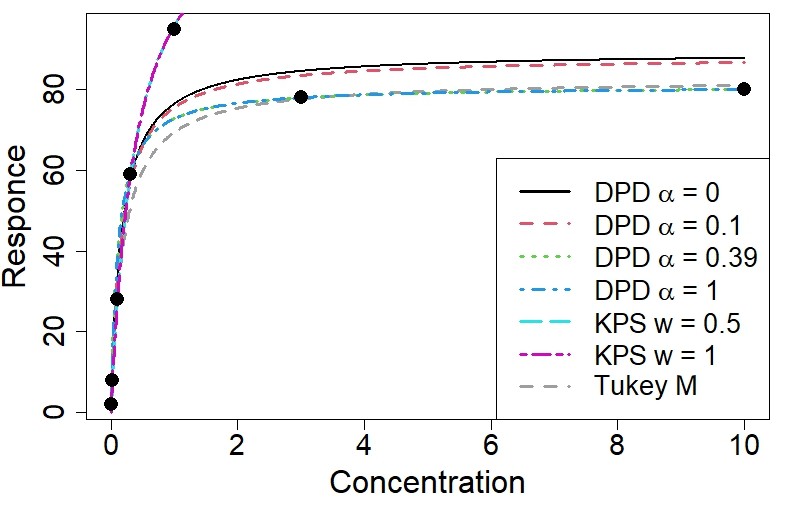}
    \caption{Drug concentration data and the plot of the fitted MM models using different methods}
    \label{fig:Drugdata-plot}
\end{figure}

Like the previous example, here also we can see that the proposed MDPDE at the optimum $\alpha = 0.39$ (in fact, for all $\alpha\geq 0.39$) yields robust parameter estimates and model fits ignoring the effects of the outlier. Tukey's M-estimator still remains a close contender with similar parameter estimates with a lower TAPE, but from the Figure \ref{fig:Drugdata-plot} it can be clearly observed that it does not provide a fit as good as the MDPDEs with $\alpha\geq 0.39$. Also, the KPS estimators for both $w=0.05$ and $w=1.5$ are heavily affected by outliers, as are the OLS and the MDPDE with $\alpha = 0.1$.

\btxt{Overall, as in the previous example, our proposed MDPDEs can again successfully address the effects of potential outliers in the data, producing lower estimates of 
	the MM parameters $V_{max}$ and $K_m$ (80.892 and 0.111, at optimum $\alpha\geq 0.39$) compared to the classical OLS estimates (89.212, 0.166). 
	Biologically, it implies that the true maximum metabolic rate of the enzyme (as predicted by the robust estimates) is lower than the one estimated by the classical non-robust estimates. 
	More importantly, in both these examples, our proposal produces significantly smaller estimates of the residual error variance ($\sigma^2$) and hence provide greater confidence in the fitted MM model than the classical competitors. This is extremely important while incorporating the fitted enzyme-kinetics models in practice, particularly in the area of drug discovery, dose-optimization strategies, and industrial process development.}

We also performed the Wald-type tests to test the validity of the MM model for this dataset. Interestingly, all these tests (for all $\alpha \geq 0$) reject the null hypothesis of insignificance of the MM model with a p-value $\leq 0.001$ in all cases. The outlier has no effect on the validity of MM model for this example, and hence all values of $\alpha$ lead to the same inference as the classical Wald test.

\section{Concluding remarks}\label{Conclution}
In this paper, we have proposed a robust method for parameter estimation under an NLR model based on the concept of minimum DPD estimation, introduced by Basu et al.~\cite{basu1998robust}. We have established the desired asymptotic properties of our estimator and provided some mild conditions on the covariates and the regression function, under which those asymptotic properties hold. Their robustness has been verified by showing that these estimators have bounded influence functions. Under the Michaelis-Menten model in enzyme kinetics, extensive simulation study has been performed to empirically justify the robust performance of our method, which is then applied to robustly analyze several real-life datasets. Another fundamental paradigm of statistical inference, namely the hypothesis testing problem, has also been discussed in this paper. The MDPDE-based Wald-type tests have been developed to test the significance of the Michaelis-Menten model. The robustness of the testing procedure has also been validated through extensive simulation studies and theoretically via influence function analyses.

Our method can be treated as an optimization procedure with the minimum DPD objective function because, in most cases, minimizing the objective function is comparatively easier than solving the estimating equations. But this optimization is also a challenging problem because of the complexity of the objective function, particularly when the dimension of the regression parameter is high, where it may take a considerable amount of time. However, in our simulation experiments with the MM model, we have observed that the computation of the MDPDEs took less time than the existing robust methods, at least for the cases of MM models. For more general nonlinear models, one can devise appropriate strategies for computing the proposed MDPDEs through a careful examination of the model-dependent objective function.

Finally, we would like to note that our NLR model is assumed to be homoscedastic, and the errors are independently distributed. However, these assumptions might not always be true in real-life situations involving NLR models. For example, some variants of the MM model (see, e.g., \cite{raaijmakers1987statistical}) assume that the responses have a constant coefficient of variation instead of having constant variance, implying heteroscedasticity in the assumed NLR model. So, an important future extension of our MDPDE-based procedures developed in this paper would be to further extend them for such heteroscedastic NLR models and the NLR models with dependent (e.g., autocorrelated) errors. We hope to pursue them in our sequel research works.

%\bibliographystyle{acm}
%\bibliography{Reference.bib}

\clearpage

% \tableofcontents
\appendix

\section{Some background concepts}\label{SBG}
\subsection{The DPD and the MDPDE}\label{MDPDBG}
Basu et al.~\cite{basu1998robust} proposed a robust method of estimation by minimizing the density power divergence (DPD) between the true data-generating distribution (estimated suitably from the observed data) and the model distribution under the setup of IID data. Depending on a tuning parameter $\alpha\geq 0$, the DPD measure between two probability density functions $f$ and $g$, is given by
\begin{equation*}
    d_\alpha(g,f) = \left\{
    \begin{matrix}
        \int \left\{f^{1+\alpha}(x) - \left(1+\frac{1}{\alpha}\right)f^\alpha(x) g(x) + \frac{1}{\alpha}g^{1+\alpha}(x)\right\}dx, & \mbox{if} & \alpha > 0,\\\\
        \int g(x) \ln\left(\frac{g(x)}{f(x)}\right)dx, & \mbox{if} & \alpha = 0.
    \end{matrix}
    \right.
\end{equation*}

Note that the DPD is defined as a continuous limit at $\alpha=0$, i.e., $d_0(g,f) = \lim_{\alpha\to 0}d_\alpha(g,f)$, which coincides with the Kullback-Leibler divergence. Also, the case $\alpha=1$ gives the squared $L_2$ distance. It has been well-established in the literature \cite{basu2011statistical} that the tuning parameter $\alpha$ controls the trade-off between efficiency and robustness of the associated minimum DPD estimator (MDPDE) defined below.

Let $G$ be the (unknown) true data-generating distribution with density $g$ and we want to model $g$ by the parametric family of densities $\mathcal{F} = \left\{f_{\bm{\theta}}(\cdot): \bm{\theta}\in \bm{\Theta}\subseteq\mathbb{R}^p \right\}$. Now, in the minimum DPD approach, the most appropriate model density to fit $g$ well is that member of $\mathcal{F}$ for which $d_\alpha(g,f_{\bm{\theta}})$ is minimum over $\bm{\theta}\in\bm{\Theta}$. So, the minimum DPD functional at true distribution $G$ is defined by \cite{basu1998robust, basu2011statistical}
\begin{equation*}
    \bm{T}_\alpha(G) = \arg\min_{\bm{\theta}\in\bm{\Theta}}d_\alpha(g,f_{\bm{\theta}}) = \arg\min_{\bm{\theta}\in\bm{\Theta}} \left\{ \int f_{\bm{\theta}}^{1+\alpha}(x) dx - \left(1+\frac{1}{\alpha}\right)\int f_{\bm{\theta}}^\alpha(x) dG(x) \right\}.
\end{equation*}

Now, if an IID sample $\{x_1, x_2,\ldots,x_n\}$ is available from $G$, then the MDPDE $\widehat{\bm{\theta}}_\alpha$ of $\bm{\theta}$ can be obtained by replacing $G$ with the empirical distribution function $G_n$ in the above definition, leading to
\[\widehat{\bm{\theta}}_\alpha = \bm{T}_\alpha(G_n) = \arg\min_{\bm{\theta}\in\bm{\Theta}} \left\{ \int f_{\bm{\theta}}^{1+\alpha}(x) dx - \left(1+\frac{1}{\alpha}\right)\frac{1}{n}\sum_{i=1}^nf_{\bm{\theta}}^\alpha(x_i) \right\}.\]
Note that, unlike many other minimum divergence estimators, this DPD based method avoids the (kernel) density estimation of $g$ and the associated bandwidth selection problems even under continuous models. The consistency and asymptotic normality of the MDPDEs are proved for all $\alpha\geq 0$ under suitable conditions; see, e.g., \cite{basu2011statistical}, for further details and examples.

Ghosh and Basu \cite{ghosh2013robust} generalized the definition of the MDPDE for independent non-homogeneous (INH) observations. Suppose we have $n$ independent observations $y_1, y_2,\hdots,y_n$ with $y_i\sim g_i$ for each $i=1,2,\hdots,n$, where $g_i$'s are possibly different densities. We wish to model $g_i$ by the parametric model family $\mathcal{F}_i = \left\{f_i(\cdot; \bm{\theta}): \bm{\theta}\in \bm{\Theta}\right\}$, for each $i=1,2,\hdots,n$. Note that, in this INH setup, although the distributions may be different, all of them must depend on the same parameter $\bm{\theta}$. Ghosh and Basu \cite{ghosh2013robust} defined the MDPDE of $\bm{\theta}$ as the minimizer of the average DPD measure between the data points and the corresponding models, given by
\[\frac{1}{n}\sum_{i=1}^n d_\alpha(\widehat{g}_i, f_i(\cdot; \bm{\theta})),\]
where $\widehat{g}_i$ is the density of the empirical distribution function of the data $y_i$, for each $i$. Upon simplifications, the MDPDE $\widehat{\bm{\theta}}_\alpha$ can equivalently be obtained by minimizing the objective function
\begin{equation}\label{H}
    H_n(\bm{\theta}) = \frac{1}{n}\sum_{i=1}^n V_i(y_i;\bm{\theta}),
\end{equation}
over $\bm{\theta}\in\bm{\bm{\Theta}}$, where $V_i(y_i; \bm{\theta}) = \int f_i^{1+\alpha} (y; \bm{\theta})dy - \left(1+\frac{1}{\alpha}\right) f_i^\alpha (y_i;\bm{\theta})$. So, the associated estimating equations of the MDPDEs are given by
\begin{equation*}
    \sum_{i=1}^n\left[f_i^\alpha(y_i; \bm{\theta}) \bm{u}_i(y_i;\bm{\theta}) - \int f_i^{1+\alpha}(y; \bm{\theta}) \bm{u}_i(y; \bm{\theta})dy\right] = 0,
\end{equation*}
where $\bm{u}_i(y; \bm{\theta}) = \nabla_{\bm{\theta}}\ln f_i(y; \bm{\theta})$ denotes the score function. Note that, as $\alpha\to 0$, the MDPDE is to be obtained by minimizing $\sum_{i=1}^n\left[-\ln\left(f_i(y_i; \bm{\theta})\right)\right]$, or equivalently maximizing $\sum_{i=1}^n\ln\left(f_i(y_i; \bm{\theta})\right)$ with respect to $\bm{\theta}$, or by solving the estimating equation $\sum_{i=1}^n \bm{u}_i(y_i; \bm{\theta}) = 0$. So, the MDPDE coincides with the MLE at $\alpha = 0$.

Further, under this INH setup, the minimum DPD functional corresponding to the MDPDE $\widehat{\bm{\theta}}_\alpha$ at the true distributions $\bm{G}=(G_1,G_2,\hdots,G_n)$, where $G_i$ is the distribution function associated with $g_i$ for each $i$, is given by
\begin{equation} \label{MDPDF}
    \bm{T}_\alpha(\bm{G}) = \arg\min_{\bm{\theta\in\bm{\Theta}}}\frac{1}{n}\sum_{i=1}^n d_\alpha(g_i, f_i(\cdot; \bm{\theta})).
\end{equation}

Ghosh and Basu \cite{ghosh2013robust} proved the consistency and asymptotic normality of the MDPDE $\widehat{\bm{\theta}}_\alpha$
under general INH set-ups satisfying certain assumptions, which are referred to as Assumptions (A1)--(A7) in their paper. 
In particular, it was shown that 
$$
\bm{\Omega}_n^{-\frac{1}{2}}(\bm{\theta}^g)  \bm{\Psi}_n(\bm{\theta}^g) [\sqrt{n}(\bm{\widehat{\theta}}_\alpha - \bm{\theta}^g)] \xrightarrow[]{\mathcal{D}} \mathcal{N}_p(\bm{0}, \mathbb{I}_p), 
$$
where $\bm{\theta}^g=\bm{T}_\alpha(\bm{G})$, $\bm{\Psi}_n(\bm{\theta}^g) = \frac{1}{n}\sum_{i=1}^n \bm{J}^{(i)} (\bm{\theta}^g)$, with
\begin{multline*}
    \bm{J}^{(i)}(\bm{\theta}^g)  =  \int \bm{u}_i(y;\bm{\theta}^g)\bm{u}_i(y;\bm{\theta}^g)^T f_i^{1+\alpha}(y;\bm{\theta}^g)dy\\ - \int \left\{\nabla \bm{u}_i(y;\bm{\theta}^g) + \alpha \bm{u}_i(y;\bm{\theta}^g)\bm{u}_i(y;\bm{\theta}^g)^T\right\} \left\{g_i(y) - f_i(y;\bm{\theta}^g)\right\} f_i^\alpha (y;\bm{\theta}^g) dy,
\end{multline*}
and
\begin{equation*}
    \bm{\Omega}_n(\bm{\theta}^g)  = \frac{1}{n}\sum_{i=1}^n \left[ \int \bm{u}_i(y;\bm{\theta}^g)\bm{u}_i(y;\bm{\theta}^g)^T f_i^{2\alpha}(y;\bm{\theta}^g) g_i(y) dy - \bm{\xi}_i(\bm{\theta}^g) \bm{\xi}_i(\bm{\theta}^g)^T  \right],
\end{equation*}
with $\bm{\xi}_i(\bm{\theta}^g)  = \int \bm{u}_i(y;\bm{\theta}^g) f_i^{\alpha}(y;\bm{\theta}^g) g_i(y) dy$.

The local robustness of the MDPDE under such INH set-ups is also studied by suitably defined influence functions. 
Considering the contaminated density for $i$-th observation being $g_{i,\epsilon} = (1-\epsilon)g_i + \epsilon \wedge_{t_i}$, where $\wedge_{t_i}$ is the density of degenerate distribution at the contamination point $t_i$ for $i=1,2,\hdots,n$, Ghosh and Basu \cite{ghosh2013robust} obtained the influence function of the MDPDE functional $\bm{T}_\alpha$. In particular, for contamination only in the $i_0$-th observation, this influence function turns out to be
\begin{equation}\label{IFi0}
    IF_{i_0}(t_{i_0}, \bm{T}_\alpha, \bm{G}) = \bm{\Psi}_n^{-1}(\bm{\theta}^g) \frac{1}{n}\left[ f_{i_0}^\alpha(t_{i_0};\bm{\theta}^g) \bm{u}_{i_0}(t_{i_0};\bm{\theta}^g) - \bm{\xi}_{i_0}(\bm{\theta}^g) \right],
\end{equation}
while the same with contamination in all the observations, at contamination points in $\bm{t} = (t_1,\hdots,t_n)$, is given by
\begin{equation}\label{IF}
    IF(\bm{t}, \bm{T}_\alpha, \bm{G}) = \bm{\Psi}_n^{-1}(\bm{\theta}^g)  \frac{1}{n}\sum_{i=1}^n\left[ f_i^\alpha (t_i;\bm{\theta}^g)\bm{u}_i(t_i;\bm{\theta}^g) - \bm{\xi}_i(\bm{\theta}^g) \right].
\end{equation}

\subsection{Existing robust estimators under the NLR models}\label{ExistingRM}
\textbf{M-estimators:}\\
For the NLR model given in Eq. (3) of the main paper, an M-estimator $\widehat{\bm{\beta}}_M$ of $\bm{\beta}$ is defined as
\begin{equation}
\label{MEst}
    \widehat{\bm{\beta}}_M = \arg\min_{\bm{\Theta_\beta}} \sum_{i=1}^n\rho\left(\frac{r_i}{\sigma}\right),
\end{equation}
where $r_i = y_i - \mu(\bm{x}_i, \bm{\beta}),\mbox{ for } i=1,\hdots,n$, $\bm{\Theta_\beta}$ is the parameter space of the regression parameter $\bm{\beta}$ and, if $\sigma$ is unknown, it needs to be estimated by any usual robust scale estimate, e.g., 
$$
\widehat{\sigma} = 1.4826 ~ \mbox{Med}_i(|r_i - \mbox{Med}_j (r_j)|),
$$ 
and $\rho: \mathbb{R} \to \mathbb{R}^+$ is an appropriate even function which is non-decreasing in $\mathbb{R}^+$. When $\rho(\cdot)$ is differentiable and $\psi=\rho'$, the associated estimating equation is given by
\begin{equation}
\label{Mesteq}
    \sum_{i=1}^n\psi\left(\frac{y_i-\mu(\bm{x}_i, \bm{\beta})}{\sigma}\right)\nabla_{\bm{\beta}}\mu(\bm{x}_i, \bm{\beta}) = 0.
\end{equation}

Following the asymptotic normality results of regression M estimators from Maronna et al.~\cite{maronna2019robust}, it follows that under certain conditions the M-estimator $\bm{\widehat{\beta}}_M$, obtained by solving \eqref{Mesteq}, satisfies
    $$\left(\bm{\dot{\mu}}(\bm{\beta})^T\bm{\dot{\mu}}(\bm{\beta})\right)^{1/2}\left(\widehat{\bm{\beta}}_M - \bm{\beta}\right) \xrightarrow{\mathcal{D}} \mathcal{N}_p(\bm{0},\ v_M \mathbb{I}_p),$$
where \btxt{$v_M = \sigma^2\frac{E\psi^2(U)}{\left[E\psi^{'} (U)\right]^2},\mbox{ with } U\sim\mathcal{N}(0,1)$. Hence, the ARE of the M-estimator of $\bm{\beta}$ can be computed as follows (which is the same for all the components of $\bm{\beta}$):
\begin{equation}\label{are-others}
    ARE = \frac{\sigma^2}{v_M} = \frac{\left[E\psi^{'}(U)\right]^2}{E\psi^2(U)}.
\end{equation}}
Different forms of the function $\rho(\cdot)$ lead to different M-estimators. Liu et al.~\cite{Rliu2005robust} considered the Huber's $\rho$ function (provided by Huber \cite{huber1964robust}) given by 
\[\rho_c(x) = \left\{
\begin{matrix}
    \frac{x^2}{2}, & \mbox{if} & |x|\leq c,\\
    cx - \frac{c^2}{2},& \mbox{if} & |x|>c,
\end{matrix}
\right.\]
with usual choice $c = 1.345$ for analysing different NLR models. Later, Marasovic et al.~\cite{marasovic2017robust} studied M-estimators only for the MM model (and not for general NLR models) with  the Tukey’s biweight $\rho$ function, given by
\[\rho_c(x) = \left\{
\begin{matrix}
    \frac{c^2}{6}(1 - [1 - (\frac{x}{c})^2]^3), & \mbox{if} & |x|\leq c,\\
    \frac{c^2}{6}, & \mbox{if} & |x|>c.
\end{matrix}
\right.\]
In this case, the tuning parameter $c=4.685$ yields $95\%$ efficiency under normal error assumption and can successfully handle contamination up to $10\%$ outliers (as illustrated in Section 2.1 of \cite{marasovic2017robust}).\\

\noindent\textbf{The KPS estimator:}\\
Tabatabai et al.~\cite{tabatabai2014new} proposed a robust estimator of $\bm{\beta}$ under the NLR model, by suitably modifying the M-estimation approach, as given by
\[\bm{\widehat{\beta}} = \arg\min_{\bm{\beta} \in \mathbb{R}^p}\sum_{i=1}^n\frac{\rho_\omega(t_i)}{L_i},\]
where $\rho_\omega(x) = 1 - Sech(\omega x)$ with $Sech(x)=\frac{2}{e^x + e^{-x}}$, $t_i = \frac{1}{\sigma}(1-h_{ii})(y_i - \mu(\bm{x}_i, \bm{\beta}))$, with $h_{ii}$ being the $i$-th diagonal element of the matrix $\bm{H} = \bm{\dot{\mu}}(\bm{\dot{\mu}}^T\bm{\dot{\mu}})^{-1} \bm{\dot{\mu}}^T$ and $L_i = \sum_{j=1}^k\max \{M_j, |x_{ij}|\},\ i=1,\hdots,n$, where $M_j = \mbox{Median}\{|x_{1j}|,\hdots,|x_{nj}|\},\ j=1,\hdots,k$, with $x_{ij}$ be the $j$-th element of the $k$-dimensional covariate vector $\bm{x}_i$ of the $i$-th observation. If $\sigma$ is unknown, they suggested to use either of the two robust estimates defined as 
$$
\widehat{\sigma} = 1.1926\mbox{ Med}_{1\leq i\leq n}\left(\mbox{Med}_{1\leq j\leq n}|r_i - r_j|\right),
~~~\mbox{ or }~~
\widehat{\sigma} = 2.2219\ \left\{|r_i - r_j|: 1\leq i<j\leq n\right\}_{(l)},
$$ 
with $r_i = y_i - \mu(\bm{x}_i, \bm{\widehat{\beta}})$, $l = \dbinom{\left[n/2\right] + 1}{2}$, the binomial coefficient, and $S_{(l)}$ denotes the $l$-th order statistic in the set $S$.

\btxt{The formula for the asymptotic variance of the KPS estimator of $\bm{\beta}$  has been provided by Tabatabai et al.~\cite{tabatabai2014new}, which may be used directly to compute the ARE of these estimates (as in the case of M-estimators above).}\\

\noindent\textbf{The MOM-estimator:}\\
Liu et al.~\cite{Pliu2021robust} proposed a very simple robust estimation method for the NLR model, called the median of means (MOM) method. This method consists of the following steps:
\begin{itemize}
    \item Dataset of size $n$ is divided randomly into $g$ groups of equal size, assuming $n$ is divisible by $g$. Some suitable suggestion for the choice of g is mentioned in \cite{Pliu2021robust}.
    \item Obtain the OLS estimate of $\bm{\beta}$ from every group of observations separately, which we denote as $\widehat{\bm{\beta}}^{(j)} = (\widehat{\beta}_1^{(j)},\hdots,\widehat{\beta}_p^{(j)})^T$ for group $j,\ j=1,2,\hdots,g$.
    \item The MOM estimator is then given by $\widehat{\bm{\beta}}^{MOM} =(\widehat{\beta}_1^{MOM},\hdots,\widehat{\beta}_p^{MOM})^T$, where
$\widehat{\beta}_k^{MOM} = \mbox{Median}\{\widehat{\beta}_k^{(1)},\hdots,\widehat{\beta}_k^{(g)}\},\mbox{ for all } k=1,2,\hdots,p$.
\end{itemize}

Liu et al.~\cite{Pliu2021robust} have also proved asymptotic properties (consistency and asymptotic normality) of the estimator under certain assumptions. However, from the definition of the method, it is intuitively clear that it gives good, stable performance only when the sample size is moderately large and the proportion of outlier is small, which is also justified through our simulation studies, presented in Section 4.3 of the main paper.

\subsection{\btxt{Level and Power influence functions under INH observations}}
\btxt{The level influence function (LIF) and power influence function (PIF) are effective measures to study the local robustness of the level and power of a statistical test (see, Heritier and Ronchetti \cite{heritier1994robust}, Ghosh et al.~\cite{ghosh2016influence}, Basu et al.~\cite{basu2018robust}). Let us consider the setup of INH observations from Section \ref{MDPDBG}, where $\mathcal{F}_i = \left\{f_i(\cdot; \bm{\theta}): \bm{\theta}\in \bm{\Theta}\subseteq \mathbb{R}^p\right\}$ be the parametric model family of densities of $y_i$, for each $i=1,\ldots,n$, with $F_{i,\bm{\theta}}$ denoting the distribution function associated to $f_i(\cdot; \bm{\theta})$. Suppose that $W_{n}$ denotes any general test statistic (e.g., the Wald-type test statistic) for testing the hypotheses
\begin{equation} \label{lp-if-hyp}
    H_0: \bm{\theta} = \bm{\theta}_0 \mbox{ against } H_1: \bm{\theta} \neq \bm{\theta}_0, \mbox{ for some } \bm{\theta}_0\in \bm{\Theta},
\end{equation}
and let the critical region for this test at the significance level $\gamma$ is denoted by $C_\gamma$. Further, let $W(\bm{G})$ denotes the statistical functional associated with the test statistic $W_n$.} 

\btxt{We now consider the asymptotic power of the test under the contiguous alternative hypothesis $H_{1,n}: \bm{\theta} = \bm{\theta}_n$, where $\bm{\theta}_n = \bm{\theta}_0 + \bm{d}/\sqrt{n},~~\bm{d}\neq \bm{0}$, and incorporate appropriate contamination to both the null and the contiguous alternative hypotheses, for each $i=1,\ldots,n$, as follows:
\begin{equation*}
    F^{L}_{i,n,\epsilon,t_i}
    = \left( 1 - \frac{\epsilon}{\sqrt{n}} \right) F_{i,\bm{\theta}_0}
      + \frac{\epsilon}{\sqrt{n}} \Delta_{t_i},
\qquad
F^{P}_{i,n,\epsilon,t_i}
    = \left( 1 - \frac{\epsilon}{\sqrt{n}} \right) F_{i,\bm{\theta}_n}
      + \frac{\epsilon}{\sqrt{n}} \Delta_{t_i},
\end{equation*}
where $\Delta_{t_i}$ is the one-point distribution function supported on $\{t_i\}$, which are then used to analyze the robustness of level and power of the tests, respectively. Now, denoting $\bm{t}=(t_1,\ldots,t_n)^T,\ \bm{F}^{P}_{n,\epsilon,\bm{t}} = \left( F^{P}_{1,n,\epsilon,t_1}, \ldots, F^{P}_{n,n,\epsilon,t_n} \right)^T$ and $\bm{F}^{L}_{n,\epsilon,\bm{t}} = \left( F^{L}_{1,n,\epsilon,t_1}, \ldots, F^{L}_{n,n,\epsilon,t_n} \right)^T$, the LIF and PIF of the test functional $W(\bm{G})$, for the simple null hypothesis $H_0$, are defined as (at the level of significance $\gamma$)
\begin{align*}
    LIF(\bm{t}; W, \bm{F}_{\bm{\theta}_0})
    &= \lim_{n \to \infty}
      \frac{\partial}{\partial \epsilon}
      P_{\bm{F}^{L}_{n,\epsilon,\bm{t}}}
      \left( W_n \in C_\gamma \right)
      \Big|_{\epsilon = 0},\\
      PIF(\bm{t}; W, \bm{F}_{\bm{\theta}_0})
    &= \lim_{n \to \infty}
      \frac{\partial}{\partial \epsilon}
      P_{\bm{F}^{P}_{n,\epsilon,\bm{t}}}
      \left( W_n \in C_\gamma \right)
      \Big|_{\epsilon = 0},
\end{align*}
For more details about LIF and PIF, see \cite{ghosh2016influence, basu2018robust}.}

\btxt{Particularly, for the MDPDE-based Wald-type tests under the general INH setups, their LIF and PIF are studied in detail in Basu et al.~\cite{basu2018robust}. It has been shown there that, for testing the hypothesis of the form \eqref{lp-if-hyp}, the associated test statistics and test functionals are, respectively, given by
\begin{align*}
    W_n &= n (\widehat{\bm{\theta}}_\alpha - \bm{\theta}_0)^T \bm{\Sigma}_\alpha^{-1}(\bm{\theta}_0) (\widehat{\bm{\theta}}_\alpha - \bm{\theta}_0), ~~~\mbox{ and }~~~~
    W_\alpha(\bm{G}) = (\bm{T}_\alpha(\bm{G}) - \bm{\theta}_0)^T \bm{\Sigma}_\alpha^{-1}(\bm{\theta}_0) (\bm{T}_\alpha(\bm{G}) - \bm{\theta}_0),
\end{align*}
where $\bm{\Sigma}_\alpha(\bm{\theta}_0) = \lim_{n\to\infty} \bm{\Psi}_n^{-1}(\bm{\theta}_0)\bm{\Omega}_n(\bm{\theta}_0)\bm{\Psi}_n^{-1}(\bm{\theta}_0)$, with $\bm{\Psi}_n$ and $\bm{\Omega}_n$ being as given in Section \ref{MDPDBG}. Now, the LIF and PIF of $W_\alpha$, in this case, at the vector of contamination points $\bm{t}$, are given by (Theorem 10 of \cite{basu2018robust})
\begin{align}
    LIF\left(\bm{t}; W_\alpha, \bm{F}_{\bm{\theta}_0}\right) &= 0, \mbox{ and }\\
    PIF\left(\bm{t}; W_\alpha, \bm{F}_{\bm{\theta}_0}\right)
    &= K_p^*(\bm{d}^T\bm{\Sigma}_\alpha^{-1}(\bm{\theta}_0)\bm{d}) \bm{d}^T\bm{\Sigma}_\alpha^{-1}(\bm{\theta}_0) IF\left(\bm{t}; \bm{T}_\alpha, \bm{F}_{\bm{\theta}_0}\right),
\end{align}
where the expression of $K_p^*(\cdot)$ is given in Section 3.2 of the main paper,
and $IF\left(\bm{t}; \bm{T}_\alpha, \bm{F}_{\bm{\theta}_0}\right)$ is the IF of the MDPDE functional $\bm{T}_\alpha$
presented in Section \ref{MDPDBG}. 
One can also define the test functional $W_\alpha$  directly using the matrix $\bm{\Psi}_n^{-1}(\bm{\theta}_0)\bm{\Omega}_n(\bm{\theta}_0)\bm{\Psi}_n^{-1}(\bm{\theta}_0)$ instead of its limit $\bm{\Sigma}_\alpha(\bm{\theta}_0)$;
then also the resulting LIF and PIF will have the same form as in (9) and (10) above 
and can be computed in practice by replacing $\bm{\Sigma}_\alpha(\bm{\theta}_0)$ with $\bm{\Psi}_n^{-1}(\bm{\theta}_0)\bm{\Omega}_n(\bm{\theta}_0)\bm{\Psi}_n^{-1}(\bm{\theta}_0)$,
as done in our main paper. 
}

\section{Proof of theorems}\label{Proofs}
\subsection{Proof of Theorem 2.1} \label{Proof-AN}
First, we will state and prove a lemma and then use Theorem 3.1 of Ghosh and Basu \cite{ghosh2013robust} to complete the proof of our theorem.
\begin{lemma}
\label{lemma-4.1}
    Consider the setup of the NLR model (3) of the main paper and suppose that the true data-generating distributions belong to the assumed parametric model families. Then Conditions (R1) and (R2) of the main paper imply Assumptions (A1)-(A7) of Ghosh and Basu \cite{ghosh2013robust}.
\end{lemma}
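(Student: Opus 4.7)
The plan is to verify each of the assumptions (A1)--(A7) of Ghosh and Basu \cite{ghosh2013robust} individually, exploiting the explicit Gaussian form of the model density $f_i(y;\bm{\theta})$ together with the structural bounds imposed by (R1)--(R2). The guiding observation is that, for Gaussian errors, every expectation of a polynomial in $(y_i - \mu(\bm{x_i},\bm{\beta}))$ weighted by $f_i^\alpha$ reduces to a constant Gaussian moment independent of $i$, so the covariate dependence is funnelled entirely through products of entries of $\bm{\dot{\mu}(\beta)}$ and its higher partials, which are controlled by (R1), while (R2) handles the required matrix-level non-degeneracy.

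First I would write down explicitly the score $\bm{u}_i(y;\bm{\theta}) = \nabla_{\bm{\theta}}\ln f_i(y;\bm{\theta})$ and its first two derivatives. For the $\bm{\beta}$-component these are linear, quadratic and cubic in $\nabla_{\bm{\beta}}\mu$, $\nabla_{\bm{\beta}}^2\mu$ and $\nabla_{\bm{\beta}}^3\mu$, multiplied by scalar polynomials in $(y_i-\mu(\bm{x_i},\bm{\beta}))/\sigma$; the $\sigma^2$-components are purely scalar polynomial expressions. Smoothness assumptions (A1)--(A2) are then immediate from the thrice continuous differentiability of $\mu$ in (R1) and the analyticity of the Gaussian density in $(\bm{\beta},\sigma^2)$ on $\omega \times \mathbb{R}^+$.

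Second, (A3) requires $\sup_n \max_i$ of expectations of certain products of score derivatives to be uniformly bounded. After taking expectation, the random factors become finite Gaussian moments, and what is left is uniform control of the products $(\bm{\dot{\mu}})_{ij}$, $(\bm{\dot{\mu}})_{ij}(\bm{\dot{\mu}})_{ik}$ and their averaged triple products, supplied exactly by \eqref{R11} and \eqref{R12}. Third, conditions (A4)--(A5) require convergence/non-singularity of $\frac{1}{n}\bm{\Psi}_n$ and $\frac{1}{n}\bm{\Omega}_n$. The block-diagonal form in \eqref{Psi} reduces this to: (i) the smallest eigenvalue of $\frac{1}{n}\bm{\dot{\mu}(\beta)}^T\bm{\dot{\mu}(\beta)}$ being bounded below by \eqref{R21}, combined with boundedness of its entries from \eqref{R11}; and (ii) the trivial convergence of the constant $\sigma^2$-block. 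Finally, the Lindeberg-type condition (A6)--(A7) controls $\max_i (\text{score}_i)^T \bm{\Psi}_n^{-1} (\text{score}_i)$. After the random Gaussian factor is bounded by a constant using tail estimates, the remaining deterministic quantity is precisely $n\max_i \nabla_{\bm{\beta}}\mu(\bm{x}_i,\bm{\beta})^T(\bm{\dot{\mu}}^T\bm{\dot{\mu}})^{-1}\nabla_{\bm{\beta}}\mu(\bm{x}_i,\bm{\beta})$, which is $O(1)$ by \eqref{R22}.

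The main obstacle I anticipate is the bookkeeping that matches Ghosh and Basu's rather abstract moment and uniform-convergence hypotheses, written for a general INH setup with arbitrary parametric density, to the concrete Gaussian NLR form above. Particular care will be needed in (A3) to simultaneously handle derivatives with respect to $\bm{\beta}$ and $\sigma^2$, since mixed partials couple powers of $(y_i-\mu)$ with $\nabla_{\bm{\beta}}\mu$; these must be separated via Cauchy--Schwarz so that the $y_i$-moment is absorbed into a universal constant and the deterministic $\dot{\mu}$-factor is bounded through (R1). Once that separation is in place, the remaining arguments are routine consequences of (R1)--(R2) applied entry-by-entry to the Gaussian score derivatives.
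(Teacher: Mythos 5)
Your proposal is correct and follows essentially the same route as the paper: verify (A1)--(A7) of Ghosh and Basu one by one, using the explicit Gaussian form so that all $y_i$-dependence reduces to universal normal moments while the covariate dependence is carried by products of entries of $\bm{\dot{\mu}(\beta)}$ controlled by \eqref{R11}--\eqref{R12}, with \eqref{R21} giving the non-singularity of the $\bm{\beta}$-block of $\bm{\Psi}_n$ and \eqref{R22} giving (A7). The only substantive detail you gloss over is that (A6) is a uniform-integrability (truncated first-moment) condition rather than a Lindeberg-type bound; the paper handles it exactly by the separation you describe — pull out $\sup_n\max_i|(\bm{\dot{\mu}})_{ij}|$ and apply dominated convergence to the remaining IID Gaussian factor — so your plan, once that step is made explicit, coincides with the paper's argument.
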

\begin{proof}
As the true distributions belong to the corresponding parametric model families, under which the distribution of $y_i$ is $\mathcal{N}\left(\mu(\bm{x}_i, \bm{\beta}), \sigma^2\right)$ with $\mu(\bm{x}_i, \bm{\beta})$ being thrice continuously differentiable, Assumptions (A1)-(A3) of Ghosh and Basu \cite{ghosh2013robust} directly follow from the property of the normal density in view of (R1). Assumption (A4) follows from Eq. (14) of Condition (R2). The form of $V_i(y_i;\bm{\theta})$ (given in Section 2.1 of the main paper) ensures that Assumption (A5) follows from Eq. (12) of (R1). Now, to prove Eq. (3.8) of Assumption (A6) of Ghosh and Basu \cite{ghosh2013robust}, we note that
\[\nabla_jV_i(y_i;\bm{\theta}) = \kappa e^{-\frac{\alpha}{2\sigma^2}(y_i-\mu(\bm{x}_i, \bm{\beta}))^2}(y_i-\mu(\bm{x}_i, \bm{\beta}))\dot{\mu}_{ij}(\bm{\beta}),~~~~ j=1,\hdots,p,\]
where $\nabla_j$ is the first order partial derivative with respect to the $j$-th component of $\bm{\beta}$ and $\kappa = -\frac{1+\alpha}{(2\pi)^{\alpha/2}\sigma^{\alpha+2}}$. Then,
\begin{align*}
    & \frac{1}{n}\sum_{i=1}^n E_i[|\nabla_jV_i(y_i;\bm{\theta})|I(|\nabla_jV_i(y_i;\bm{\theta})|>N)]\\
    & = |\kappa|\frac{1}{n}\sum_{i=1}^n E_i\left[e^{-\frac{\alpha z_i^2}{2\sigma^2}}|z_i||\dot{\mu}_{ij}(\bm{\beta})|\times I\left(e^{-\frac{\alpha z_i^2}{2\sigma^2}}|z_i||\dot{\mu}_{ij}(\bm{\beta})|>\frac{N}{|\kappa|}\right)\right], \mbox{ letting } z_i = y_i-\mu(\bm{x}_i, \bm{\beta})\\
    & \leq |\kappa|\frac{1}{n}\sum_{i=1}^n |\dot{\mu}_{ij}(\bm{\beta})| E_i\left[e^{-\frac{\alpha z_i^2}{2\sigma^2}}|z_i|\times I\left(e^{-\frac{\alpha z_i^2}{2\sigma^2}}|z_i|>\frac{N}{|\kappa|(\sup_{n>1}\max_{1\leq i\leq n}|\dot{\mu}_{ij}(\bm{\beta})|)}\right)\right]\\
    & = |\kappa|E_1\left[e^{-\frac{\alpha z_1^2}{2\sigma^2}}|z_1|\times I\left(e^{-\frac{\alpha z_1^2}{2\sigma^2}}|z_1|>\frac{N}{|\kappa|(\sup_{n>1}\max_{1\leq i\leq n}|\dot{\mu}_{ij}(\bm{\beta})|)}\right)\right] \left(\frac{1}{n}\sum_{i=1}^n|\dot{\mu}_{ij}(\bm{\beta})|\right),
\end{align*}
as $z_i$'s are IID. Since we have $\sup_{n>1}\max_{1\leq i\leq n}|\dot{\mu}_{ij}(\bm{\beta})| = O(1)$ by Eq. (12) of (R1), so, by dominated convergence theorem (DCT)
\[\lim_{N\to\infty}E_1\left[e^{-\frac{\alpha z_1^2}{2\sigma^2}}|z_1|\times I\left(e^{-\frac{\alpha z_1^2}{2\sigma^2}}|z_1|>\frac{N}{|\kappa|(\sup_{n>1}\max_{1\leq i\leq n}|\dot{\mu}_{ij}(\bm{\beta})|)}\right)\right] = 0.\]
We also have
\[\sup_{n>1}\left(\frac{1}{n}\sum_{i=1}^n|\dot{\mu}_{ij}(\bm{\beta})|\right) \leq \sup_{n>1}\max_{1\leq i\leq n}|\dot{\mu}_{ij}(\bm{\beta})| = O(1).\]
Hence, Eq. (3.8) of (A6) follows for all $j=1,2,\hdots,p$. Similarly, one can show that it also follows for $j=p+1$, i.e., for $\nabla_{\sigma^2}$, the gradient with respect to $\sigma^2$. Similarly, Eq. (3.9) of Assumption (A6) and Assumption (A7) of Ghosh and Basu \cite{ghosh2013robust} can also be shown to hold using the second part of Eq. (11) of (R1) and Eq. (14) of (R2), respectively.
\end{proof}

\noindent\textbf{Proof of the theorem:}\\
\btxt{An application of Lemma \ref{lemma-4.1} suggests that Part \textit{(i)} of the theorem follows directly from Theorem 3.1 of Ghosh and Basu \cite{ghosh2013robust}.} Further, using Lemma \eqref{lemma-4.1}, it also follows from Theorem 3.1 of Ghosh and Basu \cite{ghosh2013robust} that the asymptotic distribution of $\bm{\Omega}_n^{-\frac{1}{2}}\bm{\Psi}_n [\sqrt{n}(\bm{\widehat{\theta}}_\alpha - \bm{\theta}_0)]$ is $\mathcal{N}_{p+1}(\bm{0}, \mathbb{I}_{p+1})$, where $\bm{\Omega}_n$ and $\bm{\Psi}_n$ are as defined in Section \ref{MDPDBG}, and are to be computed at the model densities, with $\bm{\theta} = \bm{\theta}_0$. For the present case of NLR model (3) of our main paper, we can see that
\begin{equation}
\label{ui}
    \bm{u}_i(y;\bm{\theta}) = \nabla_{\bm{\theta}}\ln f_i(y; \bm{\theta}) = 
    \begin{pmatrix}
        \frac{(y-\mu(\bm{x}_i, \bm{\beta}))}{\sigma^2}\nabla_{\bm{\beta}}\mu(\bm{x}_i, \bm{\beta})\\\\
        \frac{(y-\mu(\bm{x}_i, \bm{\beta}))^2}{2\sigma^4} - \frac{1}{2\sigma^2}
    \end{pmatrix}.
\end{equation}
After some simple calculations we obtain the $(p+1)\times(p+1)$ matrix $\bm{J}_\alpha^{(i)}$ as
\begin{equation}
\label{Ji}
   \bm{J}_\alpha^{(i)} = \int \bm{u}_i(y;\bm{\theta})\bm{u}_i(y;\bm{\theta})^T f_i^{1+\alpha}(y;\bm{\theta})dy = 
   \begin{bmatrix}
       \zeta_\alpha\nabla_{\bm{\beta}}\mu(\bm{x}_i, \bm{\beta})\nabla_{\bm{\beta}}\mu(\bm{x}_i, \bm{\beta})^T & \bm{0}_p\\
       \bm{0}_p^T & \varsigma_\alpha
   \end{bmatrix}.
\end{equation}
So, a simplified form of the matrix $\bm{\Psi}_n$ is given by
\begin{equation*}
    \bm{\Psi}_n = \frac{1}{n}\sum_{i=1}^n \bm{J}_\alpha^{(i)} =
    \begin{bmatrix}
        \frac{\zeta_\alpha}{n} \left(\bm{\dot{\mu}}(\bm{\beta})^T\bm{\dot{\mu}}(\bm{\beta})\right) & \bm{0}_p\\
        \bm{0}_p^T & \varsigma_\alpha
    \end{bmatrix}.
\end{equation*}
We also have
\begin{equation}
\label{xii}
    \bm{\xi}_i = \int \bm{u}_i(y;\bm{\theta})f_i^{1+\alpha}(y;\bm{\theta})dy = 
\begin{pmatrix}
    \bm{0}_p\\\\
    -\frac{\alpha}{2}\zeta_\alpha
\end{pmatrix},
\end{equation}
and using it, we obtain a simplified form of the matrix $\bm{\Omega}_n$ as 
\begin{equation*}
    \bm{\Omega}_n = \frac{1}{n}\sum_{i=1}^n(\bm{J}^{(i)}_{2\alpha} - \bm{\xi}_i\bm{\xi}_i^T) =
    \begin{bmatrix}
        \frac{\zeta_{2\alpha}}{n} \left(\bm{\dot{\mu}}(\bm{\beta})^T\bm{\dot{\mu}}(\bm{\beta})\right) & \bm{0}_p\\
        \bm{0}_p^T & \varsigma_{2\alpha} - \frac{\alpha^2}{4}\zeta_\alpha^2
    \end{bmatrix}.
\end{equation*}
Now using these simplified form of $\bm{\Psi}_n$ and $\bm{\Omega}_n$, at $\bm{\theta} = \bm{\theta}_0$, we have
\[\bm{\Omega}_n^{-\frac{1}{2}}\bm{\Psi}_n\left[\sqrt{n}\left(\bm{\widehat{\theta}}_\alpha - \bm{\theta}_0\right)\right] =
\begin{pmatrix}
    \frac{\zeta_\alpha}{\sqrt{\zeta_{2\alpha}}}\left(\bm{\dot{\mu}}(\bm{\beta}_0)^T\bm{\dot{\mu}}(\bm{\beta}_0)\right)^{1/2}\left(\bm{\widehat{\beta}}_\alpha - \bm{\beta}_0\right)\\\\
    \frac{\varsigma_\alpha}{\sqrt{\varsigma_{2\alpha}-\frac{\alpha^2}{4}\zeta_\alpha^2}}\sqrt{n}\left(\widehat{\sigma}_\alpha^2 - \sigma_0^2\right)
\end{pmatrix},
\]
which asymptotically follows  $\mathcal{N}_{p+1}(\bm{0}, \mathbb{I}_{p+1})$, where $\zeta_p$'s and $\varsigma_p$'s are to be computed at $\sigma = \sigma_0$. Therefore, asymptotically, $\frac{\zeta_\alpha}{\sqrt{\zeta_{2\alpha}}}\left(\bm{\dot{\mu}}(\bm{\beta}_0)^T\bm{\dot{\mu}}(\bm{\beta}_0)\right)^{1/2} \left(\bm{\widehat{\beta}}_\alpha - \bm{\beta}_0\right)$ and 
$\frac{\varsigma_\alpha}{\sqrt{\varsigma_{2\alpha}-\frac{\alpha^2}{4}\zeta_\alpha^2}}\sqrt{n}\left(\widehat{\sigma}^2 - \sigma_0^2\right)$ are independently distributed with their respective asymptotic distributions being $\mathcal{N}_p(\bm{0}, \mathbb{I}_p)$ and $\mathcal{N}(0,1)$. This completes the proof.

\subsection{Proof of Proposition 4.1}\label{pf-th-R1}
For the case of the MM model (Eq. (28) of the main paper), $\bm{\dot{\mu}}(\bm{\beta})$ (given in Eq. (29) of the main paper) is of order $n\times 2$ and
\begin{equation}\label{verify-R1-1}
    \sup_{n>1}\max_{1\leq i\leq n}|\dot{\mu}_{i1}(\bm{\beta})| = \sup_{n>1}\max_{1\leq i\leq n} \frac{x_i}{\beta_2 + x_i},
\end{equation}
\begin{equation}\label{verify-R1-2}
    \sup_{n>1}\max_{1\leq i\leq n}|\dot{\mu}_{i2}(\bm{\beta})| = |\beta_1|\sup_{n>1}\max_{1\leq i\leq n} \frac{x_i}{(\beta_2 + x_i)^2},
\end{equation}
\begin{equation}\label{verify-R1-3}
    \sup_{n>1}\max_{1\leq i\leq n}|\dot{\mu}_{i1}(\bm{\beta})\dot{\mu}_{i2}(\bm{\beta})| = |\beta_1|\sup_{n>1}\max_{1\leq i\leq n} \frac{x_i^2}{(\beta_2 + x_i)^3}.
\end{equation}
We have $x_i>0$ for all $i$ and $\beta_2\geq 0$. So, $\frac{x_i}{\beta_2 + x_i}$ is bounded between $0$ and $1$. So, the quantity in Equation \eqref{verify-R1-1} is $O(1)$.

Now consider the functions $u_1(x) = \frac{x}{(\beta_2 + x)^2}$ and $u_2(x) = \frac{x^2}{(\beta_2 + x)^3}$, both defined for $x>0$. So, both $u_1$ and $u_2$ are positive real valued functions. And, by simple calculations, we get
\[\frac{d}{dx}u_1(x) =  \frac{\beta_2 - x}{(\beta_2 + x)^3} = \left\{
\begin{matrix}
    >0, & \mbox{if} & x<\beta_2,\\
    =0, & \mbox{if} & x=\beta_2,\\
    <0, & \mbox{if} & x>\beta_2,
\end{matrix}
\right.\]
and
\[\frac{d}{dx}u_2(x) =  \frac{x(2\beta_2 - x)}{(\beta_2 + x)^4} = \left\{
\begin{matrix}
    >0, & \mbox{if} & x<2\beta_2,\\
    =0, & \mbox{if} & x=2\beta_2,\\
    <0, & \mbox{if} & x>2\beta_2.
\end{matrix}
\right.\]
So, $u_1(x)$ attains its unique maxima at $x=\beta_2$, and hence it is bounded in $(0, u_1(\beta_2)]$. This shows that the quantity in Equation \eqref{verify-R1-2} is $O(1)$. Similarly, $u_2(x)$ attains its unique maxima at $x=2\beta_2$, and hence it is bounded in $(0, u_2(2\beta_2)]$, which indicates that the quantity in Equation \eqref{verify-R1-3} is $O(1)$. Therefore, (R1) is satisfied.

\subsection{Proof of Proposition 4.2}\label{pf-th-R2}
Let us define $A_n = \frac{1}{n}\bm{\dot{\mu}}(\bm{\beta})^T\bm{\dot{\mu}}(\bm{\beta})$. The characteristic equation of $A_n$ is
$\lambda^2 - tr(A_n)\lambda + det(A_n) = 0.$
So, the eigenvalues of $A_n$ are given by
$\lambda = \frac{1}{2}\left[tr(A_n) \pm \sqrt{tr(A_n)^2 - 4det(A_n)}\right]$.
Here $\lambda$'s are real as $A_n$ is a real symmetric matrix. 

Now, in order to have the minimum eigenvalue of $A_n$ strictly positive, we must have
\begin{align*}
    & tr(A_n) - \sqrt{tr(A_n)^2 - 4det(A_n)} > 0 ~~~~~~~ \Leftrightarrow ~~~det(A_n) > 0\\
    & \Leftrightarrow \sum_{i=1}^n a_i^2\ \  \sum_{i=1}^n b_i^2 > \left(\sum_{i=1}^n a_ib_i\right)^2,
   ~~~\mbox{ with }~~a_i = \frac{x_i}{\beta_2 + x_i}, ~b_i = \frac{\beta_1 x_i}{(\beta_2 + x_i)^2},
\end{align*}
%with $a_i = \frac{x_i}{\beta_2 + x_i}, b_i = \frac{\beta_1 x_i}{(\beta_2 + x_i)^2}$, 
which is true by Cauchy-Schwartz inequality unless the equality occurs. The equality occurs only when $a_i \propto b_i\  \mbox{ for all } i$, i.e., when $x_i =$ constant,  for all $i$ or $\beta_1 = 0$.

Therefore, in order to have the minimum eigenvalue of $A_n$ strictly positive and satisfy Condition (R1), it is necessary that the values $x_i$'s are not all same, for $n\geq 2$, and $\beta_1 \neq 0$.

\section{Additional Illustrations under the MM model}
Here we present additional illustrations of the theoretical results and empirical performances under the MM model, as indicated in the main manuscript.

\subsection{More Influence function plots of the MDPDEs}\label{IFMM}
Figures \ref{fig:IF-S1-10-40}-\ref{fig:IF-S2-15-45} present the IFs of the MDPDEs of the MM model parameters ($\beta_1, \beta_2$ and $\sigma^2$) for contamination in a particular observation, corresponding to the $i_0$-th smallest $x_i$. These IFs are again seen to be bounded for all $\alpha>0$ and unbounded at $\alpha=0$, for any choice of $i_0$ under both setups.

\begin{figure}[!h]
     \centering
     \begin{subfigure}[b]{0.45\textwidth}
         \centering
         \includegraphics[width=\textwidth]{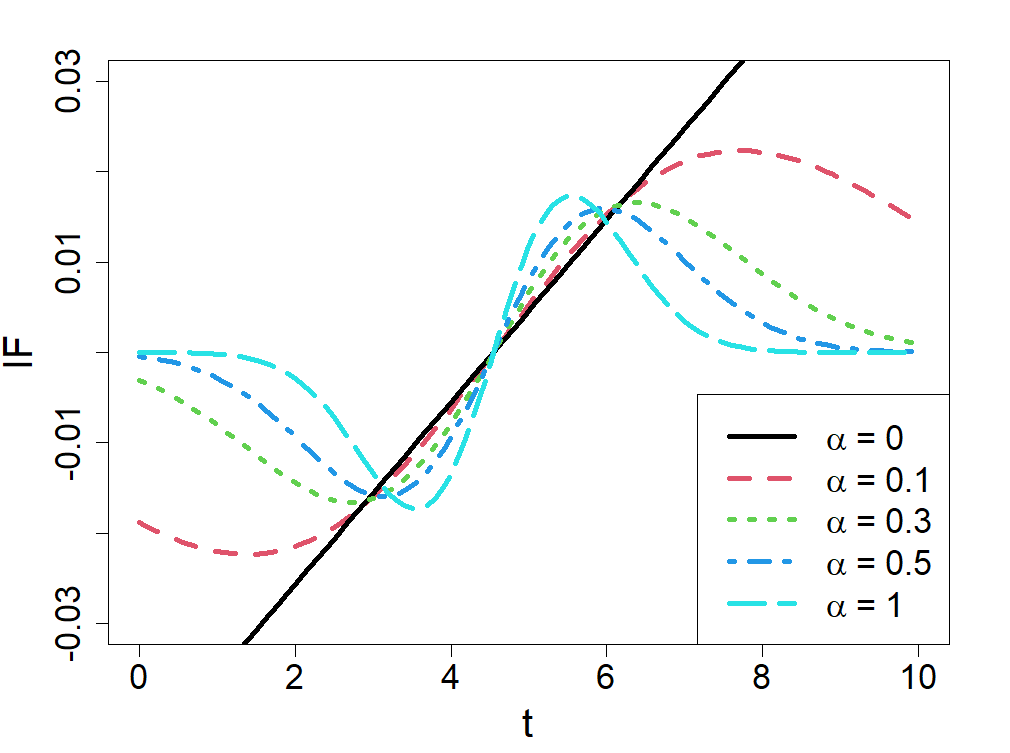}
         \caption{IF of $T_\alpha^{\beta_1}$ with $i_0 = 10$}
         \label{fig:IF-S1-10-th1}
     \end{subfigure}
     \hfill
     \begin{subfigure}[b]{0.45\textwidth}
         \centering
         \includegraphics[width=\textwidth]{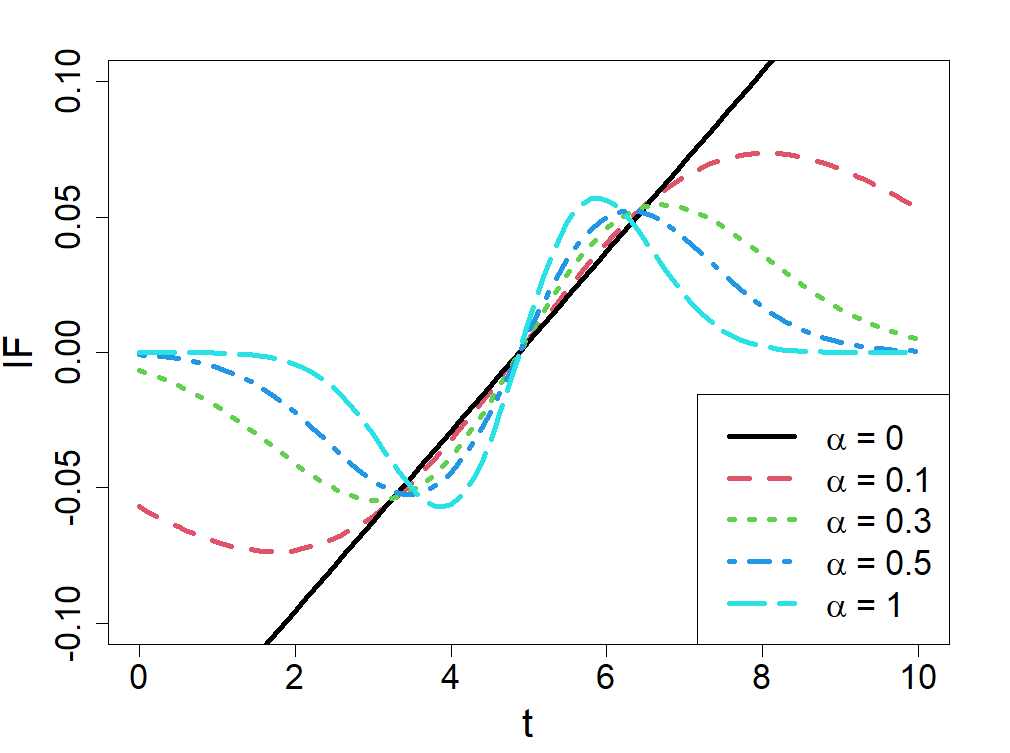}
         \caption{IF of $T_\alpha^{\beta_1}$ with $i_0 = 40$}
         \label{fig:IF-S1-40-th1}
     \end{subfigure}
     \hfill
     \begin{subfigure}[b]{0.45\textwidth}
         \centering
         \includegraphics[width=\textwidth]{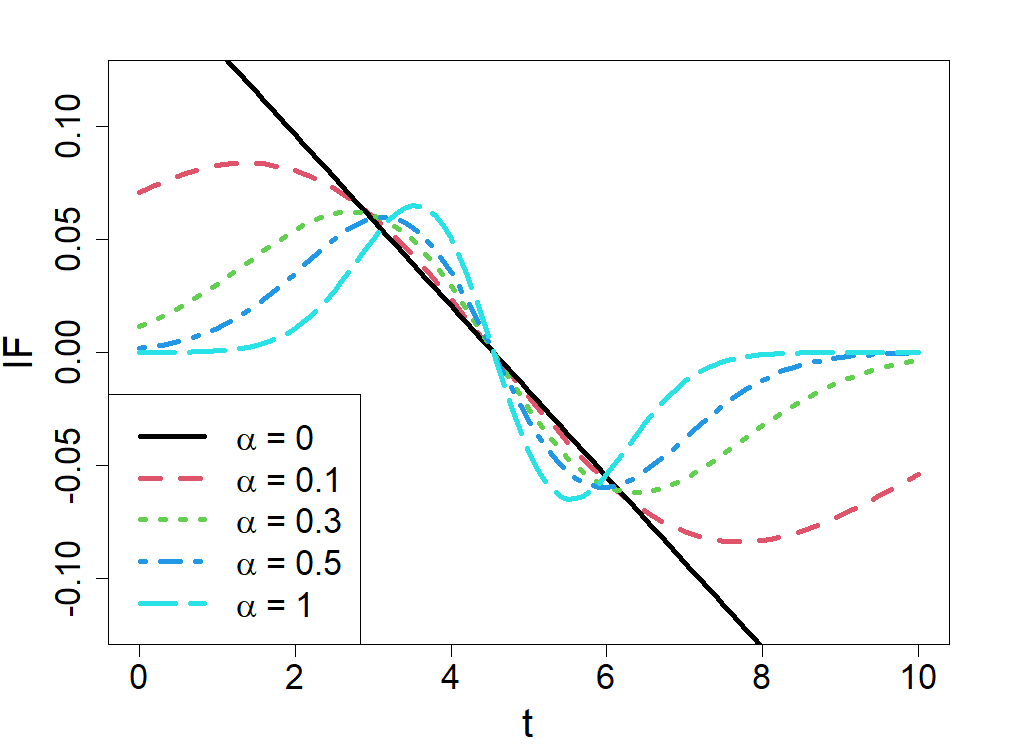}
         \caption{IF of $T_\alpha^{\beta_2}$ with $i_0 = 10$}
         \label{fig:IF-S1-10-th2}
     \end{subfigure}
     \hfill
     \begin{subfigure}[b]{0.45\textwidth}
         \centering
         \includegraphics[width=\textwidth]{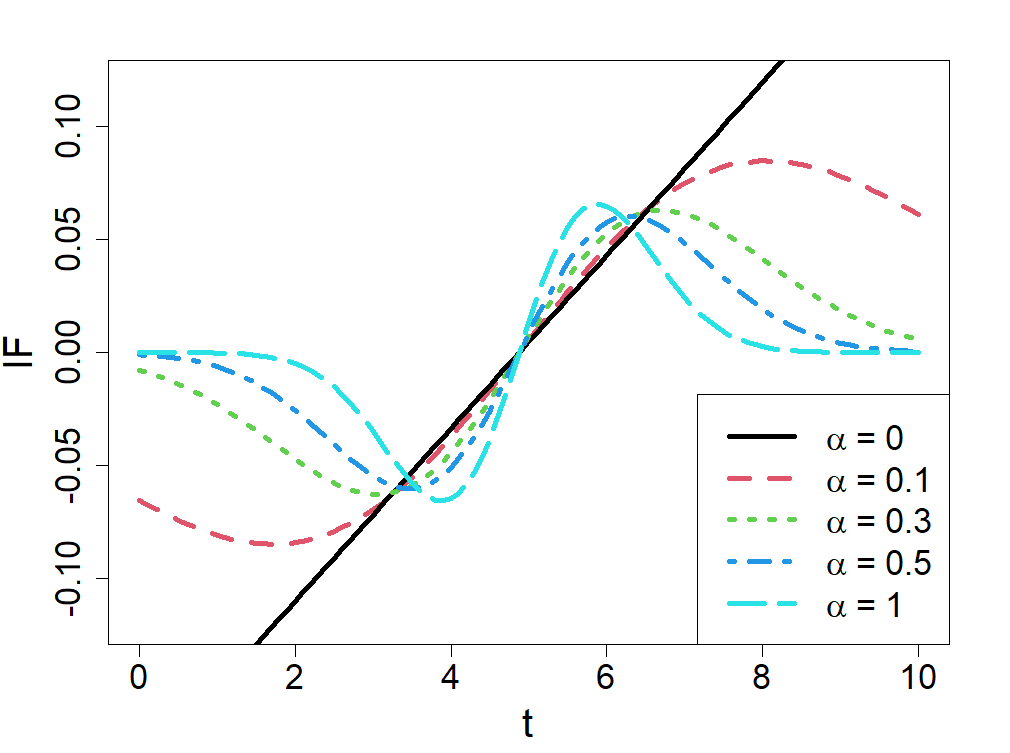}
         \caption{IF of $T_\alpha^{\beta_2}$ with $i_0 = 40$}
         \label{fig:IF-S1-40-th2}
     \end{subfigure}
     \hfill
     \begin{subfigure}[b]{0.45\textwidth}
         \centering
         \includegraphics[width=\textwidth]{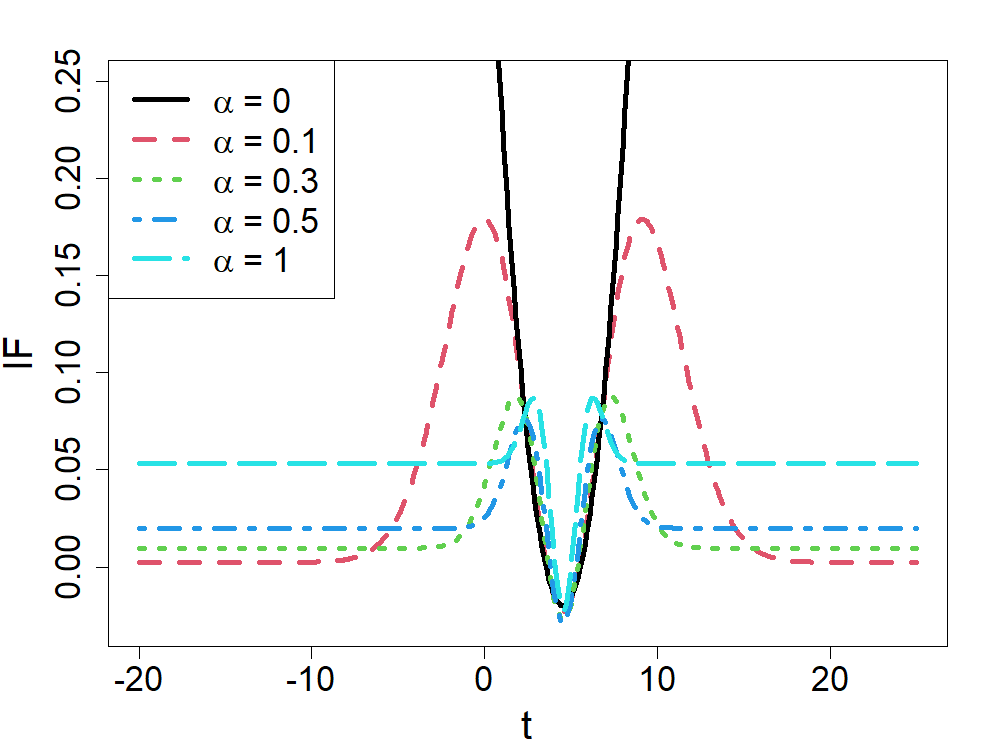}
         \caption{IF of $T_\alpha^{\sigma^2}$ with $i_0 = 10$}
         \label{fig:IF-S1-10-sig}
     \end{subfigure}
     \hfill
     \begin{subfigure}[b]{0.45\textwidth}
         \centering
         \includegraphics[width=\textwidth]{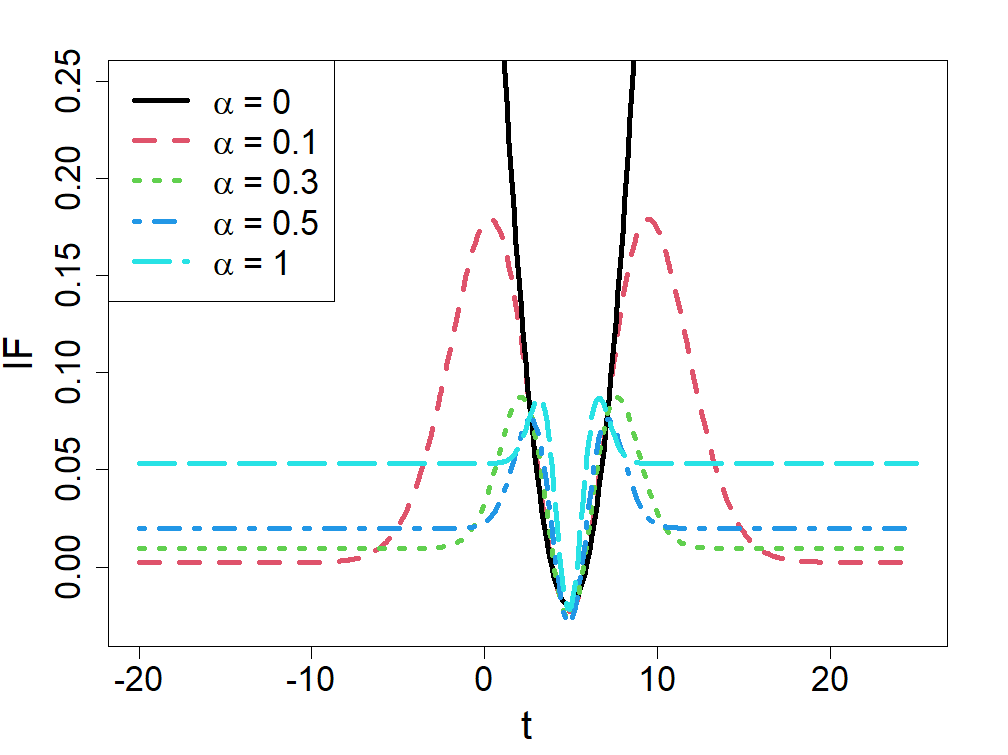}
         \caption{IF of $T_\alpha^{\sigma^2}$ with $i_0 = 40$}
         \label{fig:IF-S1-40-sig}
     \end{subfigure}
     \caption{\btxt{IFs of the MDPDEs of the parameters $\beta_1$, $\beta_2$, and $\sigma^2$ in the MM model, for contamination only in one observation, corresponding to the $i_0$-th order statistic of the covariate $x_i$s, under Setup 1}}
     \label{fig:IF-S1-10-40}
\end{figure}

\begin{figure}[!h]
     \centering
     \begin{subfigure}[b]{0.45\textwidth}
         \centering
         \includegraphics[width=\textwidth]{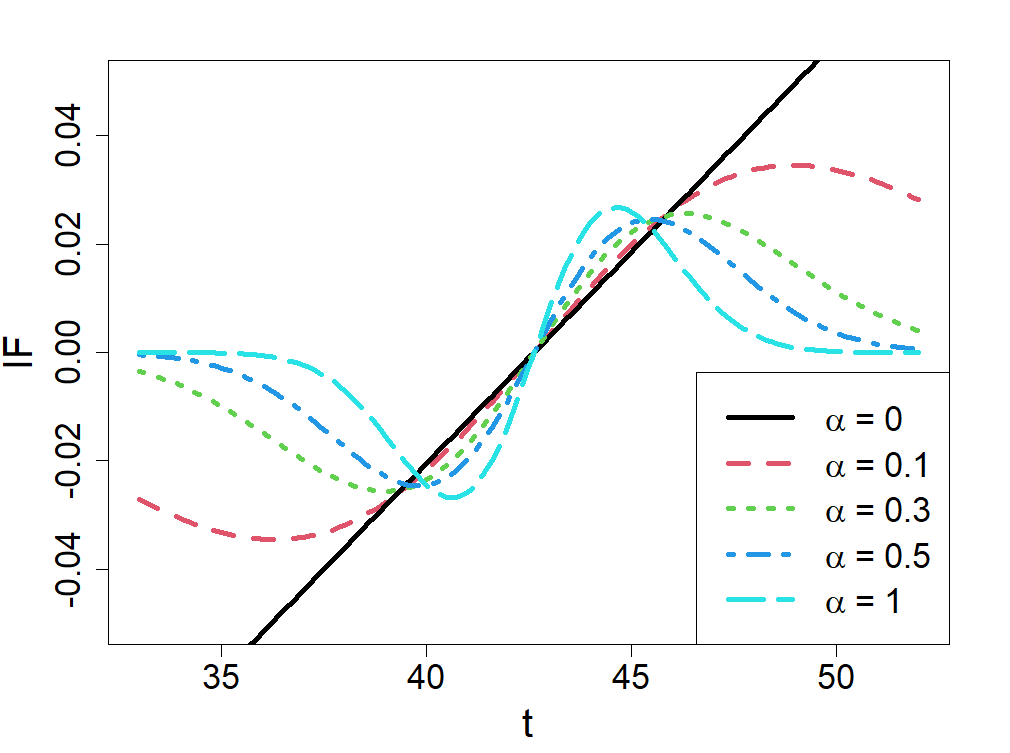}
         \caption{IF of $T_\alpha^{\beta_1}$ with $i_0 = 15$}
         \label{fig:IF-S2-15-th1}
     \end{subfigure}
     \hfill
     \begin{subfigure}[b]{0.45\textwidth}
         \centering
         \includegraphics[width=\textwidth]{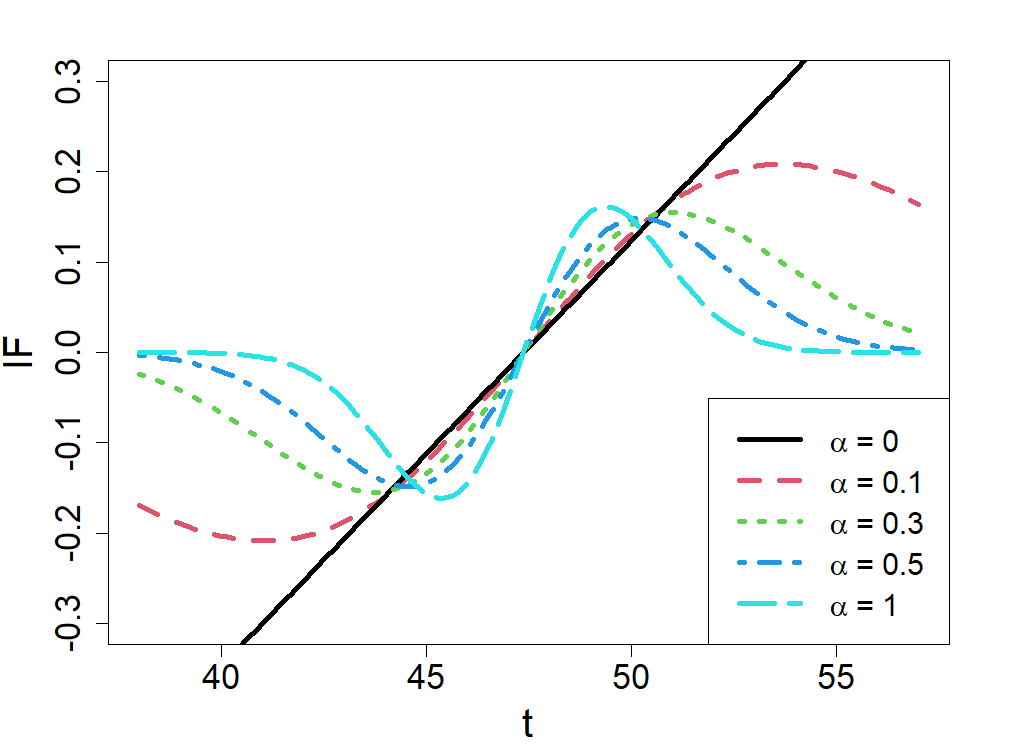}
         \caption{IF of $T_\alpha^{\beta_1}$ with $i_0 = 45$}
         \label{fig:IF-S2-45-th1}
     \end{subfigure}
     \hfill
     \begin{subfigure}[b]{0.45\textwidth}
         \centering
         \includegraphics[width=\textwidth]{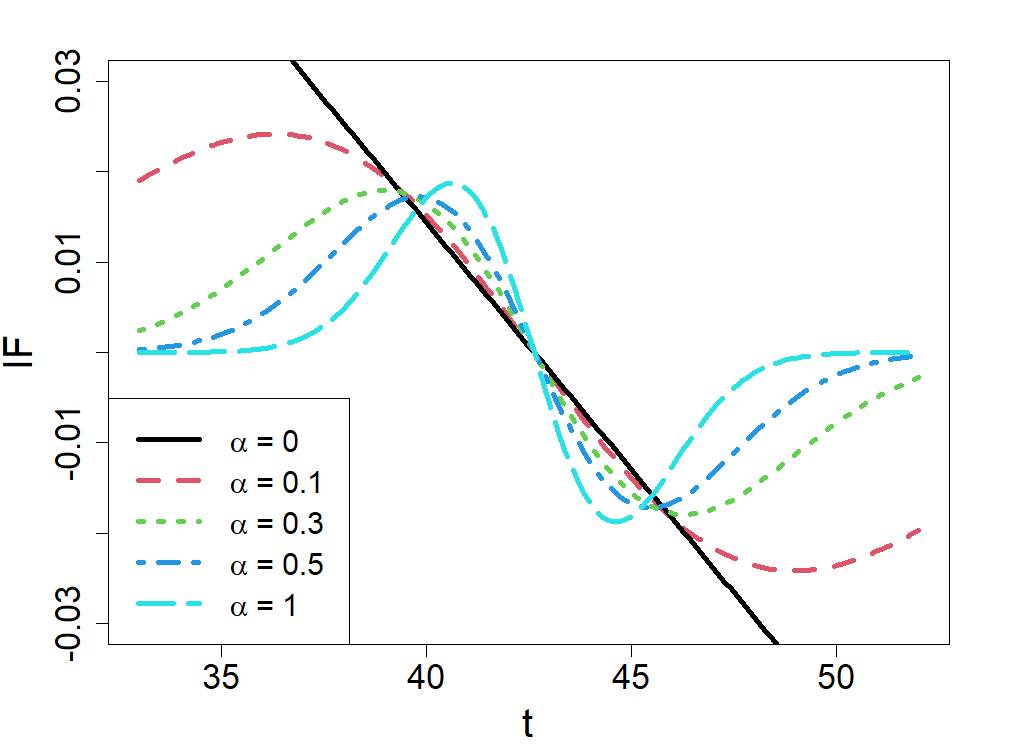}
         \caption{IF of $T_\alpha^{\beta_2}$ with $i_0 = 15$}
         \label{fig:IF-S2-15-th2}
     \end{subfigure}
     \hfill
     \begin{subfigure}[b]{0.45\textwidth}
         \centering
         \includegraphics[width=\textwidth]{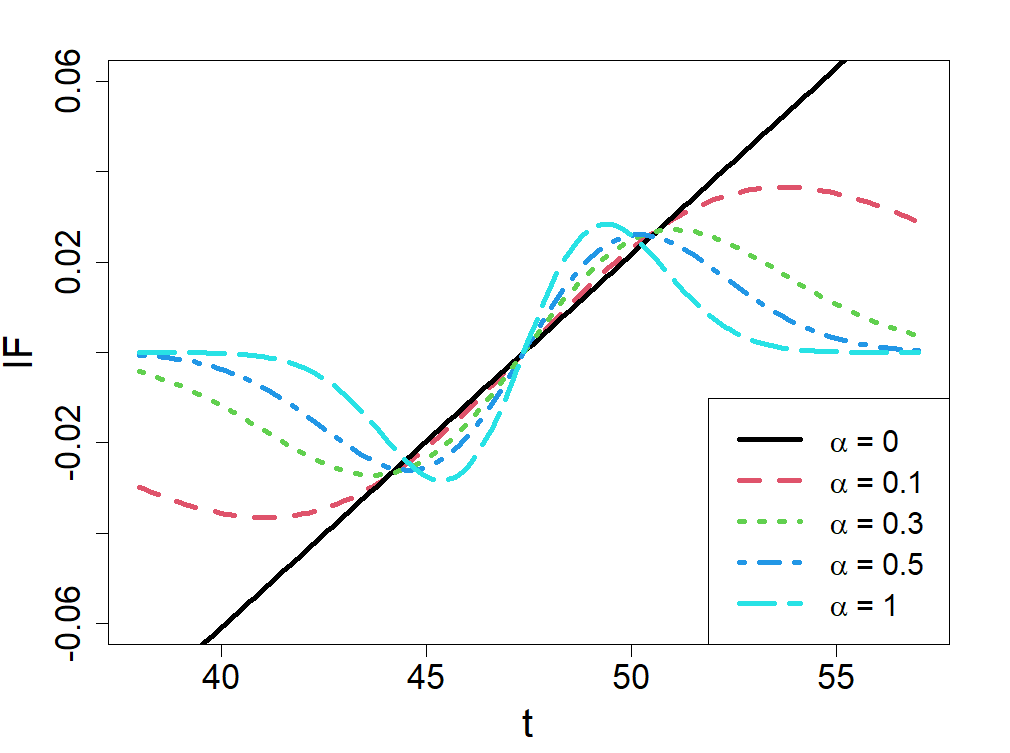}
         \caption{IF of $T_\alpha^{\beta_2}$ with $i_0 = 45$}
         \label{fig:IF-S2-45-th2}
     \end{subfigure}
     \hfill
     \begin{subfigure}[b]{0.45\textwidth}
         \centering
         \includegraphics[width=\textwidth]{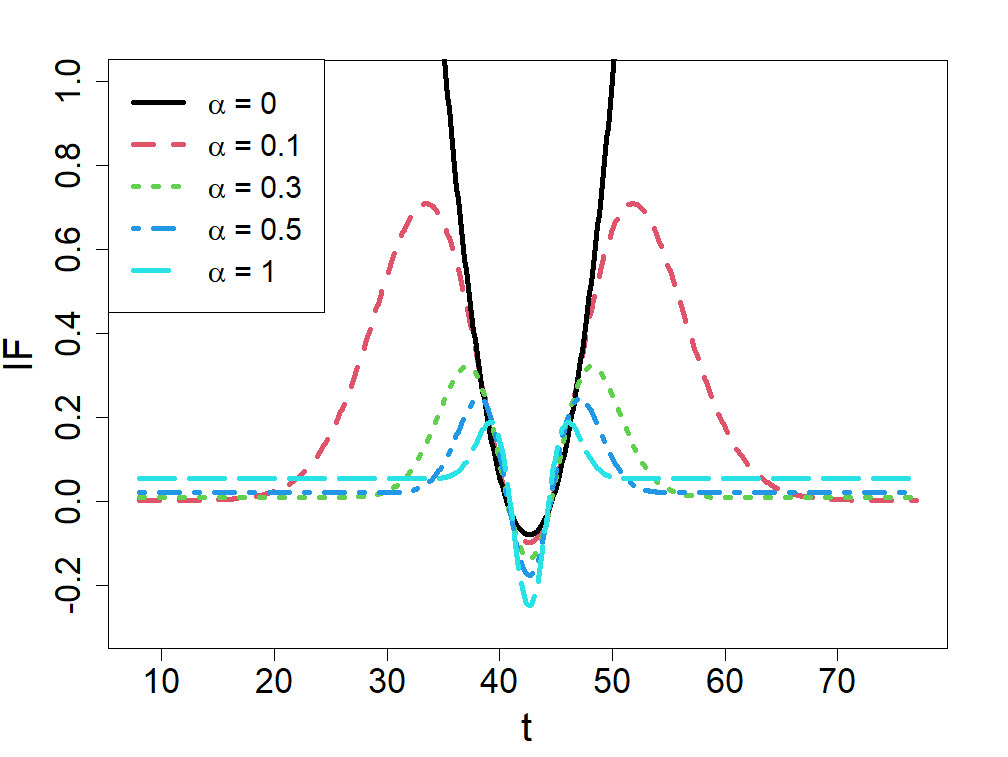}
         \caption{IF of $T_\alpha^{\sigma^2}$ with $i_0 = 15$}
         \label{fig:IF-S2-15-sig}
     \end{subfigure}
     \hfill
     \begin{subfigure}[b]{0.45\textwidth}
         \centering
         \includegraphics[width=\textwidth]{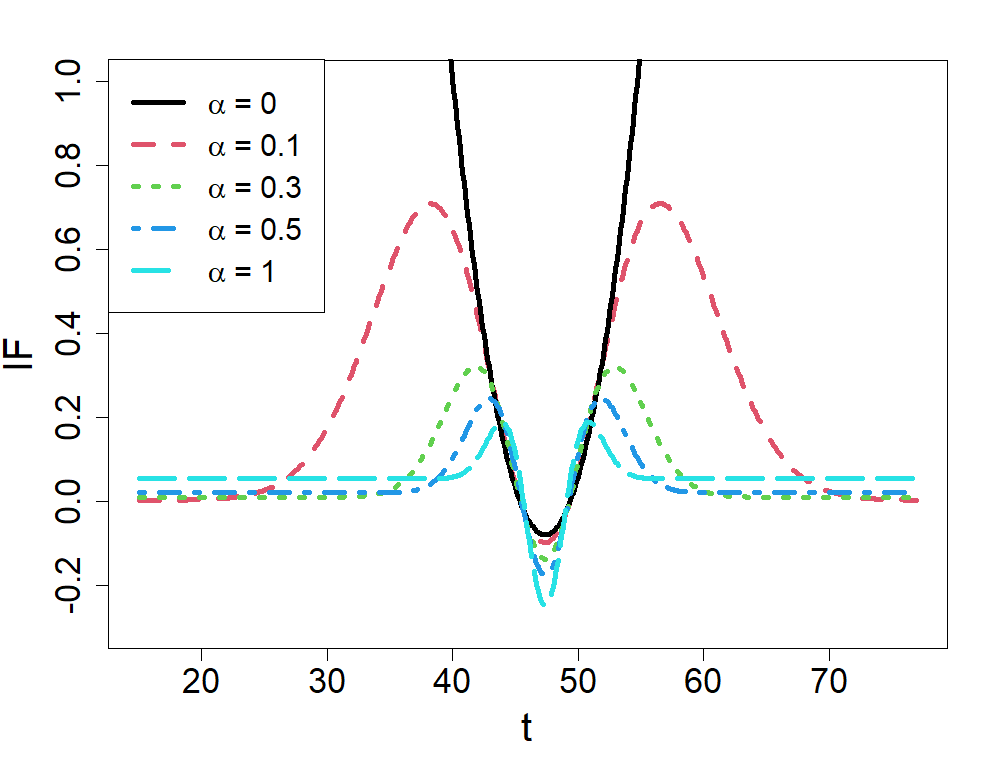}
         \caption{IF of $T_\alpha^{\sigma^2}$ with $i_0 = 45$}
         \label{fig:IF-S2-45-sig}
     \end{subfigure}
     \caption{\btxt{IFs of the MDPDEs of the parameters $\beta_1$, $\beta_2$, and $\sigma^2$ in the MM model, for contamination only in one observation, corresponding to the $i_0$-th order statistic of the covariate $x_i$s, under Setup 2}}
     \label{fig:IF-S2-15-45}
\end{figure}

\subsection{Asymptotic efficiency and contiguous power}
The AREs of different robust estimators of $\bm{\beta}$ under the MM model are presented in Table \ref{tab:ARETable}, which are, in fact, also applicable for general NLR models (since the MDPDEs for all these estimators are independent of the choice of non-linear mean function). For the proposed MDPDEs, these AREs are computed as per the formula given in Section 2.2 of the main manuscript, and the same for the remaining robust estimators are computed as per the discussions in Section \ref{ExistingRM}. The results show that the ARE of the MDPDE decreases with increasing $\alpha$, but the loss is very small for the smaller values of $\alpha$, as compared to other robust estimators at suitable tuning parameter values required to produce equivalent robustness.

Table \ref{tab:contpower} presents the asymptotic contiguous power of the MDPDE based Wald-type tests for testing the simple null hypothesis $H_0$, given in (25) of the main manuscript, against the one sided contiguous alternative of the form $H_{1,n}: \beta_k = \beta_k^0 + n^{-1/2}d,\ d>0$; these are computed based on the formula given by (27) of the main manuscript. As noted there, these shows that, for any fixed $\alpha$, the power increases with $d^*$; for a fixed $d^*$, the power decreases with $\alpha$, but the loss in power is not quite significant.

\begin{table}[H]
	\centering
	\caption{AREs of different estimators of $\bm{\beta}$ of the NLR model at several values of the associated tuning parameters ($c$, $w$ or $\alpha$)}
	\begin{tabular}{|c|c|c|c|c|c|c|c|}
		\hline
		\multicolumn{2}{|c}{Huber's M-est} & \multicolumn{2}{|c}{Tukey's M-est} & \multicolumn{2}{|c}{KPS est} & \multicolumn{2}{|c|}{MDPDE} \\ \hline
		c & ARE & c & ARE & w & ARE & $\alpha$ & ARE \\ \hline
		1.000 & 0.466 & 3.000 & 0.776 & 0.5 & 0.911 & 0.1 & 0.988 \\
		1.345 & 0.589 & 3.500 & 0.858 & 1.0 & 0.647 & 0.3 & 0.921 \\
		1.700 & 0.710 & 4.000 & 0.910 & 1.3 & 0.503 & 0.5 & 0.838 \\
		2.000 & 0.802 & 4.685 & 0.950 & 1.5 & 0.424 & 0.7 & 0.757 \\
		2.500 & 0.916 & 5.500 & 0.973 & 1.7 & 0.357 & 1.0 & 0.650 \\ \hline
	\end{tabular}
	\label{tab:ARETable}
\end{table}
\clearpage

\begin{table}[!h]
    \centering
    \caption{Asymptotic contiguous power of the Wald-type test for testing simple null hypothesis with one-sided alternatives as discussed in Section 3.3 of the main paper}
    \begin{tabular}{|r|rrrrrr|}
    \hline
        \multirow{2}{*}{$d^*$} & \multicolumn{6}{c|}{$\alpha$}  \\ \cline{2-7}
        & 0 & 0.1 & 0.3 & 0.5 & 0.7 & 1 \\ \hline
        0 & 0.050 & 0.050 & 0.050 & 0.050 & 0.050 & 0.050 \\
        1 & 0.260 & 0.258 & 0.247 & 0.233 & 0.219 & 0.201 \\
        2 & 0.639 & 0.634 & 0.608 & 0.574 & 0.538 & 0.487 \\
        3 & 0.912 & 0.909 & 0.891 & 0.865 & 0.833 & 0.780 \\
        5 & 1.000 & 1.000 & 0.999 & 0.998 & 0.997 & 0.991 \\
        7 & 1.000 & 1.000 & 1.000 & 1.000 & 1.000 & 1.000 \\ \hline
    \end{tabular}
    \label{tab:contpower}
\end{table}

\subsection{\btxt{Runtime comparison}}
\btxt{To compare the computation time required by the robust estimators discussed throughout this paper, we have conducted the simulation study under Setup 2 from Section 4.3 of the main paper, again with $e_c = 20\%$ contamination in both $x$-$y$ direction, and compute all these estimators with increasing sample sizes $n=25,50,\ldots,250$. For each sample size, we fitted the MM model to the simulated data with each of the robust estimators -- MDPDE with different $\alpha$, KPS with $w=1$, Huber's M-estimator with $c=1.345$, and Tukey's M estimator with $c = 4.685$. We replicated this process $1000$ times and plotted the average time (in seconds) required by each method in Figure \ref{fig:time}. The plot clearly demonstrates that the computation of MDPDEs takes the lowest time compared to the other existing robust methods (with their available algorithms) and scales much slowly with increasing sample sizes. This makes our proposal feasible for even larger sample sizes as well.}
\begin{figure}[H]
     \centering
     \begin{subfigure}[b]{0.48\textwidth}
         \centering
         \includegraphics[width=\textwidth]{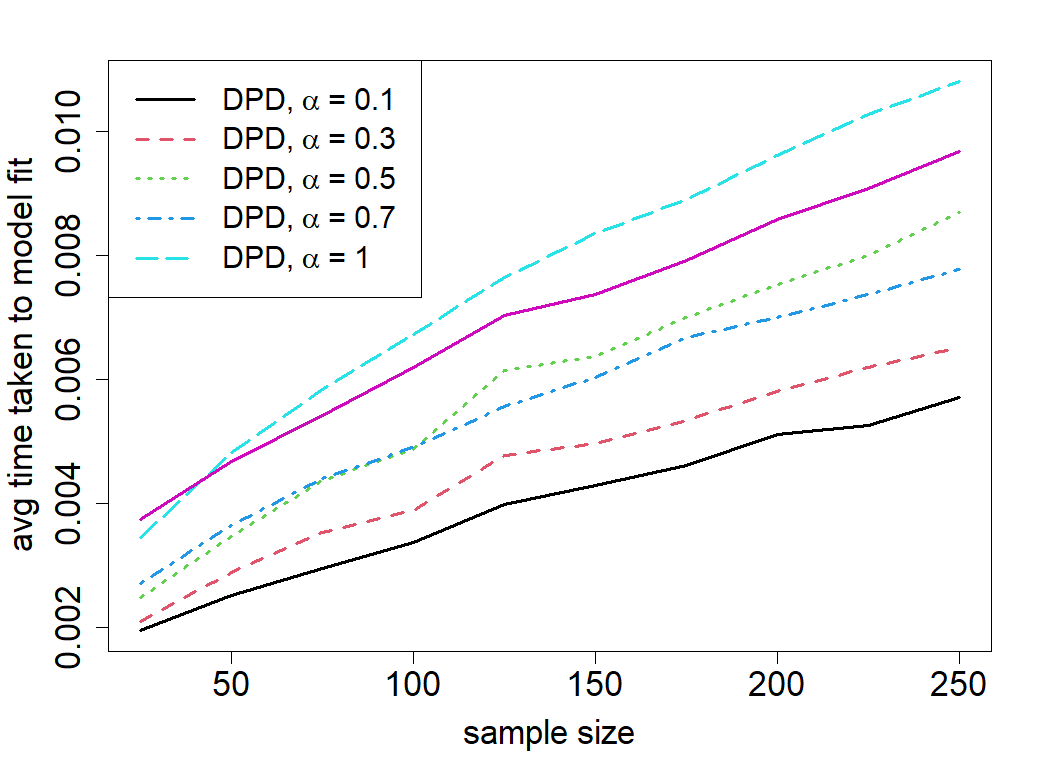}
    \caption{\btxt{Only the MDPDEs with different $\alpha$}}
    \label{fig:time-dpd}
     \end{subfigure}
     \hfill
     \begin{subfigure}[b]{0.48\textwidth}
         \centering
         \includegraphics[width=\textwidth]{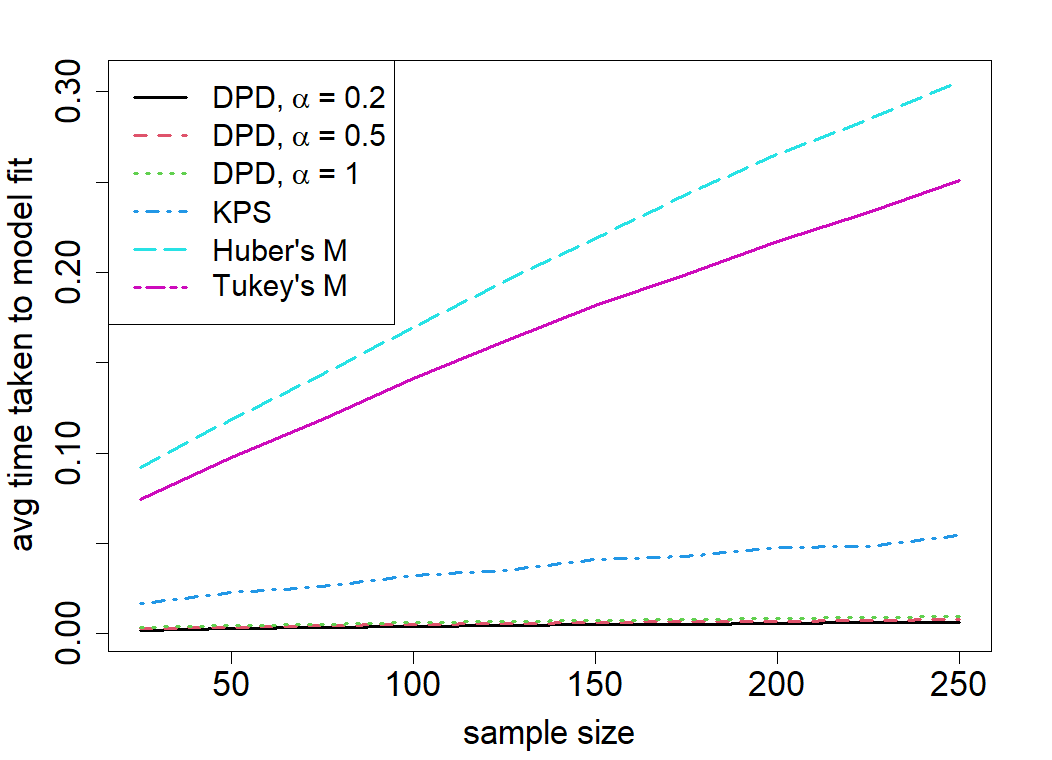}
    \caption{\btxt{The MDPDEs with its competitors}}
    \label{fig:time-comp}
     \end{subfigure}
     \caption{\btxt{Plot of average computation time (in seconds) required by the robust estimators at various sample sizes}}
     \label{fig:time}
\end{figure}

\subsection{Further results from  simulation studies}
Tables \ref{tab:S1BMAppendix} and \ref{tab:S2BMAppendix} present the EBias and EMSE results of the MDPDE and its robust competitors under the MM model, based on 1000 replications for contaminated data with contamination proportions $e_c = 20\%$ and $30\%$; associated average prediction errors are given in Table \ref{tab:S12APEAppendix}. Table \ref{tab:S1BM} presents the similar EBias and EMSE results, under Setup 1, for the pure data, as well as the contaminated data with contamination proportions $e_c = 10\%$ and $40\%$.

\begin{landscape}
\begin{table}
    \centering
    \caption{Empirical absolute bias and empirical MSE obtained by different estimators in simulations with Setup 1 and different amounts of contamination (the minimum values obtained for each parameter are highlighted in bold font)}
    \resizebox{1.5\textwidth}{!}
    {\begin{tabular}{|l|rrr|rrr|rrr|rrr|rrr|}
        \hline
        Outlier Direction & \multicolumn{3}{c|}{No outlier (pure data)} & \multicolumn{6}{c|}{Contamination only in response} & \multicolumn{6}{c|}{Contamination in both response and covariate} \\
        Outlier proportion & ~ & ~&~ & \multicolumn{3}{c|}{10\%} & \multicolumn{3}{c|}{40\%} & \multicolumn{3}{c|}{10\%} & \multicolumn{3}{c|}{40\%} \\ \hline
        ~ & $\beta_1$  & $\beta_2$  & $\sigma$  & $\beta_1$ & $\beta_2$  & $\sigma$ & $\beta_1$ & $\beta_2$ & $\sigma$ & $\beta_1$ & $\beta_2$  & $\sigma$ & $\beta_1$ & $\beta_2$  & $\sigma$ \\ \hline\hline
        \multicolumn{16}{|l|}{\textbf{\underline{Empirical Bias}}} \\
        \multicolumn{16}{|l|}{}\\
        OLS & \textbf{0.0111} & 0.0681 & ~ & 3.1515 & 4.1252 & ~ & 9.2057 & 2.1921 & ~ & 68.96 & 365.49 & ~ & 36.35 & 119.67 & ~ \\
        MOM & 0.0814 & 0.1973 & ~ & 85.37 & 1.1631 & ~ & 59.26 & 6.7111 & ~ & 0.4602 & 205.45 & ~ & 7.1474 & 363.16 & ~ \\
        KPS(w=0.5) & 0.0143 & 0.0728 & ~ & 0.0083 & 0.0466 & ~ & 0.3470 & 0.7564 & ~ & 0.0184 & 0.0734 & ~ & 31.80 & 116.04 & ~ \\
        KPS(w=1) & 0.0256 & 0.1087 & ~ & 0.0200 & 0.0964 & ~ & 0.0300 & 0.1732 & ~ & 0.0220 & 0.1006 & ~ & 0.0789 & 0.1252 & ~ \\
        KPS(w=1.5) & 0.0533 & 0.2256 & ~ & 0.0394 & 0.1717 & ~ & 0.0036 & 0.0036 & ~ & 0.0410 & 0.1767 & ~ & 0.0276 & 0.0815 & ~ \\
        M-Est (Huber) & 0.0141 & 0.0809 & ~ & 0.2059 & 0.0326 & ~ & 4.4738 & 1.2853 & ~ & 0.4880 & 1.0417 & ~ & 11.23 & 9.0000 & ~ \\
        M-Est (Tukey) & 0.0135 & 0.0800 & ~ & 0.0104 & 0.0682 & ~ & 2.5193 & 1.4471 & ~ & 0.0137 & 0.0778 & ~ & 11.25 & 8.9182 & ~ \\ \hline
        MDPDE($\alpha$=0.05) & \textbf{0.0110} & \textbf{0.0677} & 0.0232 & 0.4978 & 0.0948 & 1.3548 & 7.7955 & 0.4942 & 8.6529 & 12.79 & 63.72 & 1.0553 & 31.68 & 96.26 & 4.9905 \\
        MDPDE($\alpha$=0.1) & 0.0112 & 0.0690 & 0.0224 & \textbf{0.0010} & \textbf{0.0271} & 0.0037 & 2.3448 & 1.1511 & 4.0321 & 0.0173 & \textbf{0.0721} & \textbf{0.0003} & 15.69 & 51.07 & 2.3681 \\
        MDPDE($\alpha$=0.2) & 0.0120 & 0.0723 & \textbf{0.0223} & 0.0056 & 0.0538 & 0.0106 & 0.0156 & 0.0208 & \textbf{0.0957} & \textbf{0.0124}& 0.0733 & 0.0098 & 0.2449 & 0.8665 & \textbf{0.1055} \\
        MDPDE($\alpha$=0.3) & 0.0130 & 0.0766 & 0.0238 & 0.0098 & 0.0704 & 0.0064 & \textbf{0.0063} & \textbf{0.0496} & 0.1081 & 0.0128 & 0.0782 & 0.0058 & 0.0634 & 0.2640 & 0.1095 \\
        MDPDE($\alpha$=0.5) & 0.0156 & 0.0861 & 0.0299 & 0.0143 & 0.0883 & \textbf{0.0027} & 0.0122 & 0.0851 & 0.2053 & 0.0153 & 0.0897 & 0.0032 & \textbf{0.0267} & \textbf{0.1263} & 0.2087 \\
        MDPDE($\alpha$=0.7) & 0.0195 & 0.0986 & 0.0384 & 0.0179 & 0.1006 & 0.0106 & 0.0173 & 0.1032 & 0.3147 & 0.0189 & 0.1019 & 0.0109 & 0.0274 & 0.1312 & 0.3180 \\
        MDPDE($\alpha$=1) & 0.0272 & 0.1259 & 0.0514 & 0.0243 & 0.1226 & 0.0198 & 0.0210 & 0.1174 & 0.4614 & 0.0248 & 0.1226 & 0.0200 & 0.0287 & 0.1377 & 0.4646 \\ \hline \hline
        \multicolumn{16}{|l|}{\textbf{\underline{Empirical MSE}}} \\
        \multicolumn{16}{|l|}{}\\
        OLS & \textbf{0.0421} & \textbf{0.2790} & ~ & 68.24 & 937.06 & ~ & 131.39 & 221.49 & ~ & 5342 & 149877 & ~ & 1358 & 15185 & ~ \\
        MOM & 0.1283 & 0.9449 & ~ & 7.1E+06 & 33.18 & ~ & 2.2E+06 & 533.73 & ~ & 0.9885 & 50252 & ~ & 63.87 & 1.2E+7 & ~ \\
        KPS(w=0.5) & 0.0483 & 0.3253 & ~ & 0.0575 & 0.4145 & ~ & 1.0585 & 2.6777 & ~ & 0.0547 & 0.3925 & ~ & 1376 & 18957 & ~ \\
        KPS(w=1) & 0.0677 & 0.4583 & ~ & 0.0691 & 0.4915 & ~ & 0.1233 & 0.9721 & ~ & 0.0681 & 0.4817 & ~ & 0.1185 & 0.9277 & ~ \\
        KPS(w=1.5) & 0.1121 & 0.9119 & ~ & 0.0977 & 0.7668 & ~ & 0.1095 & 0.8442 & ~ & 0.0974 & 0.7585 & ~ & 0.1028 & 0.8029 & ~ \\
        M-Est (Huber) & 0.0446 & 0.3071 & ~ & 0.1032 & 0.4390 & ~ & 31.11 & 10.15 & ~ & 0.4239 & 3.7898 & ~ & 127.64 & 81.00 & ~ \\
        M-Est (Tukey) & 0.0447 & 0.3084 & ~ & 0.0504 & 0.3715 & ~ & 27.40 & 11.60 & ~ & 0.0494 & 0.3587 & ~ & 129.72 & 80.25 & ~ \\ \hline
        MDPDE($\alpha$=0.05) & 0.0423 & 0.2808 & \textbf{0.0100} & 0.9278 & 1.9099 & 5.0569 & 94.07 & 160.72 & 77.42 & 473.90 & 11987  & 2.8641 & 1041  & 9818  & 25.92 \\
        MDPDE($\alpha$=0.1) & 0.0426 & 0.2842 & 0.0101 & 0.0538 & 0.4031 & 0.0144 & 59.55  & 463.88 & 38.60 & \textbf{0.0492} & \textbf{0.3540} & 0.0150 & 532.03 & 5752 & 12.40 \\
        MDPDE($\alpha$=0.2) & 0.0438 & 0.2958 & 0.0107 & \textbf{0.0525} & \textbf{0.3904} & \textbf{0.0124} & 0.1546 & \textbf{0.7214} & 0.3195 & 0.0499 & 0.3605 & \textbf{0.0126} & 6.6965 & 84.18 & 0.2012 \\
        MDPDE($\alpha$=0.3) & 0.0457 & 0.3142 & 0.0116 & 0.0533 & 0.3928 & 0.0132 & 0.0953 & 0.7249 & 0.0961 & 0.0518 & 0.3771 & 0.0133 & 1.3977 & 21.98 & \textbf{0.0552} \\
        MDPDE($\alpha$=0.5) & 0.0513 & 0.3641 & 0.0140 & 0.0576 & 0.4302 & 0.0158 & \textbf{0.0906} & 0.7270 & \textbf{0.0739} & 0.0569 & 0.4219 & 0.0159 & \textbf{0.0875} & \textbf{0.6895} & 0.0757 \\
        MDPDE($\alpha$=0.7) & 0.0577 & 0.4189 & 0.0169 & 0.0634 & 0.4804 & 0.0189 & 0.0922 & 0.7410 & 0.1373 & 0.0627 & 0.4717 & 0.0189 & 0.0901 & 0.7245 & 0.1397 \\
        MDPDE($\alpha$=1)   & 0.0695 & 0.5153 & 0.0214 & 0.0730 & 0.5654 & 0.0230 & 0.0946 & 0.7683 & 0.2605 & 0.0726 & 0.5597 & 0.0230 & 0.0934 & 0.7629 & 0.2637 \\ \hline
    \end{tabular}}
    \label{tab:S1BM}
\end{table} 
\end{landscape}

\begin{landscape}
\begin{table}
    \centering
    \caption{Empirical absolute bias and empirical MSE obtained by different estimators in simulations with Setup 1 and different amounts of contamination (the minimum values obtained for each parameter are highlighted in bold font)}
\resizebox{1.3\textheight}{!}
{  \begin{tabular}{|l|rrr|rrr|rrr|rrr|rrr|}
        \hline
        Outlier Direction & \multicolumn{6}{c|}{Contamination only in response} & \multicolumn{6}{c|}{Contamination in both response and covariate} \\
        Outlier proportion & \multicolumn{3}{c|}{20\%} & \multicolumn{3}{c|}{30\%} & \multicolumn{3}{c|}{20\%} & \multicolumn{3}{c|}{30\%} \\ \hline
        ~ & $\beta_1$ & $\beta_2$  & $\sigma$ & $\beta_1$ & $\beta_2$ & $\sigma$ & $\beta_1$ & $\beta_2$  & $\sigma$ & $\beta_1$ & $\beta_2$  & $\sigma$ \\ \hline\hline
        \multicolumn{13}{|l|}{\textbf{\underline{Empirical Bias}}} \\ 
        \multicolumn{13}{|l|}{} \\
        OLS & 5.9305 & 5.4389 & ~ & 7.3261 & 2.8823 & ~ & 53.12 & 243.45 & ~ & 42.21 & 163.89 & ~ \\
        MOM & 4.1274 & 3.9194 & ~ & 124.75 & 7.3124 & ~ & 3.1807 & 221.13 & ~ & 5.1049 & 356.55 & ~ \\
        KPS(w=0.5) & 0.0132 & 0.0094 & ~ & 0.0221 & 0.1983 & ~ & 0.0521 & 0.1183 & ~ & 1.1789 & 4.7107 & ~ \\
        KPS(w=1) & 0.0234 & 0.1011 & ~ & 0.0115 & 0.0457 & ~ & 0.0287 & 0.1117 & ~ & 0.0314 & 0.0955 & ~ \\
        KPS(w=1.5) & 0.0388 & 0.1583 & ~ & 0.0247 & 0.1038 & ~ & 0.0405 & 0.1605 & ~ & 0.0335 & 0.1268 & ~ \\
        M-Est (Huber) & 0.5502 & 0.0696 & ~ & 1.3424 & 0.3195 & ~ & 3.1837 & 8.8381 & ~ & 5.7900 & 9.0000 & ~ \\
        M-Est (Tukey) & 0.0192 & 0.0855 & ~ & 0.0248 & 0.0710 & ~ & 0.0255 & 0.1010 & ~ & 0.5843 & 1.0422 & ~ \\ \hline
        MDPDE($\alpha$=0.05) & 2.9360 & 0.5796 & 5.0024 & 5.5897 & 0.9690 & 7.4614 & 29.49 & 123.06 & 2.9775 & 32.75 & 116.42 & 4.2146 \\
        MDPDE($\alpha$=0.1) & 0.0941 & \textbf{0.0249} & 0.2047 & 0.5167 & 0.9748 & 1.6376 & 1.6620 & 7.5578 & 0.1757 & 8.1187 & 31.41 & 0.9716 \\
        MDPDE($\alpha$=0.2) & \textbf{0.0123} & 0.0701 & \textbf{0.0049} & \textbf{0.0103} & \textbf{0.0496} & \textbf{0.0337} & 0.0225 & \textbf{0.0977} & \textbf{0.0071} & 0.0765 & 0.3353 & \textbf{0.0347} \\
        MDPDE($\alpha$=0.3) & 0.0168 & 0.0894 & 0.0171 & 0.0125 & 0.0746 & 0.0505 & \textbf{0.0222} & 0.1036 & 0.0186 & \textbf{0.0247} & \textbf{0.1083} & 0.0529 \\
        MDPDE($\alpha$=0.5) & 0.0228 & 0.1145 & 0.0475 & 0.0198 & 0.1028 & 0.1097 & 0.0254 & 0.1201 & 0.0485 & 0.0261 & 0.1201 & 0.1116 \\
        MDPDE($\alpha$=0.7) & 0.0272 & \textbf{}0.1313 & 0.0774 & 0.0243 & 0.1214 & 0.1704 & 0.0290 & 0.1348 & 0.0783 & 0.0285 & 0.1315 & 0.1721 \\
        MDPDE($\alpha$=1) & 0.0335 & 0.1537 & 0.1156 & 0.0291 & 0.1411 & 0.2499 & 0.0342 & 0.1539 & 0.1162 & 0.0318 & 0.1460 & 0.2515 \\ \hline\hline
        \multicolumn{13}{|l|}{\textbf{\underline{Empirical MSE}}} \\ 
        \multicolumn{13}{|l|}{} \\
        OLS & 178.70 & 1400 & ~ & 85.89 & 196.85 & ~ & 3136 & 67920 & ~ & 1866 & 29045 & ~ \\
        MOM & 49.79 & 612.19 & ~ & 6.8E+06 & 1737.08 & ~ & 12.67 & 136261 & ~ & 55.56 & 5.7E+06 & ~ \\
        KPS(w=0.5) & 0.0712 & 0.5051 & ~ & 0.1184 & 0.8523 & ~ & 0.0700 & 0.4974 & ~ & 50.62 & 1037 & ~ \\
        KPS(w=1) & 0.0757 & 0.5376 & ~ & 0.0864 & 0.6142 & ~ & 0.0740 & 0.5260 & ~ & 0.0822 & 0.5970 & ~ \\
        KPS(w=1.5) & 0.0941 & 0.7241 & ~ & 0.0971 & 0.7193 & ~ & 0.0943 & 0.7115 & ~ & 0.0935 & 0.6786 & ~ \\
        M-Est (Huber) & 0.3977 & 0.5981 & ~ & 2.4094 & 2.2897 & ~ & 10.40 & 79.48 & ~ & 34.58 & 81.00 & ~ \\
        M-Est (Tukey) & 0.0579 & 0.4184 & ~ & 0.0699 & 0.5068 & ~ & 0.0569 & 0.4100 & ~ & 3.5212 & 8.6970 & ~ \\ \hline
        MDPDE($\alpha$=0.05) & 18.62 & 96.30 & 32.87 & 40.39 & 23.60 & 60.56 & 1150 & 20448 & 11.73 & 1175 & 15229 & 19.55 \\
        MDPDE($\alpha$=0.1) & 0.3157 & 0.6118 & 1.0584 & 26.06 & 266.19 & 13.62 & 58.44 & 1237 & 0.5567 & 296.85 & 4527 & 4.2317 \\
        MDPDE($\alpha$=0.2) & \textbf{0.0612} & \textbf{0.4511} & \textbf{0.0148} & 0.0818 & 0.5559 & 0.0578 & \textbf{0.0582} & \textbf{0.4177} & \textbf{0.0151} & 2.5361 & 54.08 & 0.0307 \\
        MDPDE($\alpha$=0.3) & 0.0623 & 0.4568 & 0.0158 & \textbf{0.0743} & \textbf{0.5525} & \textbf{0.0221} & 0.0601 & 0.4336 & 0.0160 & \textbf{0.0707} & \textbf{0.5117} & \textbf{0.0226} \\
        MDPDE($\alpha$=0.5) & 0.0666 & 0.4946 & 0.0209 & 0.0772 & 0.5721 & 0.0355 & 0.0648 & 0.4791 & 0.0210 & 0.0748 & 0.5493 & 0.0362 \\
        MDPDE($\alpha$=0.7) & 0.0716 & 0.5481 & 0.0280 & 0.0808 & 0.6101 & 0.0570 & 0.0701 & 0.5346 & 0.0282 & 0.0792 & 0.5952 & 0.0578 \\
        MDPDE($\alpha$=1) & 0.0793 & 0.6320 & 0.0402 & 0.0863 & 0.6760 & 0.0965 & 0.0784 & 0.6238 & 0.0404 & 0.0848 & 0.6623 & 0.0976 \\ \hline
    \end{tabular}}
    \label{tab:S1BMAppendix}
\end{table}
\end{landscape}

\begin{landscape}
\begin{table}
    \centering
    \caption{Empirical absolute bias and empirical MSE obtained by different estimators in simulations with Setup 2 and different amounts of contamination (the minimum values obtained for each parameter are highlighted in bold font)}
 \resizebox{1.3\textheight}{!}
 {   \begin{tabular}{|l|rrr|rrr|rrr|rrr|rrr|}
        \hline
        Outlier Direction & \multicolumn{6}{c|}{Contamination only in response} & \multicolumn{6}{c|}{Contamination in both response and covariate} \\
        Outlier proportion & \multicolumn{3}{c|}{20\%} & \multicolumn{3}{c|}{30\%} & \multicolumn{3}{c|}{20\%} & \multicolumn{3}{c|}{30\%} \\ \hline
        ~ & $\beta_1$ & $\beta_2$  & $\sigma$ & $\beta_1$ & $\beta_2$ & $\sigma$ & $\beta_1$ & $\beta_2$  & $\sigma$ & $\beta_1$ & $\beta_2$  & $\sigma$ \\ \hline \hline
        \multicolumn{13}{|l|}{\textbf{\underline{Empirical Bias}}} \\
        \multicolumn{13}{|l|}{} \\
        OLS & 10.54 & 0.2015 & ~ & 15.72 & 0.2438 & ~ & 28.55 & 7.6917 & ~ & 41.55 & 11.09 & ~ \\
        MOM & 8.3100 & 0.4567 & ~ & 15.99 & 0.7324 & ~ & 28.70 & 6.6655 & ~ & 47.51 & 11.22 & ~ \\
        KPS(w=0.5) & 0.0666 & 0.0289 & ~ & 0.2759 & 0.1323 & ~ & 0.0822 & 0.0411 & ~ & 0.1605 & 0.1090 & ~ \\
        KPS(w=1) & \textbf{0.0003} & 0.0016 & ~ & 0.0134 & 0.0109 & ~ & 0.0180 & 0.0094 & ~ & 0.0575 & 0.0309 & ~ \\
        KPS(w=1.5) & 0.0186 & 0.0042 & ~ & 0.0198 & 0.0014 & ~ & 0.0050 & \textbf{0.0011} & ~ & 0.0125 & 0.0117 & ~ \\
        M-Est (Huber) & 0.8973 & 0.0919 & ~ & 2.2299 & 0.1811 & ~ & 4.8576 & 1.0918 & ~ & 27.05 & 5.8772 & ~ \\
        M-Est (Tukey) & 0.0685 & 0.0031 & ~ & 0.2401 & 0.0128 & ~ & 3.5690 & 0.9716 & ~ & 26.05 & 5.7811 & ~ \\ \hline
        MDPDE($\alpha$=0.05) & 2.3100 & 0.1031 & 4.4833 & 7.5019 & 0.0088 & 9.4953 & 10.28 & 2.4025 & 6.1701 & 28.37 & 7.3262 & 10.83 \\
        MDPDE($\alpha$=0.1) & 0.0403 & 0.0292 & 0.0835 & 0.2805 & 0.0588 & 0.6703 & 0.0607 & 0.0487 & 0.1362 & 2.1178 & 0.4127 & 1.3049 \\
        MDPDE($\alpha$=0.2) & 0.0118 & 0.0059 & \textbf{0.0054} & 0.0125 & 0.0108 & \textbf{0.0601} & 0.0422 & 0.0199 & \textbf{0.0225} & 0.0778 & 0.0410 & \textbf{0.0951} \\
        MDPDE($\alpha$=0.3) & 0.0033 & 0.0022 & 0.0276 & \textbf{0.0053} & 0.0035 & 0.1037 & 0.0251 & 0.0120 & 0.0368 & 0.0418 & 0.0243 & 0.1232 \\
        MDPDE($\alpha$=0.5) & \textbf{0.0014} & \textbf{0.0003} & 0.0841 & 0.0144 & \textbf{1.3E-05} & 0.2200 & 0.0131 & 0.0067 & 0.0873 & 0.0165 & 0.0131 & 0.2280 \\
        MDPDE($\alpha$=0.7) & 0.0040 & 0.0007 & 0.1416 & 0.0192 & 0.0019 & 0.3398 & 0.0075 & 0.0043 & 0.1418 & 0.0077 & 0.0094 & 0.3419 \\
        MDPDE($\alpha$=1) & 0.0073 & 0.0019 & 0.2163 & 0.0221 & 0.0030 & 0.4973 & \textbf{0.0027} & \textbf{0.0024} & 0.2136 & \textbf{0.0010} & \textbf{0.0067} & 0.4942 \\ \hline \hline
        \multicolumn{13}{|l|}{\textbf{\underline{Empirical MSE}}} \\ 
        \multicolumn{13}{|l|}{} \\
        OLS & 127.86 & 1.2657 & ~ & 268.46 & 1.3927 & ~ & 1026.58 & 104.58 & ~ & 1993.45 & 181.41 & ~ \\
        MOM & 154.28 & 3.5655 & ~ & 383.09 & 6.8593 & ~ & 1324.69 & 119.51 & ~ & 2784.57 & 227.17 & ~ \\
        KPS(w=0.5) & 0.4331 & 0.0375 & ~ & 0.8790 & 0.1089 & ~ & 0.4197 & 0.0382 & ~ & 0.5939 & 0.0709 & ~ \\
        KPS(w=1) & 0.4821 & 0.0392 & ~ & 0.5435 & 0.0437 & ~ & 0.4766 & 0.0391 & ~ & 0.5443 & 0.0469 & ~ \\
        KPS(w=1.5) & 0.6142 & 0.0497 & ~ & 0.6296 & 0.0495 & ~ & 0.6076 & 0.0495 & ~ & 0.6297 & 0.0518 & ~ \\
        M-Est (Huber) & 1.7325 & 0.1060 & ~ & 12.83 & 0.4652 & ~ & 91.42 & 6.7782 & ~ & 912.82 & 45.14 & ~ \\
        M-Est (Tukey) & 2.9843 & 0.0343 & ~ & 9.4442 & 0.2652 & ~ & 104.27 & 7.8639 & ~ & 892.73 & 44.69 & ~ \\ \hline
        MDPDE($\alpha$=0.05) & 20.97 & 0.2398 & 65.12 & 114.42 & 0.5666 & 168.60 & 290.24 & 25.46 & 79.64 & 1291.37 & 113.64 & 161.66 \\
        MDPDE($\alpha$=0.1) & 0.4586 & 0.0479 & 0.8740 & 4.6137 & 0.0825 & 10.33 & 0.8948 & 0.0521 & 0.8054 & 81.54 & 6.0233 & 17.80 \\
        MDPDE($\alpha$=0.2) & \textbf{0.4023} & \textbf{0.0346} & \textbf{0.0593} & 0.4903 & 0.0429 & \textbf{0.0828} & \textbf{0.4053} & \textbf{0.0357} & \textbf{0.0641} & 0.5233 & 0.0516 & \textbf{0.0985} \\
        MDPDE($\alpha$=0.3) & 0.4121 & 0.0350 & 0.0638 & \textbf{0.4872} & \textbf{0.0410} & 0.0910 & 0.4154 & 0.0359 & 0.0663 & \textbf{0.5150} & 0.0471 & 0.0997 \\
        MDPDE($\alpha$=0.5) & 0.4464 & 0.0382 & 0.0828 & 0.5156 & 0.0433 & 0.1449 & 0.4488 & 0.0386 & 0.0839 & 0.5284 & \textbf{0.0462} & 0.1489 \\
        MDPDE($\alpha$=0.7) & 0.4839 & 0.0416 & 0.1088 & 0.5435 & 0.0456 & 0.2289 & 0.4866 & 0.0420 & 0.1090 & 0.5559 & 0.0482 & 0.2300 \\
        MDPDE($\alpha$=1) & 0.5399 & 0.0465 & 0.1521 & 0.5835 & 0.0490 & 0.3834 & 0.5444 & 0.0471 & 0.1517 & 0.6008 & 0.0522 & 0.3794 \\ \hline
    \end{tabular}}
    \label{tab:S2BMAppendix}
\end{table}
\end{landscape}

\begin{table}[H]
	\centering
	\caption{Average prediction errors obtained by different estimators and different amounts of contamination (the minimum values obtained for each parameter are highlighted in bold font)}
	\resizebox{\textwidth}{!}
	{\begin{tabular}{|l||rr|rr||rr|rr|}
			\hline
			Setup & \multicolumn{4}{c||}{Setup 1} & \multicolumn{4}{c|}{Setup 2} \\ \hline
			Outlier direction & \multicolumn{2}{c|}{response} & \multicolumn{2}{c||}{response + covariate} & \multicolumn{2}{c|}{response} & \multicolumn{2}{c|}{response + covariate} \\ \hline
			Outlier Prop & 20\% & 30\% & 20\% & 30\% & 20\% & 30\% & 20\% & 30\% \\ \hline \hline
			OLS & 15.62 & 32.91 & 8.233 & 11.59 & 79.51 & 170.43 & 100.69 & 172.67 \\
			KPS(w=0.5) & 1.094 & 13.02 & 0.976 & 2.009 & 3.802 & 3.997 & 3.810 & 3.901 \\
			KPS(w=1) & 1.041 & 2.745 & 1.080 & 2.258 & 3.827 & 3.800 & 3.829 & 3.809 \\
			KPS(w=1.5) & 1.208 & 1.642 & 1.010 & 1.214 & 3.889 & 3.838 & 3.891 & 3.843 \\
			Huber & 1.221 & 2.408 & 2.784 & 9.485 & 5.008 & 11.15 & 14.18 & 127.95 \\
			Tukey & 0.947 & 0.941 & 0.947 & 1.829 & 5.930 & 5.548 & 14.81 & 124.12 \\ \hline
			MDPDE($\alpha$=0.05) & 12.93 & 30.06 & 9.498 & 13.37 & 21.05 & 82.61 & 39.45 & 120.84 \\
			MDPDE($\alpha$=0.1) & 10.37 & 27.32 & 8.156 & 11.74 & 3.965 & 7.798 & 4.006 & 14.10 \\
			MDPDE($\alpha$=0.2) & 1.500 & 21.77 & 3.735 & 9.261 & \textbf{3.788} & 3.785 & \textbf{3.797} & 3.810 \\
			MDPDE($\alpha$=0.3) & \textbf{0.953} & 4.785 & \textbf{0.950} & 5.183 & 3.794 & \textbf{3.782} & 3.799 & \textbf{3.797} \\
			MDPDE($\alpha$=0.5) & 0.956 & \textbf{0.950} & 0.954 & \textbf{0.946} & 3.811 & 3.795 & 3.815 & 3.804 \\
			MDPDE($\alpha$=0.7) & 0.962 & 0.951 & 0.961 & 0.949 & 3.829 & 3.808 & 3.833 & 3.817 \\
			MDPDE($\alpha$=1) & 0.972 & 0.956 & 0.971 & 0.955 & 3.856 & 3.827 & 3.861 & 3.837 \\ \hline
	\end{tabular}}
	\label{tab:S12APEAppendix}
\end{table}

\section{Additional Real data examples}
Here, we illustrate the performance of the proposed MDPDE as compared to its competitors for data examples where there is no outlier or only some mild outliers.

\subsection{Tyrosinase enzyme data}
In this example, we illustrate the performance of the proposed MDPDE-based procedure as compared to its competitors for the cases where there is no outlier or only some mild outliers.  We consider data from Marasovic et al.~\cite{marasovic2017robust} which contains observations on reaction rates and concentrations of the tyrosinase ubiquitous enzyme extracted from two species of mushroom, viz \textit{Agaricus bisporus} (Ab) and \textit{Pleurotus ostreatus} (Po). The reaction rates of Po are considered in $10^{-4}$. The first set of observations on Ab contains no outlier, while there is a mild outlier in the second set on Po. We fitted the MM model (Eq.~(28) of the main paper) to these two datasets using the proposed MDPDE and different competitors. Table \ref{tab:Tyrosianase-enzyme-data} contains the parameter estimates, their standard errors, and $20\%$ TAPEs; the model fits are shown in Figure \ref{fig:ubiquitous} along with the scatter plots of the data.

\begin{table}[!h]
    \caption{Parameter estimates of the MM model fitted to the Tyrosinase enzyme data, along with their asymptotic standard errors (in parentheses) and $20\%$ TAPE. [For MDPDE, $\widehat{\alpha} =0.35$ and $0.53$ are the optimum choice for Ab and Po data, respectively, obtained by IWJ method]}
    \centering
    \resizebox{0.9\textwidth}{!}
    {\begin{tabular}{|c|rrrrrrrr|}
        \multicolumn{9}{c}{\textbf{\textit{Agaricus bisporus} (Ab)}} \\ \hline
        \multicolumn{9}{|c|}{MDPDE}\\
        $\alpha$ & 0 & 0.05 & 0.1 & 0.2 & 0.35 & 0.5 & 0.7 & 1 \\ \hline
        $\beta_1$ & 0.377 & 0.377 & 0.376 & 0.376 & 0.378 & 0.378 & 0.378 & 0.376 \\
        ~ & (0.006) & (0.006) & (0.006) & (0.006) & (0.003) & (0.002) & (0.001) & (0.003) \\
        $\beta_2$ & 545 & 545 & 545 & 545 & 577 & 588 & 588 & 578 \\
        ~ & (36.59) & (36.51) & (36.32) & (35.53) & (22.45) & (9.87) & (8.89) & (17.38) \\
        $\sigma$ & 0.008 & 0.008 & 0.008 & 0.008 & 0.005 & 0.002 & 0.002 & 0.003 \\
        ~ & (0.000) & (0.000) & (0.000) & (0.000) & (0.000) & (0.000) & (0.000) & (0.000) \\
        TAPE  & 2.88E-05 & 2.82E-05 & 2.72E-05 & 2.58E-05 & 1.26E-05 & 1.07E-05 & 1.07E-05 & 1.28E-05 \\ \hline \hline
        ~ & \multicolumn{3}{c}{KPS} & M-est & M-est & ~ & ~ & ~ \\
        ~ & w=0.5 & w=1 & w=1.5 & Tukey & Huber & ~ & ~ & ~ \\ \hline
        $\beta_1$ & 0.367 & 0.380 & 0.380 & 0.380 & 0.380 & ~ & ~ & ~ \\
        ~ & (0.008) & (0.001) & (0.001) & (0.000) & (0.000) & ~ & ~ & ~ \\
        $\beta_2$ & 519 & 599 & 599 & 599 & 598 & ~ & ~ & ~ \\
        ~ & (51.29) & (3.87) & (4.81) & (2.14) & (3.26) & ~ & ~ & ~ \\
        TAPE & 3.49E-05 & 8.14E-06 & 8.16E-06 & 8.03E-06 & 8.07E-06 & ~ & ~ & ~ \\ \hline
        \multicolumn{9}{c}{}\\
        \multicolumn{9}{c}{\textbf{\textit{Pleurotus ostreatus} (Po)}} \\ \hline
        \multicolumn{9}{|c|}{MDPDE} \\
        $\alpha$ & 0 & 0.05 & 0.1 & 0.2 & 0.3 & 0.53 & 0.7 & 1 \\ \hline
        $\beta_1$ & 94.61 & 94.44 & 94.18 & 93.47 & 92.04 & 89.13 & 89.27 & 89.29 \\
        ~ & (8.683) & (8.596) & (8.480) & (8.018) & (6.642) & (0.147) & (0.495) & (0.652) \\
        $\beta_2$ & 5693 & 5699 & 5699 & 5699 & 5699 & 5788 & 5809 & 5811 \\
        ~ & (1095) & (1086) & (1075) & (1024) & (861) & (20) & (67) & (88) \\
        $\sigma$ & 2.277 & 2.250 & 2.210 & 2.059 & 1.672 & 0.034 & 0.111 & 0.135 \\
        ~ & (2.319) & (2.273) & (2.212) & (1.976) & (1.352) & (0.001) & (0.007) & (0.011) \\
        TAPE & 1.366 & 1.340 & 1.318 & 1.303 & 1.439 & 0.731 & 0.732 & 0.732 \\ \hline \hline
        ~ & \multicolumn{3}{c}{KPS} & M-est & M-est & ~ & ~ & ~ \\
        ~ & w=0.5 & w=1 & w=1.5 & Tukey & Huber & ~ & ~ & ~ \\ \hline
        $\beta_1$ & 95.65 & 90.01 & 90.03 & 102.41 & 99.55 & ~ & ~ & ~ \\
        ~ & (8.015) & (3.912) & (4.708) & (3.527) & (5.892) & ~ & ~ & ~ \\
        $\beta_2$ & 6121 & 5876 & 5898 & 6938 & 6538 & ~ & ~ & ~ \\
        ~ & (1044) & (529) & (637) & (464) & (768) & ~ & ~ & ~ \\
        TAPE & 1.059 & 0.738 & 0.739 & 0.785 & 0.823 & ~ & ~ & ~ \\ \hline
    \end{tabular}}
    \label{tab:Tyrosianase-enzyme-data}
\end{table}

\begin{figure}[H]
     \centering
     \begin{subfigure}[b]{0.48\textwidth}
         \centering
         \includegraphics[width=\textwidth]{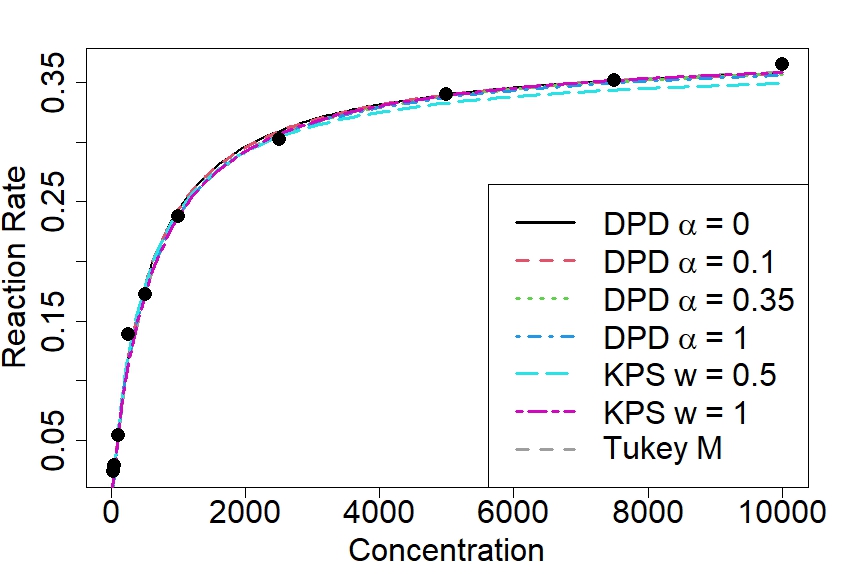}
    \caption{\textit{Agaricus bisporus} (Ab)}
    \label{fig:Ab-plot}
     \end{subfigure}
     \hfill
     \begin{subfigure}[b]{0.48\textwidth}
         \centering
         \includegraphics[width=\textwidth]{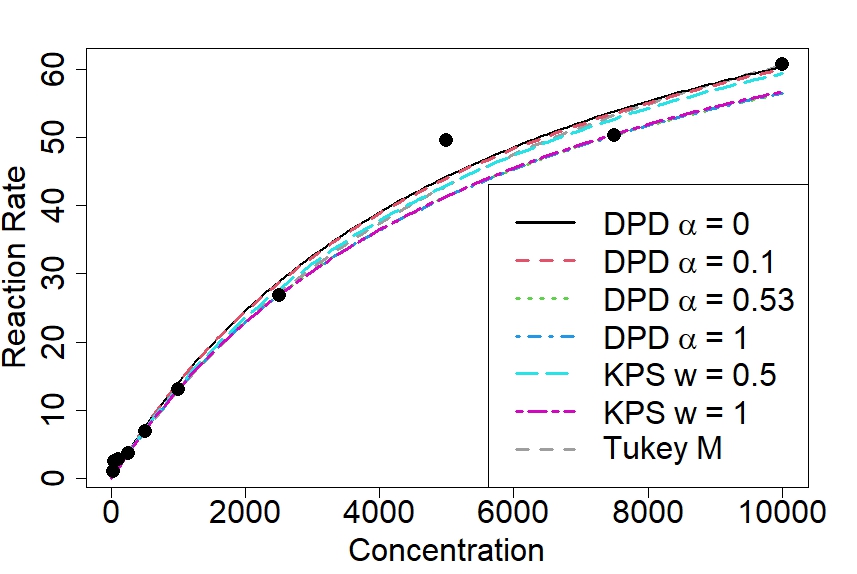}
    \caption{\textit{Pleurotus ostreatus} (Po)}
    \label{fig:Po-plot}
     \end{subfigure}
     \caption{Tyrosinase enzyme data and the plot of the fitted MM models using different methods}
     \label{fig:ubiquitous}
\end{figure}

\bigskip
The IWJ method was used for the selection of the optimum tuning parameter value for these datasets. For \textit{Agaricus bisporus} (Ab), it took 5 iterations to converge to the optimum value $\widehat{\alpha}=0.35$, while for \textit{Pleurotus ostreatus} (Po) data the method required only one iteration to select the same optimum value $\widehat{\alpha}=0.53$ for all pilot choices of $\alpha$.

The scatter plot of Ab data is provided in the Figure \ref{fig:Ab-plot} along with the MDPDE fits (for various $\alpha\geq 0$) and the robust competitors. The figure shows that
the proposed MDPDE (for all $\alpha\geq 0$), as well as all robust competitors, provide good fit. However, Figure \ref{fig:Po-plot}, corresponding to the Po dataset, shows that the MDPDEs with $\alpha\geq 0.53$ and KPS with $w=1$ provide good robust fits with significantly lower TAPE than other robust competitors.

We further tested the validity of the MM model for these data using the Wald-type tests, which resulted a p-value of $<0.001$ for all values of $\alpha \geq 0$ (including the classical Wald test). It shows that the assumed MM model is valid, and all tests give the same inference in the presence of no or mild contamination.

\subsection{\btxt{A putative enzyme data}}
\btxt{This dataset is taken from Table 3.1 of Maragoni \cite{marangoni2003enzyme}, containing the reaction velocities of an enzyme kinetic reaction at 15 different concentration levels of a putative enzyme, recorded over 5 different replications. So, there are 75 data points in total, making it a large sample in the context of enzyme kinetics. We have fitted the MM model to this dataset using the proposed MDPDE, as well as its existing competitors. The parameter estimates are given in Table \ref{tab:putative-data} along with their asymptotic standard errors and $20\%$ TAPEs for the fitted models, while the model fits are illustrated in Figure \ref{fig:putative-plot}. The optimum value of $\alpha$ in the MDPDEs, for this dataset, was observed to converge to $0.49$ (starting with the MDPDE with $\alpha > 0.4$), using the IWJ method. One can clearly observe that the proposed MDPDE at optimum $\alpha = 0.49$ (in fact, for all $\alpha\geq 0.49$) yields robust estimates and improved model fits, ignoring the effects of the outliers.}

\btxt{We also fitted separate MM models to the individual replications using all other robust methods under discussion. However, no significant variations in the results were observed across replications, and hence they are omitted here for brevity; only the results from the combined sample of size 75 are reported (as an illustration of the MDPDEs under larger sample sizes).}

% \btxt{We also performed the Wald-type tests to test the validity of the MM model for this dataset. Interestingly, all these tests (for all $\alpha \geq 0$) reject the null hypothesis of insignificance of the MM model with a p-value $\leq 0.001$ in all cases, indicating that the outliers have no effect on the validity of MM model for this example, and hence all values of $\alpha$ lead to the same inference as the classical Wald test.}

\begin{table}[H]
    \color{blue}
    \centering
    \caption{\btxt{Parameter estimates of the MM model fitted to the putative enzyme data, along with their asymptotic standard errors (in parentheses) and $20\%$ TAPE. [For MDPDE, $\widehat{\alpha}$ near $0.49$ is the optimum choice, obtained by IWJ method]}}
    \begin{tabular}{|c|rrrrrrrr|}
    % \color{blue}
    \hline
        \multicolumn{9}{|c|}{MDPDE}\\
        $\alpha$ & 0 & 0.05 & 0.1 & 0.2 & 0.3 & 0.49 & 0.7 & 1 \\ \hline
        $\beta_1$ & 81.098 & 80.569 & 80.053 & 79.145 & 78.575 & 78.126 & 77.858 & 77.313 \\
        & (1.985) & (2.622) & (2.552) & (2.419) & (2.331) & (2.298) & (2.342) & (2.428) \\
        $\beta_2$ & 38.617 & 38.350 & 38.098 & 37.693 & 37.525 & 37.529 & 37.506 & 37.092 \\
        & (2.413) & (3.193) & (3.114) & (2.964) & (2.868) & (2.845) & (2.907) & (3.013) \\
        $\sigma$ & 3.831 & 5.074 & 4.938 & 4.643 & 4.399 & 4.147 & 4.007 & 3.876 \\
        & (2.397) & (4.219) & (4.032) & (3.669) & (3.417) & (3.273) & (3.291) & (3.335) \\
        TAPE & 8.836 & 8.659 & 8.411 & 8.039 & 7.900 & 7.831 & 7.787 & 7.765 \\ \hline \hline
        ~ & \multicolumn{3}{c}{KPS} & M-est & M-est & ~ & ~ & ~ \\ 
        ~ & w = 0.5 & w = 1 & w = 1.5 & Huber & Tukey & ~ & ~ & ~ \\ \hline
        $\beta_1$ & 79.538 & 79.229 & 79.451 & 79.416 & 78.446 & ~ & ~ & ~ \\
        & (2.118) & (2.504) & (3.100) & (2.771) & (2.186) & ~ & ~ & ~ \\
        $\beta_2$ & 38.270 & 38.488 & 38.736 & 38.040 & 37.320 & ~ & ~ & ~ \\
        & (2.609) & (3.108) & (3.854) & (3.405) & (2.685) & ~ & ~ & ~ \\
        TAPE & 8.046 & 7.922 & 7.948 & 8.055 & 7.906 & ~ & ~ & ~ \\ \hline
    \end{tabular}
    \label{tab:putative-data}
\end{table}

\begin{figure}[!h]
    \centering
    \includegraphics[width=0.55\linewidth]{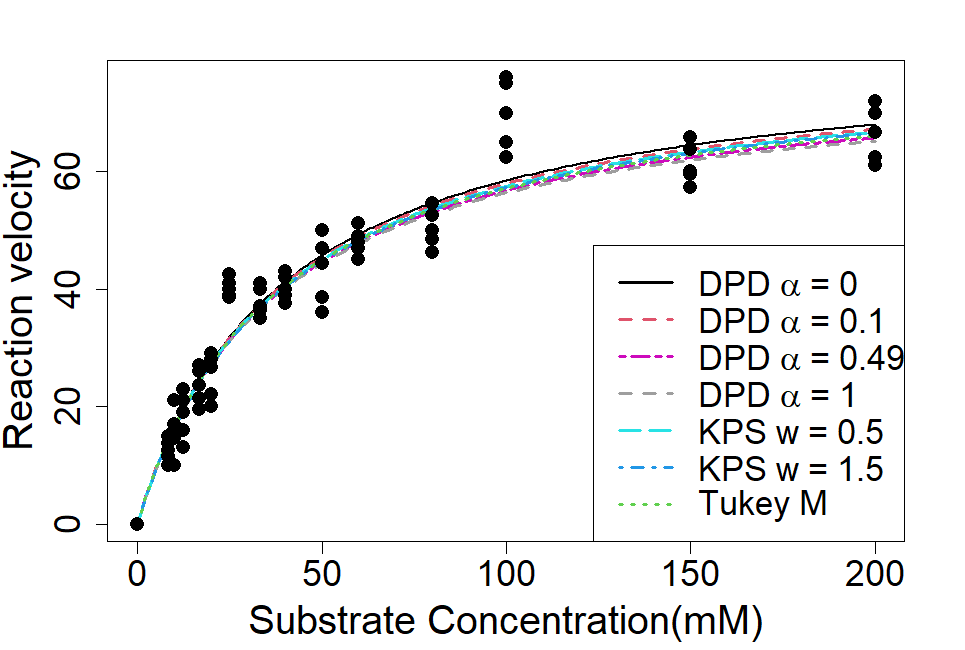}
    \caption{\btxt{The putative enzyme data and the plot of the fitted MM models using different methods}}
    \label{fig:putative-plot}
\end{figure}

\bibliographystyle{acm}
\bibliography{Reference.bib}

\end{document}